\newtheorem{theorem}{Theorem}[section]
\newtheorem{lemma}{Lemma}[section]
\newtheorem{claim}{Claim}[section]
\newtheorem{proposition}{Proposition}[section]
\newcommand{\qed}{\hfill $\Box$ \bigbreak}
\newenvironment{proof}{\noindent {\bf Proof.}}{\qed}
\newcommand{\cC}{{\cal C}}
\newcommand{\cN}{{\cal N}}
\newcommand{\cP}{{\cal P}}
\newcommand{\cH}{{\cal H}}
\newcommand{\cS}{{\cal S}}
\newcommand{\cA}{{\cal A}}
\newcommand{\cF}{{\cal F}}
\newcommand{\cG}{{\cal G}}
\newcommand{\cT}{{\cal T}}
\newcommand{\cW}{{\cal W}}
\newcommand{\cV}{{\cal{V}}}
\newcommand{\cB}{{\cal{B}}}
\newcommand{\remove}[1]{}
\begin{document}

\baselineskip  0.18in %si on veut compact
\parskip     0.0in  %pour compacter
\parindent   0.3in %pour voir les paragraphes

\title{{\bf Impact of Knowledge on Election Time\\ in Anonymous Networks}}
\date{}
\newcommand{\inst}[1]{$^{#1}$}

\author{
Yoann Dieudonn\'{e}\thanks{
MIS, Universit\'{e} de Picardie  Jules  Verne Amiens,  France. E-mail:  {\tt yoann.dieudonne@u-picardie.fr}. Partially supported by the research project TOREDY funded by the Picardy Regional Council and 
the European Regional Development.}
 \and Andrzej Pelc\thanks{
 D\'epartement d'informatique, Universit\'e du Qu\'ebec en Outaouais, Gatineau,
Qu\'ebec J8X 3X7, Canada. {\tt pelc@uqo.ca}. Partially supported by NSERC discovery grant 8136 -- 2013, 
and by the Research Chair in Distributed Computing at the
Universit\'e du Qu\'ebec en Outaouais.}
}

\date{ }
\maketitle

\thispagestyle{empty}

\begin{abstract}
Leader election is one of the basic problems in distributed computing. This is a symmetry breaking problem:  all nodes of a network must agree on a single node, called the leader.
If the nodes of the network have distinct labels, then such an agreement means that all nodes have to output the label of the elected leader.
For anonymous networks,
the task of leader election is formulated as follows: every node $v$ of the network must output a simple path, which is coded as a sequence of port numbers, such that
all these paths 
end at a common node, the leader. In this paper, we study deterministic leader election in arbitrary anonymous networks. 

It is well known that leader election is impossible in some networks, regardless of the allocated amount of time, even if nodes know the map of the network. This is due
to possible symmetries in it. However, even in networks in which it is possible to elect a leader knowing the map, the task may be still impossible without any knowledge,
regardless of the allocated time. On the other hand, for any network in which leader election is possible knowing the map, there is a minimum time, called
the {\em election index},
in which this can be done. Informally,  the election index of a network is the minimum depth at which views of all nodes are distinct.
Our aim is to establish tradeoffs between the allocated time $\tau$ and the amount of information that has to be given {\em a priori} to the nodes to enable leader election in time $\tau$ in all networks for which leader election in this time is at all possible.
Following the framework of {\em algorithms
with advice}, this information (a single binary string) is provided to all nodes at the start by an oracle knowing the entire network. The length of this string is called the {\em size of advice}.  
 For a given time $\tau$ allocated to leader election, we give upper and lower bounds on the minimum size
of advice sufficient to perform leader election in time $\tau$. 

We focus on the two sides of the time spectrum. For the smallest possible time, which is the election index of the network, we show that the minimum size of advice is linear in the size $n$ of the
network, up to polylogarithmic factors. On the other hand, we consider large values of time: larger than the diameter $D$ by a summand, respectively, linear, polynomial, and exponential in the election index; for these values,
we prove tight bounds on the minimum size of advice, up to multiplicative constants. We also show that
constant advice is not sufficient for leader election in all graphs, regardless of the allocated time.

\vspace{2ex}

\noindent {\bf Keywords:} leader election, anonymous network, advice, deterministic distributed algorithm, time.

\vspace{1ex}

\setcounter{page}{0}

\end{abstract}

\pagebreak

%%%%%%%%%%%%%%%%%%%%%%%%%%%%%%%%%%%%%%%%%%%%%%%%%%%%%%%%%%%
\section{Introduction}
%%%%%%%%%%%%%%%%%%%%%%%%%%%%%%%%%%%%%%%%%%%%%%%%%%%%%%%%%%%

{\bf Background.} 
Leader election is one of the basic problems in distributed computing \cite{Ly}. 
This is a symmetry breaking problem:  all nodes of a network must agree on a single node, called the leader.
It was first formulated in \cite{LL} in the study of local area token ring networks, where, at all times, exactly one node (the owner of a circulating token) is allowed to initiate
communication. When the token is accidentally lost, a leader must be elected as the initial owner of the token.

If the nodes of the network have distinct labels, then agreeing on a single node means that all nodes output the label of the elected leader. However, in many
applications, even if nodes have distinct identities, they may decide to refrain from revealing them, e.g., for privacy or security reasons. Hence it is important to design leader election algorithms that do not rely on knowing distinct labels of nodes, and that can work in anonymous networks as well. 
Under this scenario, agreeing on a single leader means
that every node has to output a simple path (coded as a sequence of port numbers) to a common node.

\noindent
{\bf Model and Problem Description.} The network is modeled as a simple undirected  connected $n$-node graph with diameter $D$, for $n \geq 3$.
Nodes do not have any identifiers.
On the other hand, we assume that, at each node $v$,
each edge incident to $v$ has a distinct {\em port number} from 
$\{0,\dots,d-1\}$, where $d$ is the degree of $v$. Hence, each edge has two corresponding port numbers, one at each of its endpoints. 
Port numbering is {\em local} to each node, i.e., there is no relation between
port numbers at  the two endpoints of an edge. Initially, each node knows only its own degree.
The task of leader election is formulated as follows. Every node $v$ must output a sequence $P(v)=(p_1,q_1,\dots,p_k,q_k)$ of nonnegative integers.
For each node $v$, let $P^*(v)$ be the path starting at $v$, such that port numbers $p_i$ and $q_i$ correspond to the $i$-th edge of $P^*(v)$, in the order from $v$ to the other end of this path.
All paths $P^*(v)$ must be simple paths in the graph (i.e., paths without repeated nodes) that end at a common node, called the leader. In this paper, we consider deterministic leader election algorithms.
 
In the absence of port numbers, there would be no way to identify the elected leader by non-leaders, as all
ports, and hence all neighbors, would be indistinguishable to a node.
Security and privacy reasons for not revealing node identifiers do not apply in the case of port numbers. 

The central notion in the study of anonymous networks is that of the view of a node \cite{YK3}. Let $G$ be a graph and let  $v$ be a node of $G$.  We first define, for any $l \geq 0$,  the {\em truncated view}
$\cV^l(v)$ at depth $l$, by induction on $l$. $\cV^0(v)$ is a tree consisting of a single node $x_0$. 
If $\cV^l(u)$ is defined for any node $u$ in the graph, then $\cV^{l+1}(v)$ is the port-labeled tree
rooted at $x_0$ and defined as follows.
For every node $v_i$, $i=1,\dots ,k$, adjacent to $v$, 
there is a child $x_i$ of $x_0$ in $\cV^{l+1}(v)$ such that the port number at $v$ corresponding to edge $\{v,v_i\} $ is the same as the port number 
at $x_0$ corresponding to edge $\{x_0,x_i\}$,
and the port number at $v_i$ corresponding to edge $\{v,v_i\} $ is the same as the port number at $x_i$ corresponding to edge $\{x_0,x_i\}$. 
Now node $x_i$, for $i=1,\dots ,k$, becomes 
the root of the truncated view $\cV^l(v_i)$.   

The {\em view} from $v$ is the infinite rooted tree $\cV(v)$ with labeled ports, such that $\cV^l(v)$ is its truncation to level $l$, for each $l$.

We use the extensively studied $\cal{LOCAL}$ communication model \cite{Pe}. In this model, communication proceeds in synchronous
rounds and all nodes start simultaneously. In each round, each node
can exchange arbitrary messages with all of its neighbors and perform arbitrary local computations. The information that $v$ gets about the graph in $r$ rounds
is precisely the truncated view $\cV^r(v)$, together with degrees of leaves of this tree. Denote by $\cB^r(v)$ the truncated view $\cV^r(v)$ whose leaves are labeled by their degrees in the graph, and call it the {\em augmented truncated view} at depth $r$.
If no additional knowledge is provided {\em a priori} to the nodes, the decisions of a node $v$ in round $r$ in any deterministic algorithm are a function of $\cB^r(v)$.
Note that all augmented truncated views can be canonically coded as binary strings, and hence the set of all augmented truncated views can be ordered lexicographically.
The {\em time} of leader election is the minimum number of rounds sufficient to complete it by all nodes. 
It is well known that the synchronous process of the $\cal{LOCAL}$  model can be simulated in an asynchronous network using time-stamps.

Unlike in labeled networks, if the network is anonymous then leader election is sometimes impossible, regardless of the allocated time, even if the network is a tree
and its topology is known. This is due to symmetries, and the simplest example is the two-node graph. It follows from \cite{YK3} that if nodes know the map of the graph
(i.e., its isomorphic copy with all port numbers indicated)
then leader election is possible if and only if views of all nodes are distinct. We will call such networks {\em feasible} and restrict attention to them. However, even in the class of 
feasible networks, leader election is impossible without any {\em a priori} knowledge about the network. This simple observation follows from a slightly stronger result
proved in Proposition \ref{constant}. On the other hand, for any fixed feasible network $G$, whose map is given to the nodes, there is a minimum time, called
the {\em election index} and denoted by $\phi(G)$, in which leader election can be performed. 
The election index of a network is equal to the smallest integer $\ell$, such that the augmented truncated views
at depth $\ell$ of all nodes are distinct. This will be proved formally in Section 2. The election index is always a strictly positive integer because there is no graph all of whose
nodes have different degrees.

Our aim is to establish tradeoffs between the allocated time and the amount of information that has to be given {\em a priori} to the nodes to enable them to perform
leader election.
Following the framework of {\em algorithms
with advice}, see, e.g.,   \cite{DP,EFKR,FGIP,FKL,FP,IKP,SN}, this information (a single binary string) is provided to all nodes at the start by an oracle knowing the entire network. The length of this string is called the {\em size of advice}. It should be noted that, since the advice given to all nodes is the same, this information does not increase the asymmetries 
of the network (unlike in the case when different pieces of information could be given to different nodes) but only helps to harvest the existing asymmetries and use them to elect the leader. Hence the high-level formulation of our problem is the following. What is the minimum amount of identical information that can be given to nodes to enable them to use asymmetries present in the graph to elect a leader in a given time?

Of course, since the faithful map of the network is the total information about it, asking about the minimum size of advice
to solve leader election in time $\tau$ is meaningful only in the class of networks $G$ for which $\phi(G) \leq \tau$, because otherwise, no advice can help. The central problem of this paper can be now precisely formulated as follows.
\begin{quotation}
\noindent
For a given time $\tau$, what is the minimum size of advice that permits leader election in time $\tau$, for all networks $G$ for which $\phi(G) \leq \tau$?
\end{quotation}  

The paradigm of algorithms with advice has been proven very important in the domain of network algorithms. Establishing a strong lower bound on the minimum size of advice sufficient to accomplish a given task in a given time permits to rule out
entire classes of algorithms and thus focus only on possible candidates. For example, if we prove that $\Omega(n/\log n)$ bits of advice are needed to perform a certain task in $n$-node networks (as we do in this paper for leader election in minimum possible time), this rules out all 
potential algorithms that can work using only the size $n$ of the network, 
as $n$ can be given
to the nodes using $O(\log n)$ bits.  Lower bounds on the size of advice
give us impossibility results based strictly on the \emph{amount} of initial knowledge allowed in a model.
This is much more general than the traditional approach based on
specific {categories} of information given to nodes, such as the size, diameter, or maximum node degree.

\noindent
%----------------------------------
{\bf Our results.} 
 For a given time $\tau$ allocated to leader election, we give upper and lower bounds on the minimum size
of advice sufficient to perform leader election in time $\tau$ for networks with election index at most $\alpha$. 
An upper bound $U$ for a class of networks $\cC$ means that, for all networks in $\cC$, leader election in time $\tau$ is possible given advice of size $O(U)$. We prove such a bound by constructing advice of size $O(U)$ together with a leader election algorithm for all networks in the class $\cC$, that uses this advice and works in time $\tau$.
 A lower bound $L$ for a class of networks $\cC$ means that there exist networks in $\cC$ for which leader election in time $\tau$  requires advice of size $\Omega(L)$.
 Proving such a bound means constructing a subclass $\cC'$ of $\cC$ such that no leader election algorithm running in time $\tau$ with advice of size $o(L)$
 can succeed for all networks in $\cC'$.

We focus on the two sides of the time spectrum. For the smallest possible time, which is the election index of the network, we show that the minimum size of advice is linear in the size $n$ of the
network, up to polylogarithmic factors. More precisely, we establish a general upper bound $O(n\log n)$ and lower bounds $\Omega(n\log\log n)$ and $\Omega(n(\log\log n)^2/\log n)$,
for election index equal to 1 and larger than 1, respectively.

On the other hand, we consider large values of time: those exceeding the diameter $D$ by a summand, respectively, linear, polynomial, and exponential in the election index; for these values,
we prove tight bounds on the minimum size of advice, up to multiplicative constants. More precisely, for any positive integer $\alpha$, consider the class of networks with election index at most $\alpha$.
Let $c>1$ be an integer constant. For any graph of election index $\phi \leq \alpha$ in this class, consider leader election algorithms working in time, respectively,
at most $D+ \phi +c$, at most $D+c\phi$, at most $D+\phi^c$, and at most $D+c^{\phi}$. Hence the additive offset above $D$ in the time of leader election is
asymptotically equal to $\phi$ in the first case, it is linear in $\phi$ but with a multiplicative  constant larger than 1  in the second case,
it is polynomial in $\phi$ but super-linear in the third case, and it is exponential in $\phi$ in the fourth case. In the considered class we show that  
the minimum size of advice is $\Theta(\log \alpha)$ in the first case, it is $\Theta(\log\log \alpha)$ in the second case, 
it is $\Theta(\log\log\log \alpha)$ in the third case, and it is $\Theta(\log(\log^*\alpha))$ in the fourth case. Hence, perhaps surprisingly, the jumps in the minimum
size of advice, when the time of leader election varies between the above milestones, are all exponential.  We also show that constant advice is not sufficient for leader election in all graphs, regardless of the allocated time.

\noindent
%--------------------------------------------------
{\bf Related work.}
%--------------------------------------------------
The first papers on leader election focused on the scenario 
where all nodes have distinct labels. Initially, it was investigated for rings in the message passing model.
A synchronous algorithm based on label comparisons was given in \cite{HS}. It used 
$O(n \log n)$ messages.  In \cite{FL} it was proved that
this complexity cannot be improved for comparison-based algorithms. On the other hand, the authors showed
a leader election algorithm using only a linear number of messages but requiring very large running time.
An asynchronous algorithm using $O(n \log n)$ messages was given, e.g., in \cite{P}, and
the optimality of this message complexity was shown in \cite{B}. Leader election was also investigated in the radio communication model,
both in the deterministic \cite{JKZ,KP,NO} and in the randomized \cite{Wil} scenarios.
In \cite{HKMMJ}, leader election for labeled networks was
studied using mobile agents.

Many authors \cite{An,AtSn,ASW,BSVCGS,BV,YK2,YK3} studied leader election
in anonymous networks. In particular, \cite{BSVCGS,YK3} characterize message-passing networks in which
leader election is feasible. In \cite{YK2}, the authors study
the problem of leader election in general networks, under the assumption that node labels exist but are
not unique. They characterize networks in which leader election can be performed and give an algorithm
which achieves election when it is feasible. 
%They assume that the number of nodes of the network is known to all nodes. 
In  \cite{DoPe,FKKLS},  the authors
study message complexity of leader election in rings with possibly
nonunique labels. 
%Characterizations of feasible instances for leader election were provided in~\cite{C,CM}.
Memory needed for leader election in unlabeled networks was studied in \cite{FP}. 
In \cite{DP1}, the authors investigated the feasibility of leader election among anonymous agents that
navigate in a network in an asynchronous way.

Providing nodes or agents with arbitrary types of knowledge that can be used to increase efficiency of solutions to network problems 
 has previously been
proposed in \cite{AKM01,DP,EFKR,FGIP,FIP1,FIP2,FKL,FP,FPR,GPPR02,IKP,KKKP02,KKP05,MP,SN,TZ05}. This approach was referred to as
{\em algorithms with advice}.  
The advice is given either to the nodes of the network or to mobile agents performing some task in a network.
In the first case, instead of advice, the term {\em informative labeling schemes} is sometimes used if (unlike in our scenario) different nodes can get different information.

Several authors studied the minimum size of advice required to solve
network problems in an efficient way. 
 In \cite{KKP05}, given a distributed representation of a solution for a problem,
the authors investigated the number of bits of communication needed to verify the legality of the represented solution.
In \cite{FIP1}, the authors compared the minimum size of advice required to
solve two information dissemination problems using a linear number of messages. 
In \cite{FKL}, it was shown that advice of constant size given to the nodes enables the distributed construction of a minimum
spanning tree in logarithmic time. 
In \cite{EFKR}, the advice paradigm was used for online problems.
In \cite{FGIP}, the authors established lower bounds on the size of advice 
needed to beat time $\Theta(\log^*n)$
for 3-coloring cycles and to achieve time $\Theta(\log^*n)$ for 3-coloring unoriented trees.  
In the case of \cite{SN}, the issue was not efficiency but feasibility: it
was shown that $\Theta(n\log n)$ is the minimum size of advice
required to perform monotone connected graph clearing.
In \cite{IKP}, the authors studied radio networks for
which it is possible to perform centralized broadcasting in constant time. They proved that constant time is achievable with
$O(n)$ bits of advice in such networks, while
$o(n)$ bits are not enough. In \cite{FPR}, the authors studied the problem of topology recognition with advice given to the nodes. 
In \cite{DP}, the task of drawing an isomorphic map by an agent in a graph was considered, and the problem was to determine the minimum advice that has to be given to the agent for the task to be feasible. 

Among papers studying the impact of information on the time of leader election, the papers \cite{FP1,GMP,MP} are closest to the present work.
In \cite{MP}, the authors investigated the minimum size of advice sufficient to find the largest-labelled node in a graph, all of whose nodes have distinct labels.
The main difference between  \cite{MP} and the present paper is that we consider networks without node labels. This is a fundamental difference:
breaking symmetry in anonymous networks relies heavily on the structure of the graph, rather than on labels, and, as far as results
are concerned, much more advice is needed for a given allocated time.
In \cite{FP1}, the authors investigated the time of leader election in anonymous networks
by characterizing this time in terms of the network size, the diameter of the network, and an additional
parameter called the level of symmetry, similar to our election index.
This paper used the traditional approach  of providing nodes with some parameters of the network, rather than any type of advice, as in our setting.
Finally, the paper \cite{GMP} studied leader election under the advice paradigm for anonymous networks, but restricted attention to trees. It should be stressed that leader election in anonymous trees and in arbitrary anonymous networks present completely different difficulties. The most striking difference is that, in the case of trees, for the relatively modest time
equal to the diameter $D$, leader election can be done in feasible trees without any advice, as all nodes can reconstruct the map of the tree. This should be contrasted with the class of arbitrary networks, in which leader election with no advice is impossible.
Our results for large election time values (exceeding the diameter $D$) give a hierarchy of sharply differing tight bounds on the size of advice in situations in which leader election in trees can be performed with no advice at all.  

\section{Preliminaries}

We use the word ``graph'' to mean a simple undirected connected graph with unlabeled nodes and all port numbers fixed. 
In the sequel we use the word ``graph'' instead of ``network''.
Unless otherwise specified, all logarithms are to the base 2.

We will use the following characterization of the election index.

\begin{proposition}\label{prop-index}
The election index of a feasible graph is equal to the smallest integer $\ell$, such that the augmented truncated views
at depth $\ell$ of all nodes are distinct. 
\end{proposition}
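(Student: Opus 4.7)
The plan is to show both inequalities separately. Let $\ell^*$ denote the smallest integer such that the augmented truncated views $\cB^{\ell^*}(v)$ at depth $\ell^*$ are pairwise distinct for all nodes $v$ of $G$. Feasibility of $G$ guarantees that such an $\ell^*$ exists, since views $\cV(v)$ being distinct forces the augmented truncated views to be eventually distinct (any two distinct infinite views must disagree at some finite depth, and degrees of leaves can only help separate them faster).

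First I will establish $\phi(G)\le \ell^*$ by exhibiting an algorithm that works in $\ell^*$ rounds when the map of $G$ is given. After $\ell^*$ rounds every node $v$ has learned exactly $\cB^{\ell^*}(v)$. Using the map, each node can compute the augmented truncated view at depth $\ell^*$ of every vertex of $G$, and by the choice of $\ell^*$ these views are pairwise distinct. Hence each node can unambiguously locate itself inside the map. Now fix a canonical rule for choosing a leader $u^*$ from the map (say, the node whose augmented truncated view at depth $\ell^*$ is lexicographically smallest). Each node then outputs the sequence of port numbers of a canonical path from itself to $u^*$, read off the map. This is a valid leader election in time $\ell^*$.

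Second I will establish $\phi(G)\ge \ell^*$. Suppose, for contradiction, that some algorithm $\algname$ using the map performs leader election in time $\ell'<\ell^*$. By minimality of $\ell^*$, there exist two distinct nodes $u,v$ with $\cB^{\ell'}(u)=\cB^{\ell'}(v)$. Since the advice (the map) given to both nodes is identical, and since the total information that any node acquires in $\ell'$ rounds is precisely its augmented truncated view at depth $\ell'$ together with the advice, the algorithm $\algname$ produces the same output sequence $P=(p_1,q_1,\dots,p_k,q_k)$ at $u$ and at $v$. Both $P^*(u)$ and $P^*(v)$ must therefore end at the elected leader, say $w$.

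The main obstacle will be making this last step fully rigorous, since it uses the port-numbering in a crucial way. I will finish by arguing that the reverse sequence $\bar P=(q_k,p_k,\dots,q_1,p_1)$, traced from $w$, determines a \emph{unique} simple path of length $k$ in $G$ (at each step, the next edge is the unique one corresponding to the indicated port at the current node). Hence the starting endpoints of $P^*(u)$ and $P^*(v)$ must coincide, i.e.\ $u=v$, contradicting their distinctness. (The degenerate case $k=0$ is handled separately: then both nodes declare themselves leader, again a contradiction.) This gives $\phi(G)\ge \ell^*$, completing the proof.
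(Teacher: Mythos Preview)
Your proof is correct and follows essentially the same two-inequality structure as the paper's. The first direction (exhibiting an election algorithm in time $\ell^*$ using the map) is identical. For the second direction, the paper derives the contradiction by looking at the first index where the two identical-output paths meet and observing that the entry port at that meeting node would have to label two different edges; your argument instead reverses the common port sequence from the leader $w$ and uses that a port sequence from a fixed start determines a unique walk, forcing $u=v$. These are minor variants of the same idea, and your version is arguably a bit cleaner.
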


\begin{proof}
Fix a feasible graph $G$, let $\alpha$ be its election index, and let $\beta$ be the smallest integer $\ell$, such that the augmented truncated views
at depth $\ell$ of all nodes are distinct. 
 
Let $r$ be an integer such that augmented truncated views $\cB^r(v)$ are distinct for all nodes $v$ of the graph.
If nodes are provided with a map of the graph, then after time $r$ every node gets $\cB^r(v)$, can locate itself on the map because all  
augmented truncated views at depth $r$ are distinct, and can find the node $v_0$ for which $\cB^r(v_0)$ is lexicographically smallest.
Then every node outputs a simple path leading from it to $v_0$. Hence $\alpha \leq \beta$.

Conversely, suppose that $r$ is an integer for which there exist two nodes $v$ and $w$ with $\cB^r(v)=\cB^r(w)$. Then, after time $r$,
nodes $v$ and $w$ have identical information, and hence, when running any hypothetical leader election algorithm they must output an identical sequence
of ports. This sequence must correspond to two simple paths, one starting at $v$, the other starting at $w$, and ending at the same node. Let $x$ be the first node common in these paths, and consider the parts of these paths from $v$ to $x$ and from $w$ to $x$, respectively.  These parts must have the same length,
and hence the corresponding sequences of port numbers must be identical. In particular, the last terms in these sequences must be the same, which is impossible because they correspond to different ports at node $x$. Hence $\alpha \geq \beta$.
\end{proof}

The value of the election index is estimated in the following proposition, which is an immediate consequence of the main result of \cite{H}.

\begin{proposition}\label{H}
For any $n$-node feasible graph of diameter $D$, its election index is in $O(D\log(n/D))$.
\end{proposition}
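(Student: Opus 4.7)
The plan is to invoke Proposition~\ref{prop-index} to reduce the problem to bounding the depth at which augmented truncated views of distinct nodes must differ, and then to apply directly the main result of \cite{H}, which gives exactly such a bound.

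First, let $G$ be a feasible $n$-node graph of diameter $D$. By the definition of feasibility, the (infinite) views $\cV(u)$ of all nodes $u$ are pairwise distinct. The main theorem of \cite{H} strengthens the classical bound of Norris (who proved that if $\cV^{n-1}(u) = \cV^{n-1}(v)$ then $\cV(u) = \cV(v)$) by showing that whenever $\cV(u) \neq \cV(v)$, the two views are already distinguished at depth $O(D\log(n/D))$; that is, there exists a constant $c$ such that $\cV^{cD\log(n/D)}(u) \neq \cV^{cD\log(n/D)}(v)$.

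Next, I would observe that this bound on truncated views immediately transfers to \emph{augmented} truncated views: since the augmented truncated view $\cB^\ell(v)$ carries strictly more information than $\cV^\ell(v)$, we have the implication $\cV^\ell(u)\neq \cV^\ell(v) \Rightarrow \cB^\ell(u) \neq \cB^\ell(v)$. Hence, for $\ell = cD\log(n/D)$, the augmented truncated views at depth $\ell$ of all nodes of $G$ are pairwise distinct.

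Finally, I would apply Proposition~\ref{prop-index}: the election index $\phi(G)$ is the smallest integer $\ell$ such that all augmented truncated views at depth $\ell$ are distinct. By the preceding paragraph, $\phi(G) \leq cD\log(n/D) \in O(D\log(n/D))$. Since the whole argument is essentially a direct citation wrapped around Proposition~\ref{prop-index}, there is no genuine obstacle; the only thing to be careful about is the translation between ordinary and augmented truncated views, which is immediate from the fact that leaf-degree labels can only refine, never coarsen, the equivalence relation induced by truncation at a given depth.
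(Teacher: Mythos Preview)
Your proposal is correct and matches the paper's own treatment: the paper does not give a standalone proof but simply states that the proposition is an immediate consequence of the main result of \cite{H}, which is exactly what you do, with the added (and correct) observation that distinctness of truncated views implies distinctness of augmented truncated views, so that Proposition~\ref{prop-index} applies.
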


{ Our algorithms use the subroutine $COM$ to exchange augmented truncated views at different depths with their neighbors. This subroutine is detailed in Algorithm~\ref{alg:COM}.}

\begin{algorithm}
{ \caption{$COM(i)$\label{alg:COM}}

{\bf send} $\cB^i(u)$ to all neighbors;\\
{\bf foreach} neighbor $v$ of $u$\\
\hspace*{1cm} {\bf receive} $\cB^i(v)$ from $v$
}
\end{algorithm}

When all nodes repeat this subroutine for $i=0,\dots,t-1$, every node acquires its augmented truncated view at depth $t$.

\section{Election in minimum time}

We start this section by designing a leader election algorithm working in time $\phi$, for any graph of size $n$ and  election index $\phi$, and using advice of size $O(n\log n)$.
The high-level idea of the algorithm is the following. The oracle knowing the graph $G$ produces the advice consisting of three items: the integer $\phi$, $A_1$ and $A_2$. The integer $\phi$ serves the nodes to determine for how long they have to exchange information with their neighbors. 
The item $A_1$ is the most difficult to construct. Its aim is to allow every node that knows its augmented truncated view at depth $\phi$ (which is acquired in the allocated time $\phi$) to construct a unique integer label from the set {$\{1,2,\dots,n\}$}. Recall that the advice is the same for all nodes, and hence each node has to
produce a distinct label using this common advice, relying only on its (unique) augmented truncated view at depth $\phi$. The third item in the advice, that we call $A_2$,
is a labeled BFS tree of the graph $G$. (To avoid ambiguity, we take the {\em canonical} BFS tree, in which the parent of each node $u$ at level $i+1$ is the node at level $i$ corresponding to the smallest port number at $u$.) The labels of nodes are equal to those that nodes will construct using item $A_1$, the root is the node {labeled 1}, and all port numbers in the BFS tree (that come from the graph $G$) are faithfully given. More precisely, $A_2$ is the {\em code} of this tree, i.e., a binary string of length
$O(n\log n)$ which permits the nodes to reconstruct unambiguously this labeled tree (the details are given below).  
After receiving the entire advice, Algorithm {\tt Elect} works as follows. Each node acquires its augmented truncated view at depth $\phi$, then positions itself in the obtained BFS tree, thanks to the unique constructed label, and outputs  the sequence of port numbers corresponding to the unique path from itself to the root of this BFS tree. 

The main difficulty is to produce item $A_1$ of the advice succinctly, i.e., using only $O(n\log n)$ bits, and in such a way that allows nodes to construct unique {\em short} labels. Note that a naive way in which nodes could attribute themselves distinct labels would require no advice at all and could be done as follows. Nodes could list all possible augmented truncated views at depth $\phi$, order them lexicographically in a canonical way, and then each node could adopt as its label the rank in this list. However, already for $\phi=1$, there are $\Omega(n)^{\Omega(n)}$
different possible augmented truncated views at depth 1, and hence these labels would be of size $\Omega(n\log n)$. Now item $A_2$ of the advice would have to give the tree with all these 
labels, thus potentially requiring at least $\Omega(n^2\log n)$ bits, which significantly exceeds the size of advice that we want to achieve. This is why item $A_1$ of the advice is needed, and must be constructed in a subtle way.  On the one hand, it must be sufficiently short (use only $O(n\log n)$ bits) and on the other hand it must allow nodes to construct distinct labels
of size $O(\log n)$. Then item $A_2$ of the advice can be given using also only  $O(n\log n)$ bits.

We now give some intuitions concerning the construction of item $A_1$ of the advice. This item can be viewed as a carefully constructed {\em trie}, cf.\cite{AHU},
which is a rooted binary tree whose  leaves correspond to objects, and whose internal nodes correspond to yes/no queries concerning these objects.
The left child of each internal node corresponds to port 0 and to the answer ``no'' to the query, and the right child corresponds to port 1 and to the answer ``yes'' to the query.
The object in a given leaf corresponds to all answers on the branch from the root to the leaf, and must be unique. In our case, objects in leaves of the trie are nodes
of the graph, and queries serve to discriminate all views $\cB^{\phi}(v)$, for all nodes $v$ of the graph $G$. Since each node $v$ knows its augmented truncated view
$\cB^{\phi}(v)$, after learning the trie it can position itself as a leaf of it and adopt a unique label from the set {$\{1,2\dots,  n\}$}.  

As an example, consider the case $\phi=1$. All augmented
truncated views at depth 1 can be coded by binary sequences of length $O(n \log n)$. In this case the queries at internal nodes of the trie are of two types:
``Is the binary representation of your augmented truncated view at depth one of length smaller than $t$?''  (this query is coded as $(0,t)$), and 
``Is the $j$th bit of the binary representation of your augmented truncated view at depth one equal to 1?'' (this query is coded as $(1,j)$). Since both the possible lengths
$t$ and the possible indices $j$ are of size $O(\log n)$, the  entire trie can be coded as a binary sequence of length $O(n \log n)$, because there are $n$ leaves of the trie.

For $\phi >1$ the construction is more complicated. Applying the same method as for $\phi=1$ (by building a large trie discriminating between all 
augmented truncated views at depth $\phi$, using similar questions as above, only concerning depth $\phi$ instead of depth 1) is impossible, because the sizes of the queries would exceed $\Theta(\log n)$. Actually, queries would be of size $\Omega(\phi \log n)$, resulting in advice of size $\Omega(\phi n\log n)$, and not $O(n\log n)$. Hence we apply a more subtle strategy.
The upper part of the trie is as for the case $\phi=1$. However, this is not sufficient, as in this case there exist nodes $u$ and $v$ in the graph such that
$\cB^1(u)=\cB^1(v)$ but  $\cB^{\phi}(u)\neq \cB^{\phi}(v)$, and hence such a small trie would not discriminate between all augmented truncated views at depth $\phi$. Hence leaves of this partial trie, corresponding to sets of nodes in the graph that have the same augmented truncated view at depth 1, have to be further developed, by adding sub-tries rooted at these leaves, to further discriminate between all augmented truncated views at depth $\phi$. This is
done recursively in such a way that these further queries are still of size $O(\log n)$, and constitutes the main difficulty of the advice construction.

We now proceed with the detailed description of the advice and of Algorithm {\tt Elect}  using it. We first address technical issues concerning coding various objects by binary strings.
This will be needed to define the advice formally. First we show how to encode a sequence of several binary substrings, 
corresponding to various parts of the advice, into a single string,
in a way that permits the algorithm to unambiguously decode the original sequence of substrings, and hence recover all parts of the advice.
This can be done as follows. We encode the sequence of substrings $(A_1,\ldots,A_{k})$ by doubling each digit in each substring and putting 01 between substrings. 
Denote by $Concat(A_1,\ldots,A_{k})$ this encoding and let $Decode$ be the inverse (decoding) function, i.e. $Decode(Concat(A_1,\ldots,A_{k})) = (A_1,\ldots,A_{k})$. As an example, $Concat((01),(00)) = (0011010000)$. Note that the encoding increases the total number of advice bits by a constant factor.

When constructing the advice, we will need to encode rooted trees with port numbers and labeled nodes.
The code will be a binary sequence of length $O(n\log n)$, if the tree is of size $n$ and all labels are of length $O(\log n)$. One way to produce such a code is the following. Consider the DFS walk in the tree,
starting and ending at its root, where children of any node $v$ are explored in the increasing order of  the corresponding port numbers at $v$. Let $S_1$ be the sequence of length $4(n-1)$ of all port numbers
encountered in this walk, in the order of traversing the edges of the walk, and listing the port 0 at each leaf twice in a row: when entering the leaf and when leaving it. Let $S_2$ be the sequence of 
length $n$  of node labels, in the order of visits during this walk, without repetitions. Consider the couple $(S_1,S_2)$. Using the sequence $S_1$ it is possible to reconstruct the topology of the rooted tree with all port numbers, and using the sequence $S_2$ it is possible to correctly assign labels to all nodes, starting from the root. It remains to encode the couple $(S_1,S_2)$ as a binary string.
Let $S_1=(a_1,\dots a_{4(n-1)})$, and let $S_2=(b_1,\dots, b_n)$, where $a_i$ and $b_i$ are non-negative integers. For any non-negative integer $x$, let $bin(x)$ denote its binary representation.
The couple $(S_1,S_2)$ can be unambiguously coded by the string $Concat(Concat(bin(a_1),\dots, bin(a_{4(n-1)})), Concat(bin(b_1),\dots,bin(b_n)))$. The above described code of any labeled tree
$T$ will be denoted by $bin(T)$. By the definition of $bin(T)$ we have the following proposition.

{
\begin{proposition}\label{pro:bfs}
Let $T$ be a rooted labeled $n$-node tree. If all node labels are integers in $O( n)$,  then the length of $bin(T)$ is in $O(n\log n)$. 
\end{proposition}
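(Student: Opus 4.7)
The plan is to bound $|bin(T)|$ by separately bounding the lengths of the two encoded sequences $S_1$ and $S_2$ produced by the DFS walk, and then accounting for the constant multiplicative overhead introduced by each application of $Concat$.

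First I would handle $S_1$. By construction, this sequence has exactly $4(n-1)$ terms, each of which is a port number occurring at some node of $T$. Since every node has degree at most $n-1$, each port number lies in $\{0,\dots,n-2\}$, and hence its binary representation $bin(a_i)$ has $O(\log n)$ bits. The inner $Concat(bin(a_1),\dots,bin(a_{4(n-1)}))$ doubles each bit and inserts a fixed separator between substrings, so it produces a string of length $O(n\log n)$.

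Next I would handle $S_2$. This sequence has length exactly $n$, and by hypothesis every label is an integer in $O(n)$, so each $bin(b_i)$ occupies $O(\log n)$ bits. The same reasoning as above shows that $Concat(bin(b_1),\dots,bin(b_n))$ has length $O(n\log n)$.

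Finally, the outer $Concat$ combines the two strings obtained above; it again doubles every bit and adds a constant-length separator, so the total length is still $O(n\log n)$, which is precisely $|bin(T)|$. The argument is essentially a bookkeeping exercise; the only thing worth checking is that each nested application of $Concat$ only introduces a constant multiplicative overhead, so the asymptotic bound is preserved.
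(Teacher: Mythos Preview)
Your proposal is correct and follows exactly the approach implicit in the paper, which simply states that the proposition holds ``by the definition of $bin(T)$'' without spelling out the bookkeeping. Your argument fills in precisely the details the paper leaves to the reader: bounding $|S_1|$ via the $4(n-1)$ port numbers of size $O(\log n)$, bounding $|S_2|$ via the $n$ labels of size $O(\log n)$, and noting that each $Concat$ only contributes a constant multiplicative overhead.
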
}

Next, we define a binary code {$bin(Tr)$ of a trie $Tr$}. Since a trie can be considered as a rooted binary tree whose internal nodes are labeled by queries, the above described definition of codes for labeled trees can be adapted,
with the following modification: the strings $bin(b_i)$ at internal nodes are now binary codes of queries in the trie, instead of binary representations of integers. In our case, the queries are pairs of integers, hence their
binary codes are straightforward. In order to have labels of all nodes of the respective binary tree, we put the string $(0)$ at all leaves. This implies the following proposition.

{
\begin{proposition}\label{pro:trie}
Let $Tr$ be a trie of size $O(n)$. If the query $(a,b)$ at each internal node of $Tr$ is such that $a$ and $b$ are integers in $O(n \log n)$, then the length of $bin(Tr)$ is in $O(n \log n)$.
\end{proposition}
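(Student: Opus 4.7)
The plan is to bound separately the contribution to $bin(Tr)$ coming from the topology/port information (the sequence $S_1$ in the adapted encoding) and from the node-label information (the sequence $S_2$), then observe that the outer $Concat$ only inflates things by a constant factor.

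First I would exploit the assumption that $Tr$ is a trie, hence a \emph{binary} rooted tree, so every port number appearing on the DFS walk is either $0$ or $1$. Since $Tr$ has $O(n)$ nodes, the walk traverses $O(n)$ edges, and listing the leaf-ports twice does not change this asymptotically. Therefore the sequence $S_1$ has length $O(n)$ and each of its entries has binary representation of constant length, giving $O(n)$ bits for the topology part.

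Next I would bound the labels in $S_2$. The trie has $O(n)$ internal nodes and $O(n)$ leaves. Each leaf is labeled by the string $(0)$ and so contributes $O(1)$ bits; each internal node carries a query $(a,b)$ with $a,b \in O(n\log n)$, so each of $bin(a)$ and $bin(b)$ has length $O(\log(n\log n)) = O(\log n)$ bits. Summing over all internal nodes, the total contribution of $S_2$ is $O(n) \cdot O(\log n) = O(n\log n)$ bits.

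Finally I would recall that $Concat$ merely doubles bits and inserts constant-length separators between substrings, so applied a constant number of times (once to assemble $(S_1,S_2)$, once inside the assembly of each $S_i$) it preserves the $O(n\log n)$ asymptotics. Adding the $O(n)$ bits for $S_1$ and the $O(n\log n)$ bits for $S_2$ then yields $|bin(Tr)| \in O(n\log n)$, as claimed. No part of this argument looks genuinely hard; the only place that requires mild care is making sure the accounting for the leaf-labels $(0)$ and for the separator overhead from $Concat$ is explicit, so that the bound on queries at internal nodes is not dominated by anything hidden in the topology encoding.
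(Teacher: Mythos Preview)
Your argument is correct and is exactly the computation the paper leaves implicit: in the text, the proposition is stated immediately after the definition of $bin(Tr)$ with only the remark ``This implies the following proposition,'' so there is no separate proof to compare against. Your decomposition into the $S_1$ part (ports in $\{0,1\}$, hence $O(n)$ bits) and the $S_2$ part (at most $O(n)$ labels, each of size $O(\log n)$ since $a,b\in O(n\log n)$), together with the constant-factor overhead of $Concat$, is precisely the intended justification.

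One small phrasing nit: when you write that $Concat$ is ``applied a constant number of times,'' note that the inner $Concat$ building $S_1$ (respectively $S_2$) is a single call combining $O(n)$ substrings, not a constant number; what makes the bound go through is that $Concat$ of $k$ strings of total length $L$ produces a string of length $O(L+k)$, and here $k\in O(n)$ is dominated by the $O(n\log n)$ total. Your bound is right, but the sentence could be tightened to avoid suggesting each $Concat$ call takes only constantly many arguments.
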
}

We also need to encode augmented truncated views at depth 1. Consider a node $v$ of degree $k$, and call $v_j$ the neighbor of $v$ corresponding to the port $j$ at $v$. Let $a_j$ be the port at node $v_j$ corresponding to edge $\{v,v_j\}$, and let $b_j$ be the degree of $v_j$.  The augmented truncated view  $\cB^1(v)$ 
can be represented as a list $((0,a_0,b_0),\dots , (k-1,a_{k-1},b_{k-1}))$. Hence its encoding {$bin(\cB^1(v))$} is 
$Concat( Concat( bin(0),bin(a_0),bin(b_0)),\dots , Concat(bin(k-1),bin(a_{k-1}),bin(b_{k-1})))$, and we have the following proposition.

{
\begin{proposition}
\label{pro:longueur}
Let $v$ be a node of a graph of size $n$. The length of $bin(\cB^1(v))$ is in $O(n\log n)$.
\end{proposition}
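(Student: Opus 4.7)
The approach is a direct length calculation based on the explicit encoding $bin(\cB^1(v))$ given right above the statement. First, I would observe that if $v$ has degree $k$, then the encoding consists of $k$ blocks of the form $Concat(bin(j), bin(a_j), bin(b_j))$, all glued together by a single outer $Concat$. So the task reduces to bounding $k$, bounding each of the integers $j$, $a_j$, $b_j$, and then accounting for the overhead introduced by $Concat$.

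Next I would bound the three integers in each triple. Since the graph has $n$ nodes, the degree of every node is at most $n-1$; hence $k \le n-1$, and for every $j$ we have $j \le k-1 \le n-2$, $b_j \le n-1$ (as $b_j$ is the degree of $v_j$), and $a_j \le n-2$ (as $a_j$ is a port number at $v_j$, whose degree is at most $n-1$). Consequently, each of $bin(j)$, $bin(a_j)$, $bin(b_j)$ is a binary string of length $O(\log n)$, so each inner block $Concat(bin(j), bin(a_j), bin(b_j))$ has length $O(\log n)$ (the $Concat$ operation, by its definition, merely doubles digits and inserts constant-length separators, which is a constant-factor blow-up).

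Finally, summing over the $k \le n-1$ blocks gives a total length of $O(k \log n) = O(n \log n)$ before the outermost $Concat$, and one more application of $Concat$ preserves this bound up to a constant factor. The conclusion $|bin(\cB^1(v))| = O(n \log n)$ follows. I do not expect any real obstacle here; the only point to be careful about is to make explicit that $Concat$ introduces only a multiplicative constant overhead, so that the nested applications of $Concat$ do not accumulate a non-constant factor.
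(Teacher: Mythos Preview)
Your proposal is correct and matches the paper's approach: the paper does not spell out a separate proof but states the proposition as an immediate consequence of the explicit encoding $bin(\cB^1(v))$ given just before it, and your direct length calculation (bounding the degree and each port/degree integer by $n$, then using the constant-factor overhead of $Concat$) is exactly the intended justification.
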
}

In the construction of our advice we will manipulate {\em nested lists}. These are lists of the form $L=((a_1, L_1), \dots ,(a_k, L_k))$, where each $a_i$ is a non-negative integer, and each $L_i$ is a list of the form  $((b_1, T_1), \dots ,(b_m, T_m))$, where $b_j$ are non-negative integers, and $T_j$ are tries. 
We have already defined binary codes $bin(T_j)$ of tries. The binary code of each list $L_i$ is defined as $bin(L_i)=Concat( bin(b_1),bin(T_1),\dots, bin(b_m), bin(T_m))$, and the binary
code {$bin(L)$} of the nested list $L$ is defined as $bin(L)= Concat( bin(a_1),bin(L_1),\dots, bin(a_k), bin(L_k))$. This implies the following proposition.

\begin{proposition}\label{pro:nested}
{Let $L$ be a list of couples $(a_i,L_i)$, where $a_i$ is a non-negative integer, and $L_i$ is a list of couples $(b_j,T_j)$, such that $b_j$ is a non-negative integer and $T_j$ is a trie. The length of $bin(L)$ is in $O(n\log n)$, if the following three conditions are satisfied.}

\begin{itemize}
\item {The length of $L$ is in $O(n)$, and the sum of lengths of all lists $L_i$, such that there exists a couple $(*,L_i)$ in $L$, is in $O(n)$.}

\item {The sum of sizes of all tries $T_j$, such that there exists a couple $(*,L_i)$ in $L$, where $L_i$ contains a couple $(*,T_j)$, is in $O(n)$. For each of these tries $T_j$ the query $(a,b)$ at each internal node is such that $a$ and $b$ are {integers} in $O(n)$.}

\item {For each $(a_i,L_i)$ in $L$, the integer $a_i$ is in $O(n)$, and, for each $(b_j,T_j)$ in $L_i$, the integer $b_j$ is in $O(n)$.}
\end{itemize}

\end{proposition}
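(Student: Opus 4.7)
The plan is to decompose $bin(L)$ according to its recursive definition and bound each piece separately, using the hypotheses to control the number of terms that appear, the magnitudes of the integer labels, and the total trie size.

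By definition, $bin(L) = Concat(bin(a_1), bin(L_1), \ldots, bin(a_k), bin(L_k))$ and $bin(L_i) = Concat(bin(b_{i,1}), bin(T_{i,1}), \ldots, bin(b_{i,m_i}), bin(T_{i,m_i}))$. Since the operation $Concat$ only inflates the total length by a constant factor, it is enough to show that
\[
\sum_{i=1}^k |bin(a_i)| \;+\; \sum_{i,j} |bin(b_{i,j})| \;+\; \sum_{i,j} |bin(T_{i,j})| \;\in\; O(n\log n).
\]

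First I handle the integer contributions. By condition 3, every $a_i$ and every $b_{i,j}$ is in $O(n)$, hence each is encoded with $O(\log n)$ bits. By condition 1, the number of $a_i$'s is in $O(n)$ and the total number of pairs $(b_{i,j},T_{i,j})$ over all sublists $L_i$ is also in $O(n)$. Hence the combined contribution of the integer terms is in $O(n\log n)$.

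Next I handle the tries. I argue that a trie $T$ of size $s$ whose internal queries $(a,b)$ satisfy $a,b\in O(n)$ admits a code of length $O(s\log n)$: the DFS-walk port sequence uses only ports $0$ and $1$ (since the trie is binary), contributing $O(s)$ bits; each of the $O(s)$ internal nodes carries a query encoded in $O(\log n)$ bits. This is the same reasoning used for Proposition~\ref{pro:trie}, only expressed as a linear function of the actual trie size rather than using its $O(n)$ bound. Let $s_{i,j}$ denote the size of $T_{i,j}$. By condition 2, $\sum_{i,j} s_{i,j} \in O(n)$, and therefore $\sum_{i,j}|bin(T_{i,j})| \in O(\log n)\cdot O(n)=O(n\log n)$.

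Adding the three bounds yields $|bin(L)|\in O(n\log n)$. The only mild obstacle is that Proposition~\ref{pro:trie} is stated for a single trie of size $O(n)$, whereas here many tries appear whose sizes only sum to $O(n)$; this is resolved by reinspecting the trie encoding to see that its length scales linearly with the actual trie size, after which the aggregate bound follows at once.
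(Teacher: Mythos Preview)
Your proof is correct and follows the same line as the paper, which simply asserts that the proposition is implied by the definition of $bin(L)$ without spelling out the details. Your explicit decomposition into integer contributions and trie contributions, together with the observation that the trie encoding scales linearly in the actual trie size (so that condition~2 yields an aggregate $O(n\log n)$ bound), is precisely the verification that the paper leaves implicit.
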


%These are lists constructed recursively. A nested list of rank 1 is a list of non-negative integers. A nested list of rank $i+1$
%is a list whose all terms are lists of rank at most $i$, and at least one term is a list of rank $i$. Note that a nested list $L$ of 
%rank $i$ can be unambiguously represented as a rooted tree of height $i$
%whose internal nodes are lists, and leaves are integers. The root of this tree is the list $L$, and children of any list $L'$ 
%are the terms of this list in order of their appearance in $L'$. The order of children is 
%specified by port numbers, and the only labels are those of leaves (labels of internal nodes are blank).
%Hence any nested list $L$ can be unambiguously represented as a labeled tree $T_L$. We define the binary code $bin(L)$ of a nested list $L$ as $bin(T_L)$.

We first describe the construction of the advice produced by the oracle knowing graph $G$. This construction is formulated using Algorithm ${\tt ComputeAdvice}(G)$
that will be executed by the oracle. This algorithm uses two procedures: {\tt BuildTree} and {\tt RetrieveLabel}, the latter using a subroutine {\tt LocalLabel}.
We start with the description of this subroutine.

The subroutine {\tt LocalLabel} is a recursive procedure that takes three arguments. The first is an augmented truncated view $\cB$ at some depth $d$, rooted at some node $v$, the second is a list $X$ of integers,
and the third is a trie $T$. The list $X$ is a list of temporary labels that have been previously assigned to children of $v$. The trie $T$ permits to discriminate between all views from the set $Y$ of augmented truncated
views at depth $d$ which correspond to the same augmented truncated view at depth $d-1$ as that of the root of $\cB$.  The list $X$ permits to determine
to which leaf of $T$ corresponds $\cB$. 
 {\tt LocalLabel} returns an integer label from the set {$\{1,\dots, |Y|\}$} with the following property. Consider the set $P$ 
 of nodes whose augmented truncated view at depth $d$  belongs to $Y$. The labels returned
by {\tt LocalLabel} are different for all nodes $u,v \in P$, for which $\cB^d(u)\neq \cB^d(v)$.

\pagebreak

%Consider {\tt LocalLabel}$(\cB,X,T)$, and {\tt LocalLabel}$(\cB',X',T)$, where $\cB \neq \cB'$ and $X'$ has the same property with respect to $\cB'$ as $X$ with respect to $\cB$. Then {\tt LocalLabel}$(\cB,X,T)$ and {\tt LocalLabel}$(\cB',X',T)$ return different labels.
%Below is the pseudocode of {\tt LocalLabel}.

\begin{algorithm}
{ \caption{{\tt LocalLabel}$(\cB,X,T)$\label{alg:LocalLabel}}

{\bf if} $T$ is a single node {\bf then}\\
\hspace*{1cm}{{\bf return} 1}\\
{\bf else}\\
\hspace*{1cm}$(x,y)\leftarrow$ the label of the root of $T$\\
\hspace*{1cm}$T_l \leftarrow$ the subtree of $T$ rooted at the left child of the root of $T$\\
\hspace*{1cm}$T_r \leftarrow$ the subtree of $T$ rooted at the right child of the root of $T$\\
\hspace*{1cm}$left \leftarrow false$\\
\hspace*{1cm}{\bf if} $X$ is the empty list {\bf then}\\
\hspace*{2cm}{{\bf if} ($x=0)$ and the length of the binary representation $bin(\cB)$ is smaller than $y$ {\bf then}}\\
\hspace*{3cm}$left \leftarrow true$\\
\hspace*{2cm}{{\bf if} ($x=1)$ and the $y$th bit of the binary representation $bin(\cB)$ is 0 {\bf then}}\\
\hspace*{3cm}$left \leftarrow true$\\
\hspace*{1cm}{\bf else}\\
\hspace*{2cm}{\bf if} the $(x+1)$th  term of the list $X$ is different from $y$ {\bf then}\\
\hspace*{3cm}$left \leftarrow true$\\
\hspace*{1cm}{\bf if} $left=true$ {\bf then}\\
\hspace*{2cm}{\bf return} {\tt LocalLabel}$(\cB,X,T_l)$\\
\hspace*{1cm}{\bf else}\\
\hspace*{2cm}$numleaves \leftarrow$ the number of leaves in $T_l$\\
\hspace*{2cm}{\bf return} $numleaves +${\tt LocalLabel}$(\cB,X,T_r)$

}
\end{algorithm}

In what follows, the integer returned by {\tt LocalLabel}$(\cB,X,T_r)$ will be denoted, for simplicity,  by {\tt LocalLabel}$(\cB,X,T_r)$. A similar convention
will be used for the objects returned by procedures {\tt BuildTree} and {\tt RetrieveLabel}.

The procedure {\tt RetrieveLabel} has three arguments: an augmented truncated view $\cB$ at some depth $d$, {a trie $E_1$, and  a (possibly empty) list $E_2$ of couples},
such that the first term of each couple is an integer, and the second is a list.  The procedure uses the subroutine {\tt LocalLabel} and returns a temporary integer label with the following two properties. 

1. The label is from the set {$\{1,2,\dots, |Z|\}$}, where $Z$ is the set of augmented truncated views at depth $d$ in $G$.

2. The labels returned by {\tt RetrieveLabel}$(\cB, E_1,E_2)$ and by {\tt RetrieveLabel}$(\cB', E_1,E_2)$, for any $E_1, E_2$, and any augmented truncated views  $\cB \neq \cB'$ at the same depth,
are different.

This temporary label will serve to construct queries in the trie built  by the procedure {\tt BuildTrie}, and will serve the nodes to position themselves in the BFS tree,
given by the item $A_2$ of the advice.  

Below is the pseudocode of {\tt RetrieveLabel}.

 \pagebreak
 
 \begin{algorithm}
{ \caption{{\tt RetrieveLabel}$(\cB, E_1,E_2)$\label{alg:RetrieveLabel}}

$d\leftarrow$ depth of $\cB$\\
{\bf if} $d=1$ {\bf then} {\bf return} {{\tt LocalLabel}$(\cB, (), E_1)$}\\
{\bf else}\\
\hspace*{1cm}$X \leftarrow$ empty list\\
\hspace*{1cm}{$deg \leftarrow$ degree of the root of $\cB$}\\
\hspace*{1cm}{{\bf for} $j=0$ {\bf to} $deg-1$ {\bf do}}\\
\hspace*{2cm}{$X \leftarrow X$ with {\tt RetrieveLabel}$(\cB^{d-1}(v_j), E_1,E_2)$ appended as the last term,}\\
\hspace*{3cm}{where $v_j$ is the node at depth 1 in $\cB$ corresponding to the port number $j$ at}\\ 
\hspace*{3cm}{the root of $\cB$}\\
\hspace*{1cm}{$\cB'\leftarrow$ the augmented truncated view at depth $d-1$ of the root of $\cB$\\}
\hspace*{1cm}$label \leftarrow$ {\tt RetrieveLabel}$(\cB', E_1,E_2)$\\
\hspace*{1cm}$sum \leftarrow 0$\\
\hspace*{1cm}$L \leftarrow$ the second term of the couple from $E_2$ whose first term is $d$ (* $L$ is a list*)\\
\hspace*{1cm}{{\bf for} $i=1$ {\bf to} $label$ {\bf do}}\\
\hspace*{2cm}{\bf if} the list $L$ has a term which is a couple $(i,T)$ (*$T$ is a trie*) \\
\hspace*{2cm}{\bf then}\\
\hspace*{3cm}{\bf if} $i<label$ {\bf then}\\
\hspace*{4cm}$numleaves \leftarrow$ the number of leaves in $T$\\
\hspace*{4cm}$sum \leftarrow sum +numleaves$\\
\hspace*{3cm}{\bf else}\\
\hspace*{4cm}$sum \leftarrow sum +${\tt LocalLabel}$(\cB,X,T)$\\
\hspace*{2cm}{\bf else}\\
\hspace*{3cm}{$sum \leftarrow sum + 1$}\\
\hspace*{1cm}{\bf return} $sum$

}
\end{algorithm}

The last of the three procedures is {\tt BuildTrie}. It takes three arguments.
The first argument $S$ is a non-empty {set} of distinct augmented truncated views at the same positive depth $\ell$,  $E_1$ is a (possibly empty) trie, and 
{$E_2$ is a (possibly empty)} list of couples, 
such that the first term of each couple is an integer, and the second is a list. The procedure returns a trie which permits to discriminate between all 
augmented truncated views of $S$. This is done as follows. {If $E_1$ is empty then the returned trie is constructed using the differences between the binary representations of the augmented truncated views of $S$. Actually, as we will see in the proof of Theorem~\ref{theo:smalltime}, this case occurs only when the depth of the augmented truncated views of $S$ is $1$. Otherwise, the returned trie is constructed using intermediate labels that were previously assigned to augmented truncated views at depth $l-1$: these labels can be recursively computed using the arguments $E_1$ and $E_2$. In particular,} the trie $E_1$, that is the second argument, permits to discriminate between all augmented 
truncated views at depth 1. The list $E_2$, that is the third argument, permits to further discriminate between {the 
augmented truncated views from $S$ corresponding to any given leaf} of $E_1$.

%A distinct temporary label can be attributed to all nodes having the same truncated view at depth 1. Suppose that some
%leaf of trie $E_1$ has this temporary label $a$. The list $E_2$ consists of couples $(m, \lambda_m)$, where $m$

We use the following notions. {For two nodes $u$ and $v$ in the graph $G$, 
such that $\cB^{\ell}(u) \neq  \cB^{\ell}(v)$ and $\cB^{\ell-1}(u) =  \cB^{\ell-1}(v)$ with $l\geq2$}, the {\em discriminatory index} for the couple $u$ and $v$ is the smallest integer $i$, such that $\cB^{\ell-1}(u') \neq  \cB^{\ell-1}(v')$,
where $u'$ is the neighbor of $u$ corresponding to the port number $i$ at $u$, and $v'$ is the neighbor of $v$ corresponding to the port number $i$ at $v$.
{The discriminatory index of the list $S$ of augmented truncated views at depth $l\geq2$ that are all identical at depth $l-1$} (in the case when the length of $S$ is larger than 1) is the discriminatory index for nodes $u$ and $v$, such that $\cB^{\ell}(u)$ and $ \cB^{\ell}(v)$
are the augmented truncated views from $S$ with the two smallest binary representations. The {\em discriminatory subview} of $S$ is the 
augmented truncated view $\cB^{\ell-1}(u')$, where $u'$ and $v'$ are defined above and $\cB^{\ell-1}(u')$ has a lexicographically smaller binary representation than $ \cB^{\ell-1}(v')$.

\begin{algorithm}
{ \caption{{\tt BuildTrie}$(S,E_1,E_2)$\label{alg:BuildTrie}}

{\bf if}  $|S|=1$ {\bf then} {{\bf return} a single node labeled (0)}\\
{\bf else}\\
\hspace*{1cm}{\bf if} $E_1=\emptyset$ {\bf then}\\
\hspace*{2cm}{{\bf if} there exist augmented truncated views $\cB$ from $S$ with binary representations $bin(\cB)$}\\ 
\hspace*{2cm}of different lengths {\bf then}\\
\hspace*{3cm}$max \leftarrow$ the length of the longest binary representation of an element of $S$\\
\hspace*{3cm}{$S'\leftarrow$ the set of views $\cB$ from $S$ with binary representations $bin(\cB)$ of }\\
\hspace*{4cm}{lengths $<max$}\\
\hspace*{3cm}$val \leftarrow (0,max)$\\
\hspace*{2cm}{\bf else}\\
\hspace*{3cm}$j\leftarrow$ the smallest index such that the binary representations of some elements\\ 
\hspace*{4cm}of $S$ differ at the $j$th position\\
\hspace*{3cm}{$S'\leftarrow$ the set of elements $\cB$ of $S$ whose $j$th bit of the binary representation $bin(\cB)$}\\
\hspace*{4cm}{is 0}\\
\hspace*{3cm}$val \leftarrow (1,j)$\\
\hspace*{1cm}{\bf else}\\
\hspace*{2cm}$\ell \leftarrow$ the depth of augmented truncated views from $S$\\
\hspace*{2cm}$i \leftarrow$ the discriminatory index of $S$\\ 
\hspace*{2cm}$\cB_{disc}\leftarrow$ the discriminatory subview of $S$\\
\hspace*{2cm}$S'\leftarrow$ the set of elements $\cB^{\ell}(v)$ of $S$, such that  $\cB^{\ell-1}(v')\neq \cB_{disc}$,\\ 
\hspace*{3cm}where $v'$ is the neighbor of $v$ corresponding to the port number $i$ at $v$\\ 
\hspace*{2cm}$val \leftarrow (i,${\tt RetrieveLabel}$(\cB_{disc},E_1,E_2))$\\
\hspace*{1cm}{\bf return} the node labeled $val$ with the left child equal to  {\tt BuildTrie}$(S',E_1,E_2)$\\ 
\hspace*{1cm}and the right child equal to {\tt BuildTrie}$(S \setminus S',E_1,E_2)$
}
\end{algorithm}

Finally, we present Algorithm {\tt ComputeAdvice} used by the oracle to compute the advice given to nodes of the graph. Recall that $bin(x)$, for any non-negative integer $x$, is the binary representation
of $x$, $bin(T)$, for any labeled tree $T$, is the binary code of this tree, and $bin(L)$, for any nested list $L$, is the binary code of this list, as described previously. 

Conceptually, the advice consists of three items. The first item is the binary representation of the election index $\phi$. It serves the nodes to determine when to stop exchanging information
with their neighbors. The second item is $A_1=Concat (bin (E_1),bin(E_2))$, where $E_1$ and $E_2$ are as follows.
$E_1$ is a trie that permits to discriminate between all augmented truncated views at depth 1. Hence this is a labeled tree. $E_2$ is a list of couples
constructed using procedures {\tt RetrieveLabel} and {\tt BuildTrie}.  The $i$th couple of this list permits to discriminate between all augmented truncated views at depth $i+1$,
for $i<\phi$, using the previous couples and $E_1$. The couples in $E_2$ are of the form $(x,\lambda)$, where $x$ is an integer and $\lambda$ is a list of couples $(a,T_a)$,
where $a$ is a non-negative integer, and $T_a$ is a trie. Hence $E_2$ is a nested list.
Parts $E_1$ and $E_2$ together permit to discriminate between all augmented truncated views at depth $\phi$. 
The third item of the advice is $A_2$. This is the code of the canonical BFS tree of $G$ rooted at $r$, with each node $u$ labeled by 
{\tt RetrieveLabel}$(\cB^{\phi}(u),E_1,E_2)$, where $\cB^{\phi}(u)$ is the augmented truncated view in $G$. Each node will position itself in this tree, thanks to the label computed using $A_1$,
and then find the path to the root. 
The advice computed by the oracle and given to the nodes of the graph is
{\bf Adv}  $= Concat(bin(\phi),A_1,A_2)$.

\begin{algorithm}
{ \caption{${\tt ComputeAdvice}(G)$\label{alg:ComputeAdvice}}

$\phi \leftarrow$ election index of $G$\\
{$S_1 \leftarrow$} the set of all augmented truncated views at depth 1 in $G$\\
{\bf for} $i=1$ {\bf to} $\phi$ {\bf do}\\
\hspace*{1cm}{\bf if} $i=1$ {\bf then}\\  
\hspace*{2cm}$E_1 \leftarrow$ {\tt {BuildTrie$(S_1,\emptyset,())$}}\\
\hspace*{2cm}{$E_2(1) \leftarrow ()$}\\
\hspace*{1cm}{\bf else}\\
\hspace*{2cm}{$L(i)\leftarrow ()$ }\\
\hspace*{2cm} {\bf for all} augmented truncated views $\cB'$ at depth $i-1$ in $G$ {\bf do}\\ 
\hspace*{3cm}$N \leftarrow$ the set of nodes $u$ in $G$ for which $\cB^{i-1}(u)=\cB'$\\
\hspace*{3cm}$X \leftarrow$ the set of augmented truncated views $\cB^i(v)$ in $G$, for all $v\in N$\\
\hspace*{3cm}{\bf if} $|X| >1$ {\bf then}\\
\hspace*{4cm}{$j \leftarrow$ {\tt RetrieveLabel}$(\cB',E_1,E_2(i-1))$}\\
\hspace*{4cm}{$T_j \leftarrow$ {\tt BuildTrie}$(X,E_1,E_2(i-1))$}\\
\hspace*{4cm}{$L(i) \leftarrow$ $L(i)$ with the couple $(j,T_j)$ appended as the last term}\\
\hspace*{2cm}{$E_2(i) \leftarrow$ $E_2(i-1)$ with the couple $(i,L(i))$ appended as the last term}\\
{$E_2 \leftarrow E_2(\phi)$}\\
$r \leftarrow$ the node of $G$ such that {{\tt RetrieveLabel}$(\cB^{\phi}(r),E_1,E_2)=1$}\\ 
$A_1 \leftarrow Concat (bin (E_1),bin(E_2))$\\
{$T \leftarrow$ the canonical BFS tree of $G$ rooted at $r$, with each node $u$ labeled by}\\ 
{\hspace*{1cm}{\tt RetrieveLabel}$(\cB^{\phi}(u),E_1,E_2)$, where $\cB^{\phi}(u)$ is the augmented truncated view in $G$}\\
{$A_2 \leftarrow bin(T)$}\\
{\bf return} $Concat(bin(\phi),A_1,A_2)$
}
\end{algorithm}

The main algorithm {\tt Elect} uses advice {\bf Adv} $ =Concat(bin(\phi),A_1,A_2)$ given by the oracle, and is executed by a node $u$ of the graph. The node decodes the parts $\phi$, $E_1$, $E_2$, and $A_2$
of the advice from the obtained binary string {\bf Adv}. Then it acquires the augmented truncated view $\cB^{\phi}(u)$ in $\phi$ rounds. It assigns itself the unique label $x$ returned by {\tt RetrieveLabel}$(\cB^{\phi}(u), E_1,E_2)$.
It decodes the labeled tree coded by $A_2$, and positions itself in this tree, using the assigned label. Finally, it outputs the sequence of port numbers corresponding to the unique simple path in this tree
from the node with node $x$ to the root {(labeled 1)}.
Below is the pseudocode of the main algorithm.

\begin{algorithm}
{ \caption{{\tt Elect} \label{alg:Elect}}

{\bf for} $i=0$ {\bf to} $\phi -1$ {\bf do}\\
\hspace*{1cm}$COM(i)$\\
$x \leftarrow$ {\tt RetrieveLabel}$(\cB^{\phi}(u), E_1,E_2)$\\
{\bf output} the sequence of port numbers corresponding to the unique simple path in the tree coded by $A_2$,
from the node labeled $x$ to the node {labeled 1}

}
\end{algorithm}

The following theorem shows that Algorithm {\tt Elect}, executed on any  $n$-node graph, performs leader election in time equal to the election index of this graph, using advice of size $O(n\log n)$.

{\begin{theorem}\label{theo:smalltime}
For any $n$-node graph $G$ with election index $\phi$,  the following properties hold:
\begin{enumerate}
\item Algorithm {\tt ComputeAdvice(G)} terminates and returns a binary string of length $O(n\log n)$.
\item Using the advice returned by Algorithm {\tt ComputeAdvice(G)}, Algorithm {\tt Elect} performs leader election in time $\phi$.
\end{enumerate}
\end{theorem}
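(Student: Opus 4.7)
The plan is to treat both parts of the theorem together through a single inductive invariant. Denote by $N_i$ the number of distinct augmented truncated views $\cB^{i}(\cdot)$ in $G$; by Proposition~\ref{prop-index} one has $N_\phi=n$, and by Proposition~\ref{H} one has $\phi=O(n)$. I would prove, by induction on $i\in\{1,\dots,\phi\}$, the following invariant: after iteration $i$ of \texttt{ComputeAdvice} the data $(E_1,E_2(i))$ are well-defined and, for every node $u$ of $G$, the call \texttt{RetrieveLabel}$(\cB^{i}(u),E_1,E_2(i))$ halts and returns an integer in $\{1,\dots,N_i\}$, yielding distinct values on nodes with distinct $\cB^{i}$. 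Termination of the three auxiliary procedures is immediate: \texttt{BuildTrie} recurses on a proper non-empty partition (which I would verify by checking that the discriminatory index/subview, or the length/bit query in the $E_1=\emptyset$ branch, always cuts a multi-element input into two non-empty parts), \texttt{LocalLabel} recurses on a subtree of the trie, and \texttt{RetrieveLabel} recurses on strictly smaller depth.

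For the base case $i=1$, \texttt{LocalLabel}$(\cB,(),E_1)$ navigates $E_1$ with length/bit queries on $bin(\cB)$, and distinct inputs end in distinct leaves by construction of \texttt{BuildTrie}. For the inductive step, the key observation is that each query $(i',\texttt{RetrieveLabel}(\cB_{disc},\ldots))$ stored at an internal node of a trie $T_j$ is evaluated inside \texttt{LocalLabel}$(\cB,X,T_j)$ by comparing it to the $(i'+1)$-th entry of $X$, namely $\texttt{RetrieveLabel}(\cB^{\ell-1}(v_{i'}),E_1,E_2(i-1))$; by the induction hypothesis this value is injective in $\cB^{\ell-1}(v_{i'})$ and therefore faithfully reproduces the trie's built-in comparison between the neighbor's depth-$(\ell-1)$ view and $\cB_{disc}$. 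The rank-accumulation in the \texttt{for} loop of \texttt{RetrieveLabel} then returns the rank of $\cB^{i}(u)$ in the order that groups views first by their depth-$(i-1)$ label and then breaks ties inside each non-singleton group via \texttt{LocalLabel}; this rank is manifestly injective in $\cB^{i}(u)$ and lies in $\{1,\dots,N_i\}$.

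For the size bound, $bin(\phi)$ contributes $O(\log n)$. Propositions~\ref{pro:bfs}, \ref{pro:trie} and \ref{pro:longueur} directly handle $A_2=bin(T)$ (a BFS tree on $n$ nodes with labels in $\{1,\dots,n\}$) and $bin(E_1)$ (at most $N_1\le n$ leaves, queries of magnitude $O(n\log n)$). The delicate step is verifying the hypotheses of Proposition~\ref{pro:nested} for $E_2$: its length is $\phi-1=O(n)$ and all integers in its couples are $O(n)$, but the total trie size over the $\phi-1$ levels could a priori be as large as $\Theta(\phi n)=\Theta(n^2)$. The crucial telescoping observation is this: writing $X(\cB')$ for the set of depth-$i$ views of nodes whose depth-$(i-1)$ view equals $\cB'$, the sum of the leaf counts of the tries inserted at level $i$ satisfies
\[
\sum_{\cB'\,:\,|X(\cB')|>1}|X(\cB')|\;=\;(N_i-N_{i-1})+\#\{\cB'\,:\,|X(\cB')|>1\},
\]
and since each $\cB'$ with $|X(\cB')|>1$ contributes at least $1$ to $N_i-N_{i-1}$, summing over $i=2,\dots,\phi$ gives a total leaf count at most $2(N_\phi-N_1)\le 2n$, hence total trie size $O(n)$. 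Proposition~\ref{pro:nested} then yields $bin(E_2)=O(n\log n)$, so the advice is $O(n\log n)$ overall.

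For part 2, after the $\phi$ executions of $COM(\cdot)$ each node $u$ holds $\cB^{\phi}(u)$; by the invariant at $i=\phi$, $x=\texttt{RetrieveLabel}(\cB^{\phi}(u),E_1,E_2)$ is a distinct integer in $\{1,\dots,n\}$ matching the label carried by $u$ in the BFS tree $T$ encoded by $A_2$. Each node locates itself in $T$, reads off the port-number sequence along the unique simple path to the vertex labeled $1$, and outputs it; since $T$ records the actual port numbers at both endpoints of every edge, these sequences correspond to simple paths in $G$ all terminating at the common root. The total running time is exactly $\phi$. The principal obstacle is the trie-size telescoping, as it is the only point where super-linear blow-up in $\phi$ has to be ruled out in order to keep the advice within the $O(n\log n)$ budget.
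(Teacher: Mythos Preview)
Your proposal is correct and takes essentially the same approach as the paper: an induction on the depth $i$ showing that \texttt{RetrieveLabel}$(\cB^{i}(\cdot),E_1,E_2(i))$ injects into $\{1,\dots,N_i\}$ (the paper splits this into Claims~\ref{claim1}--\ref{claim7}), followed by the same telescoping bound on the total trie size across levels (the paper's equations (1)--(13)). Your telescoping is in fact slightly cleaner, reaching a total leaf count of $2(N_\phi-N_1)$ where the paper obtains a total node count of $3(|\mathcal{S}_\phi|-|\mathcal{S}_2|)$.
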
}

\begin{proof}
%\begin{otherlanguage}{french}
In order to prove Part 1, it is enough to show that the values of variables $E_1$, $E_2$ and $T$ in the Algorithm ${\tt ComputeAdvice}(G)$ are computed in finite time, and that the length of the binary string $Concat(bin(\phi),A_1,A_2)$ (where $A_1= Concat (bin (E_1),bin(E_2))$ and $A_2=bin(T)$) is in  $O(n \log n)$. 
We first show that the length of  $bin (E_1)$ is in $O(n \log n)$.  In what follows, for any integer  $x\geq 0$,  we denote by $\mathcal{S}_x$ the set of all augmented truncated views at depth $x$ in $G$. 

We will use the following claims. 

\begin{claim}\label{claim1}
For any $S\subseteq \mathcal{S}_1$ such that $|S|\geq 1$, the procedure {\tt BuildTrie}$(S,\emptyset,())$ terminates and returns a trie of size $2|S|-1$ with exactly $S$ leaves. The leaves of this trie are labeled by $(0)$ and the internal nodes are labeled by queries of the form $(a,b)$, where $a$ and $b$ are integers in $O(n \log n)$. 
\end{claim}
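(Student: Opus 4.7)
The plan is to prove Claim~\ref{claim1} by induction on $|S|$. The base case $|S|=1$ is immediate: the procedure falls into the first conditional and returns a single node labeled $(0)$, which is a trie of size $2\cdot 1-1=1$ consisting of exactly one leaf (labeled $(0)$) and no internal nodes. All of the assertions therefore hold.

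For the inductive step with $|S|\geq 2$, since $E_1=\emptyset$, the algorithm enters the corresponding branch. The first thing I would verify is that the split into $S'$ and $S\setminus S'$ is proper, i.e., that both sets are non-empty strict subsets of $S$. Because $S\subseteq\mathcal{S}_1$ is a set of pairwise distinct augmented truncated views at depth~$1$ and the coding $bin(\cdot)$ is canonical (hence injective on augmented truncated views), the binary representations of the elements of $S$ are pairwise distinct. If some of them have different lengths, then $S'$ excludes every view attaining the maximum length but contains every strictly shorter one, so $\emptyset\subsetneq S'\subsetneq S$. Otherwise all binary representations have the same length; being distinct they must disagree in some bit position, and the least such index $j$ produces a split into views with $j$-th bit $0$ and views with $j$-th bit $1$, with both sides non-empty by the very choice of $j$. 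So the two recursive calls operate on strictly smaller non-empty inputs, the recursion terminates, and the inductive hypothesis applies.

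Granting this, the size count follows by induction: the returned trie consists of a root, a left subtrie of size $2|S'|-1$, and a right subtrie of size $2|S\setminus S'|-1$, for a total of $1+(2|S'|-1)+(2|S\setminus S'|-1)=2|S|-1$. The leaves of the returned trie are exactly the leaves of the two subtries, which by induction are all labeled $(0)$ and number $|S'|+|S\setminus S'|=|S|$. The root is labeled either $(0,\textit{max})$ or $(1,j)$, where $\textit{max}$ is the length of the longest binary representation in $S$ and $j\leq \textit{max}$; by Proposition~\ref{pro:longueur}, both are in $O(n\log n)$, and internal labels inside the two subtries satisfy the same bound by induction.

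The main technical point requiring care is the argument that $S'$ is always a proper non-empty subset of $S$: if it were empty or equal to $S$, the recursion would make no progress and both the $2|S|-1$ size bound and termination would fail. This in turn rests entirely on the injectivity of the canonical encoding $bin(\cdot)$ on $\mathcal{S}_1$, which is what licenses the exhaustive dichotomy (different lengths versus a differing bit position) used by the algorithm.
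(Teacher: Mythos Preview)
Your proof is correct and follows essentially the same approach as the paper: induction on $|S|$, with the base case immediate and the inductive step relying on the split into $S'$ and $S\setminus S'$ being proper (via the distinct-length or distinct-bit dichotomy), then adding up sizes and leaf counts and invoking Proposition~\ref{pro:longueur} to bound the root label. The only organizational difference is that the paper proves Claims~\ref{claim1} and~\ref{claim2} by simultaneous induction, but the argument for Claim~\ref{claim1} does not actually rely on Claim~\ref{claim2}, so your standalone treatment is equally valid.
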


\begin{claim}\label{claim2}
For any $S\subseteq \mathcal{S}_1$ such that $|S|\geq 1$, and for any  $ \cB\in S$, the procedure\\  {\tt LocalLabel}$(\cB$,$()$,{\tt BuildTrie}$(S,\emptyset,()))$ terminates and returns an integer from $\{1,2,\dots,|S|\}$. Moreover, for any $ \cB'\in S$ such that $\cB\ne\cB'$, the integers {\tt LocalLabel}$(\cB$,$()$,{\tt BuildTrie}$(S,\emptyset,()))$ and {\tt LocalLabel}$(\cB'$,$()$,{\tt BuildTrie}$(S,\emptyset,()))$ are different.
\end{claim}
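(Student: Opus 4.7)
The plan is to proceed by strong induction on $|S|$, exploiting the fact (given by Claim~\ref{claim1}) that $T := \mathtt{BuildTrie}(S,\emptyset,())$ is a binary trie with exactly $|S|$ leaves, all labeled $(0)$, and with internal nodes labeled by discriminating queries. The core observation is that $\mathtt{BuildTrie}$ and $\mathtt{LocalLabel}$ are, by design, \emph{dual} recursions over the very same trie structure: at each internal node, $\mathtt{BuildTrie}$ partitions the current set of views via a query $(x,y)$ into the subset $S'$ that goes into the left subtree $T_l$ and the subset $S\setminus S'$ that goes into the right subtree $T_r$, while $\mathtt{LocalLabel}$ evaluates exactly that same query on the input view $\cB$ and recurses into the corresponding subtree. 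Once this duality is made precise, the claim reduces to a straightforward bookkeeping argument about the ranges of returned integers.

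For the base case $|S|=1$, the trie is a single node, $\mathtt{LocalLabel}$ immediately returns $1 \in \{1\}$, and the distinctness clause is vacuous. For the inductive step, I would first establish the following routing lemma: if $\cB \in S'$, then the test in $\mathtt{LocalLabel}$ sets $left = \mathit{true}$, while if $\cB \in S \setminus S'$, it sets $left = \mathit{false}$. Here one has to check both cases of the query carefully:
\begin{itemize}
\item If $x=0$, then $S'$ was defined as the subset of views whose binary code is strictly shorter than $y$, and $\mathtt{LocalLabel}$'s test is exactly ``$\mathrm{length}(bin(\cB)) < y$''.
\item If $x=1$, then $S'$ was defined as the subset of views whose $j$-th bit of $bin(\cB)$ equals $0$, and $\mathtt{LocalLabel}$'s test is precisely that.
\end{itemize}
Note that, since $E_1 = \emptyset$ and the third argument to $\mathtt{LocalLabel}$ starts from the root of $T$, the variable $X$ remains the empty list throughout the recursion, so only these two branches are ever taken. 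This is where I expect the main (purely bookkeeping) obstacle: checking that the ``empty-list'' branch of $\mathtt{LocalLabel}$ is the only one used, and that each sub-case of the query really corresponds to the partition computed by $\mathtt{BuildTrie}$.

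Assuming the routing lemma, the inductive step is immediate. By Claim~\ref{claim1}, $T_l$ has $|S'|$ leaves and $T_r$ has $|S \setminus S'|$ leaves, so by the inductive hypothesis applied to $S'$ (with trie $T_l$) and to $S \setminus S'$ (with trie $T_r$), the recursive calls return integers in $\{1,\dots,|S'|\}$ and $\{1,\dots,|S\setminus S'|\}$, respectively, and are injective on each subset. If $\cB \in S'$, the procedure returns $\mathtt{LocalLabel}(\cB,(),T_l) \in \{1,\dots,|S'|\}$; if $\cB \in S\setminus S'$, it returns $|S'| + \mathtt{LocalLabel}(\cB,(),T_r) \in \{|S'|+1,\dots,|S|\}$. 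The two output ranges are disjoint and together cover $\{1,\dots,|S|\}$, which simultaneously yields termination, the range property, and injectivity for distinct $\cB, \cB' \in S$ (whether they fall on the same side of the partition, in which case the inductive injectivity applies, or on opposite sides, in which case the disjoint ranges separate them). This completes the plan.
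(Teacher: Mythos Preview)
Your proposal is correct and follows essentially the same approach as the paper: both argue by induction on $|S|$, use the routing correspondence between the query chosen by \texttt{BuildTrie} and the test performed by \texttt{LocalLabel} (with $X=()$ throughout), and then combine the disjoint output ranges $\{1,\dots,|S'|\}$ and $\{|S'|+1,\dots,|S|\}$ to get both the range bound and injectivity. The only cosmetic difference is that the paper runs the induction simultaneously with Claim~\ref{claim1}, whereas you invoke Claim~\ref{claim1} as an already-established fact to read off the number of leaves of $T_l$ and $T_r$; this is harmless since Claim~\ref{claim1} for $S'$ and $S\setminus S'$ is available by the (smaller) inductive hypothesis either way.
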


The two claims are proved by simultaneous induction on the size of $S$. We first prove them for $|S|=1$. In this case {\tt BuildTrie}$(S,\emptyset,())$ consists of a single node with label $(0)$, and the integer {\tt LocalLabel}($\cB$,$()$,{\tt BuildTrie}$(S,\emptyset,()))$ is 1, where $\cB$ is the unique element of $S$. Hence, both claims hold,
if $|\mathcal{S}_1|=1$. Consider the case $|\mathcal{S}_1|\geq 2$, and suppose, by the inductive hypothesis, that, for some integer $k\in \{1,2,\dots,|\mathcal{S}_1|-1\}$, both claims hold when $1\leq|S|\leq k$. We prove that they hold for $|S|=k+1$. We have $|S|\geq 2$ and there are two cases. The first case is when there exist
$\cB_1,\cB_2 \in S$, such that the lengths of $bin(\cB_1)$ and $bin(\cB_2)$ are different. The second case is when all codes $bin(\cB)$, for $\cB \in S$, have equal length, but there exist $\cB_1,\cB_2 \in S$, and an index $j$, such that the $j$th bits of $bin(\cB_1)$ and $bin(\cB_2)$ are different.We consider only the first case, as the second one can be proved similarly. Denote by $max$ the largest among lengths of codes $bin(\cB)$, for $\cB \in S$, and denote by $S'$ the subset of $S$ consisting of those augmented truncated views $\cB \in S$, for which $bin(\cB)$ has length smaller than $max$.

By the inductive hypothesis, procedure {\tt BuildTrie}$(S,\emptyset,())$ terminates and  returns a binary tree with the root labeled $(0,max)$,  whose left child is 
the root of {\tt BuildTrie}$(S',\emptyset,())$ and whose right child is the root of {\tt BuildTrie}$(S\setminus S',\emptyset,())$. Since codes of elements of $S$ are not all of the same length, we have $1\leq |S'|< |S|=k+1$. Hence, the number of nodes and the number of leaves of {\tt BuildTrie}$(S',\emptyset,())$ are, respectively, $2|S'|-1$ and $|S'|$. Also,  the number of nodes and the number of leaves of {\tt BuildTrie}$(S\setminus S',\emptyset,())$ are, respectively, $2|S\setminus S'|-1$ and $|S\setminus S'|$.
Hence,  the number of nodes and the number of leaves of {\tt BuildTrie}$(S,\emptyset,())$ are, respectively, $2|S|-1$ and $|S|$. Moreover, by the inductive hypothesis, 
all leaves of {\tt BuildTrie}$(S',\emptyset,())$ and of {\tt BuildTrie}$(S\setminus S',\emptyset,())$ have label $(0)$, while internal nodes of these tries are labeled by couples $(a,b)$, where $a$ and $b$ are integers  in $O(n \log n)$. Since the integer $max$ used in the label $(0,max)$ of the root of {\tt BuildTrie}$(S,\emptyset,())$,
is in $O(n \log n)$, by Proposition~\ref{pro:longueur}, this implies that  Claim \ref{claim1} holds for $|S|=k+1$.

Concerning Claim \ref{claim2}, consider distinct views $\cB$ and $\cB'$ from $S$, and suppose, without loss of generality, that the length of $bin(\cB)$ is not smaller than the length of $bin(\cB')$.
If the length of $bin(\cB')$ is equal to $max$, then, by the inductive hypothesis {and by Algorithms~\ref{alg:LocalLabel} and~\ref{alg:BuildTrie}, we have}\\ {\tt LocalLabel}($\cB'$,$()$,{\tt BuildTrie}$(S,\emptyset,()))=$ $|S'|+$ {\tt LocalLabel}($\cB'$,$()$,{\tt BuildTrie}$(S\setminus S',\emptyset,()))$ and \\{\tt LocalLabel}($\cB$,$()$,{\tt BuildTrie}$(S,\emptyset,()))=$ $|S'|+$ {\tt LocalLabel}($\cB$,$()$,{\tt BuildTrie}$(S\setminus S',\emptyset,()))$. Again in view of the inductive hypothesis, the integers {\tt LocalLabel}($\cB$,$()$,{\tt BuildTrie}$(S\setminus S',\emptyset,()))$ and {\tt LocalLabel}($\cB'$,$()$,{\tt BuildTrie}$(S\setminus S',\emptyset,()))$
both belong to the set $\{1,\dots,|S\setminus S'|\}$ and are different. Hence, if the length of $bin(\cB')$ is equal to $max$, then Claim~\ref{claim2} holds for $|S|=k+1$. It remains to consider the case when
the length of $bin(\cB')$  is smaller than $max$. In this case we have {\tt LocalLabel}($\cB'$,$()$,{\tt BuildTrie}$(S,\emptyset,()))=$ {\tt LocalLabel}($\cB'$,$()$,{\tt BuildTrie}$(S',\emptyset,()))$, and this integer belongs to the set $\{1,\dots,|S'|\}$. If the length of $bin(\cB)$ is $max$, a similar argument as above implies that {\tt LocalLabel}($\cB$,$()$,{\tt BuildTrie}$(S,\emptyset,()))$ belongs to the set $\{|S'|+1,\dots,|S'|+|S\setminus S'|\}$, and Claim \ref{claim2} holds for $|S|=k+1$. Otherwise, we have {\tt LocalLabel}($\cB$,$()$,{\tt BuildTrie}$(S,\emptyset,()))=$ {\tt LocalLabel}($\cB$,$()$,{\tt BuildTrie}$(S',\emptyset,()))$.  However, by the inductive hypothesis {and by Algorithms~\ref{alg:LocalLabel} and~\ref{alg:BuildTrie}}, {\tt LocalLabel}($\cB$,$()$,{\tt BuildTrie}$(S',\emptyset,()))$ belongs to the set $\{1,\dots,|S'|\}$ but is different from {\tt LocalLabel}($\cB'$,$()$,{\tt BuildTrie}$(S',\emptyset,()))$. Hence Claim \ref{claim2} holds for $|S|=k+1$ in this case as well. 

We proved that Claims \ref{claim1} and~\ref{claim2} hold for $|S|=k+1$, which completes the proof of these claims by induction.

In Algorithm {\tt ComputeAdvice} we have $E_1=$ {\tt BuildTrie}$(S_1,\emptyset,())$. Hence, Claim \ref{claim1} and Proposition~\ref{pro:trie} imply the following claim.

\begin{claim}\label{claim3}
The computations of variables
$E_1$ and $E_2(1)$ in Algorithm {\tt ComputeAdvice}$(G)$ terminate, $E_2(1)$ is the empty list, and the length of $bin(E_1)$ is in $O(n \log n)$.
\end{claim}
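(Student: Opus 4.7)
The plan is to derive Claim~\ref{claim3} directly from Claim~\ref{claim1} and Proposition~\ref{pro:trie} combined with an inspection of the first iteration of the main loop in Algorithm~{\tt ComputeAdvice}. The claim has three components: termination of the computations of $E_1$ and $E_2(1)$, the assertion that $E_2(1)$ is the empty list, and the $O(n\log n)$ length bound on $bin(E_1)$.

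For termination and the emptiness of $E_2(1)$, I would simply read off the pseudocode of ${\tt ComputeAdvice}(G)$. When $i=1$, the branch ``{\bf if} $i=1$'' is taken, which executes exactly the two assignments $E_1 \leftarrow {\tt BuildTrie}(S_1,\emptyset,())$ and $E_2(1) \leftarrow ()$. The second assignment sets $E_2(1)$ to the empty list by definition and terminates in a single step. The first assignment terminates by Claim~\ref{claim1}, applied to $S=S_1\subseteq \mathcal{S}_1$, which guarantees that the call ${\tt BuildTrie}(S_1,\emptyset,())$ terminates (the claim implicitly requires $|S_1|\geq 1$, which is immediate because every node of $G$ contributes an augmented truncated view at depth~$1$, so $1\leq |S_1|\leq n$).

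For the length bound on $bin(E_1)$, I would feed the structural information from Claim~\ref{claim1} into Proposition~\ref{pro:trie}. By Claim~\ref{claim1}, the trie $E_1={\tt BuildTrie}(S_1,\emptyset,())$ has size $2|S_1|-1=O(n)$, its leaves carry label $(0)$, and every internal node is labeled by a pair $(a,b)$ with $a,b$ integers in $O(n\log n)$. These are precisely the hypotheses required by Proposition~\ref{pro:trie}, which then yields $|bin(E_1)|=O(n\log n)$.

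The only subtlety worth mentioning is that one must make sure Claim~\ref{claim1} and Proposition~\ref{pro:trie} are being invoked on legitimate arguments: the former needs a nonempty subset of $\mathcal{S}_1$, and the latter needs simultaneous bounds on both the trie size and the magnitudes of the query integers. Both conditions hold trivially once we observe $|S_1|\leq n$. Consequently, I do not anticipate any real obstacle: the claim is essentially a bookkeeping step that packages the base case of the eventual induction needed for the full proof of Theorem~\ref{theo:smalltime}, with all conceptual work already done inside Claim~\ref{claim1} (characterizing the output of {\tt BuildTrie} on depth-$1$ view sets) and Proposition~\ref{pro:trie} (translating trie structure into an encoding-length bound).
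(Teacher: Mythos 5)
Your proposal is correct and follows exactly the paper's own route: the paper derives Claim~\ref{claim3} directly from Claim~\ref{claim1} (which supplies the size $2|S_1|-1$, the leaf labels, and the $O(n\log n)$ bound on query integers) together with Proposition~\ref{pro:trie}, with the $E_2(1)=()$ and termination parts read off from the $i=1$ branch of Algorithm~{\tt ComputeAdvice}. Your added bookkeeping about $1\le|S_1|\le n$ and about matching the hypotheses of the two cited results is exactly the (implicit) glue the paper relies on.
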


In view of Algorithm \ref{alg:RetrieveLabel},  we have
 {\tt RetrieveLabel}$(\cB, E_1,())=$  {\tt LocalLabel}($\cB$,$()$,$E_1$), for every $\cB$ in $\mathcal{S}_1$.  Claim~\ref{claim2} and the equality $E_1=$ {\tt BuildTrie}$(\mathcal{S}_1,\emptyset,())$ imply the following claim.
 
\begin{claim}\label{claim4}
For every $\cB\in \mathcal{S}_1$, the procedure {\tt RetrieveLabel}$(\cB, E_1,())$ terminates and returns a value belonging to $\{1,\dots,|\mathcal{S}_1|\}$. For all $\cB' \neq \cB$ from $\mathcal{S}_1$, we have {\tt RetrieveLabel}$(\cB, E_1,())$ $\ne$ {\tt RetrieveLabel}$(\cB', E_1,())$.
\end{claim}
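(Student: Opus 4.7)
The plan is to derive Claim~\ref{claim4} as an immediate specialization of Claim~\ref{claim2}, by combining the definition of {\tt RetrieveLabel} on a depth-$1$ input with the explicit value of $E_1$ fixed by Algorithm~\ref{alg:ComputeAdvice}. No new induction or combinatorial argument is needed at this stage; the role of Claim~\ref{claim4} is simply to restate, in terms of {\tt RetrieveLabel} and the full set $\mathcal{S}_1$, what Claim~\ref{claim2} already delivers in terms of {\tt LocalLabel} and an arbitrary subset $S \subseteq \mathcal{S}_1$.

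First I would observe that in Algorithm~\ref{alg:ComputeAdvice} the variable $E_1$ is set to ${\tt BuildTrie}(\mathcal{S}_1,\emptyset,())$. Next I would unfold Algorithm~\ref{alg:RetrieveLabel} on an input $\cB$ of depth $d = 1$: the very first conditional fires, so ${\tt RetrieveLabel}(\cB,E_1,())$ returns ${\tt LocalLabel}(\cB,(),E_1)$, which, after substituting the value of $E_1$, is exactly ${\tt LocalLabel}(\cB,(), {\tt BuildTrie}(\mathcal{S}_1,\emptyset,()))$. At this point I would apply Claim~\ref{claim2} with $S := \mathcal{S}_1$ (the hypothesis $|S|\ge 1$ being trivial since $G$ has at least one node): the call terminates and returns a value in $\{1,\dots,|\mathcal{S}_1|\}$ for every $\cB\in\mathcal{S}_1$, and returns distinct values on distinct inputs $\cB\neq\cB'$ in $\mathcal{S}_1$. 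Rewriting each conclusion back in terms of {\tt RetrieveLabel} via the identity above yields both the range-plus-termination assertion and the injectivity assertion of Claim~\ref{claim4}.

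The only thing requiring care is checking that the hypotheses of Claim~\ref{claim2} are literally met after the substitution: the second and third arguments of the {\tt LocalLabel} call here are $()$ and ${\tt BuildTrie}(\mathcal{S}_1,\emptyset,())$, which match the template of Claim~\ref{claim2} with $S = \mathcal{S}_1$ exactly. There is no genuine obstacle: because $d = 1$, the recursive \emph{else} branch of {\tt RetrieveLabel} is not entered, so the argument does not need to reason about the list $E_2$ or about views at depth $\ge 2$; Claim~\ref{claim4} is a pure corollary of Claim~\ref{claim2}, and that is all the present statement is intended to record before the induction on $i$ begins.
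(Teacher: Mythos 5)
Your proof is correct and mirrors the paper's own argument: both reduce Claim~\ref{claim4} to Claim~\ref{claim2} by unfolding the $d=1$ base case of {\tt RetrieveLabel} to obtain ${\tt RetrieveLabel}(\cB,E_1,())={\tt LocalLabel}(\cB,(),E_1)$ and then substituting $E_1={\tt BuildTrie}(\mathcal{S}_1,\emptyset,())$ so that Claim~\ref{claim2} with $S=\mathcal{S}_1$ applies directly. Nothing is missing.
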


Notice that, if the election index of the graph $G$ is 1, then in Algorithm {\tt ComputeAdvice}$(G)$ we have $E_2=E_2(1)=()$, in view of Claim~\ref{claim3}. Hence, in view
of Claim~\ref{claim4}, the computation of the labeled BFS tree $T$ terminates in this algorithm, and labels of nodes of $T$ belong to $O(n)$, since $|\mathcal{S}_1|\leq n$. Hence,  Proposition~\ref{pro:bfs} and  Claim~\ref{claim3} imply that, if $\phi=1$, then computations of variables $E_1$, $E_2$ and $T$ in Algorithm ${\tt ComputeAdvice}(G)$ terminate, and the length of the returned binary string $Concat(bin(\phi),A_1,A_2)$ (where $A_1= Concat (bin (E_1),bin(E_2))$ and $A_2=bin(T)$)
belongs to $O(n \log n)$. This proves Part 1 of our theorem for $\phi=1$. We continue the proof of this part assuming that $\phi\geq 2$.
In the rest of this proof, for all integers $i\in \{2,\dots,\phi\}$, and for all integers $j$, we denote by $\mathcal{S}_i(j)$ the set of all augmented truncated views at depth $i$ of all nodes $u$ of $G$, such that {\tt RetrieveLabel}$(\cB^{i-1}(u), E_1,E_2(i-1))=j$. 

We will use the following three claims.

\begin{claim}\label{claim5}
For every integer $i\in \{2,\dots,\phi\}$, Algorithm {\tt ComputeAdvice}$(G)$ terminates the computation of variable $E_2(i)$, and assigns to it
the value $((2,L(2)),(3,L(3)),\dots,(i,L(i)))$, such that for all $k\in \{2,\dots,i\}$ the following four properties hold:
\begin{itemize}
\item {\bf Property~1.} $L(k)$ is a list of distinct couples $(j,T_j)$, such that $j$ is a non-negative integer and $T_j$ is a trie.
\item {\bf Property~2.} For any couples $(j,T_j)$ and $(j',T_{j'})$ of $L(k)$, such that $(j,T_j)$ and $(j',T_{j'})$ are not at the same position in $L(k)$, we have $j\ne j'$.
\item {\bf Property~3.} For every couple $(j,T_j)$ of $L(k)$, we have $T_j=$ {\tt BuildTrie}$(\mathcal{S}_k(j),E_1,E_2(k-1))$.
%where $\mathcal{S}_k(j)$ is the set of all augmented truncated views at depth $k$ of all nodes $u$ of $G$ such that {\tt RetrieveLabel}$(\cB^{k-1}(u), E_1,E_2(k-1))=j$.
\item {\bf Property~4.} There is a couple $(j,T_j)$ in $L(k)$ if and only if {$|\mathcal{S}_k(j)|\geq 2$}.
\end{itemize}
\end{claim}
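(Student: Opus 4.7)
The plan is to proceed by induction on $i \in \{2,\dots,\phi\}$, taking Claims~\ref{claim3} and~\ref{claim4} as the starting point and carefully tracking how each iteration of the outer loop of Algorithm~\ref{alg:ComputeAdvice} extends the previously constructed value $E_2(i-1)$ into $E_2(i)$.

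For the base case $i=2$, Claim~\ref{claim3} provides $E_2(1) = ()$, and Claim~\ref{claim4} establishes that {\tt RetrieveLabel}$(\cB',E_1,())$ terminates and returns a distinct integer for every $\cB' \in \mathcal{S}_1$. Reading off the inner loop body: when the algorithm iterates over $\cB' \in \mathcal{S}_1$, the set $N$ is the preimage of $\cB'$ under $\cB^1$ and $X$ is exactly $\mathcal{S}_2(j)$, where $j$ is the returned label. A couple $(j,T_j)$ with $T_j={}${\tt BuildTrie}$(X,E_1,())$ is appended iff $|X|\geq 2$. Property~2 is then immediate from the injectivity supplied by Claim~\ref{claim4}, while Properties~1, 3, and 4 follow directly from the algorithm's construction once one verifies that {\tt BuildTrie}$(X,E_1,())$ terminates; I would do this by induction on $|X|$, using the fact that recursive calls operate on strictly smaller subsets and, within their body, invoke {\tt RetrieveLabel} only on strictly smaller-depth views, eventually bottoming out at Claim~\ref{claim2}.

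For the inductive step, I assume $E_2(i-1)$ has been constructed with the claimed four properties and analyze the $i$-th iteration of the outer loop. The key termination subtask is showing that {\tt RetrieveLabel}$(\cB',E_1,E_2(i-1))$ and {\tt BuildTrie}$(X,E_1,E_2(i-1))$ are well-defined. Termination of {\tt RetrieveLabel} on views at depth $d\leq i-1$ follows by an inner induction on $d$: the depth-1 case is Claim~\ref{claim4}, and at each higher depth the recursive calls consult precisely the sub-list $L(d)$ whose existence and structure are guaranteed by the outer inductive hypothesis. Termination of {\tt BuildTrie} follows by an inner induction on $|S|$, since each recursive call operates on a strictly smaller subset and invokes {\tt RetrieveLabel} only at a strictly smaller depth. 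Properties~1, 3, and 4 for $L(i)$ then follow by inspection of the algorithm, as couples are appended only when $|X|\geq 2$, and $T_j$ is set to {\tt BuildTrie}$(\mathcal{S}_i(j),E_1,E_2(i-1))$ by the construction of $X$ from the partition of nodes by their depth-$(i-1)$ views.

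The main obstacle is Property~2, which requires {\tt RetrieveLabel}$(\cdot,E_1,E_2(i-1))$ to take distinct values on distinct views $\cB'_1,\cB'_2 \in \mathcal{S}_{i-1}$, ruling out a label collision among the first terms of $L(i)$. My approach is to strengthen the induction with a parallel statement: for every $d \in \{1,\dots,i-1\}$, the function {\tt RetrieveLabel}$(\cdot,E_1,E_2(i-1))$ restricted to $\mathcal{S}_d$ is injective with range in $\{1,\dots,|\mathcal{S}_d|\}$. The inductive step for this auxiliary claim splits two distinct views $\cB_1,\cB_2 \in \mathcal{S}_d$ according to whether their depth-$(d-1)$ truncations agree. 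When the truncations differ, the inner inductive hypothesis at depth $d-1$ forces the root labels $label_1, label_2$ computed by {\tt RetrieveLabel} to differ, so the summation stops at different iterations and produces different outputs. When the truncations agree, both views lie in a common $\mathcal{S}_d(j)$ with $|\mathcal{S}_d(j)|\geq 2$, so Property~4 at level $d$ guarantees that $L(d)$ contains the couple $(j,T_j)$ with $T_j={}${\tt BuildTrie}$(\mathcal{S}_d(j),E_1,E_2(d-1))$, and an argument analogous to Claim~\ref{claim2}---applied inductively on the size of the set being discriminated, with child-label lists furnished by the very same strengthened hypothesis at depth $d-1$---yields distinct {\tt LocalLabel} outputs on $\cB_1$ and $\cB_2$. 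Applying this strengthened statement at $d=i-1$ delivers Property~2 for $L(i)$ and completes the inductive step.
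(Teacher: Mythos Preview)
Your proposal is correct and follows essentially the same route as the paper. The paper proves Claim~\ref{claim5} by \emph{simultaneous} induction on $i$ together with Claims~\ref{claim6} (structure and discriminating power of {\tt BuildTrie}) and~\ref{claim7} (injectivity and range of {\tt RetrieveLabel} on $\mathcal{S}_k$); you have effectively rediscovered this necessity by folding the analogues of Claims~\ref{claim6} and~\ref{claim7} into your strengthened inductive hypothesis (your ``parallel statement'' on injectivity of {\tt RetrieveLabel} at each depth $d\le i-1$, and your inner induction on $|S|$ for {\tt BuildTrie}/{\tt LocalLabel}).
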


\begin{claim}\label{claim6}
For every integer $i\in \{2,\dots,\phi\}$ and for every integer $j$,  the following  properties hold:
\begin{itemize}
\item {\bf Property~1.} For every $S\subseteq \mathcal{S}_i(j)$, such that $|S|\geq 1$, the procedure {\tt BuildTrie}$(S,E_1,E_2(i-1))$ terminates and returns a trie of size $2|S|-1$ with exactly $S$ leaves. The leaves of this trie are labeled by $(0)$ and its internal nodes are labeled by queries of the form $(a,b)$ where $a$ and $b$ are {integers} in $O(n)$.
\item {\bf Property~2.} For every $S\subseteq \mathcal{S}_i(j)$ such that $|S|\geq 1$, and for every node $u$ of $G$ such that $\cB^i(u)\in S$,  let $X_u$ be the list\\ $(${\tt RetrieveLabel}$(\cB^{i-1}(u_0, E_1,E_2(i-1))$,$\dots$,{\tt RetrieveLabel}$(\cB^{i-1}(u_{deg(u)-1}, E_1,E_2(i-1))$$)$, where $deg(u)$ is the degree of $u$ and, for all $0\leq l \leq deg(u)-1$, $u_l$ is the neighbor of $u$ corresponding to port $l$ at $u$.
\begin{itemize}
\item {\bf Property~2.1} The  list $X_u$ is not empty.
\item {\bf Property~2.2} The procedure {\tt LocalLabel}($\cB^i(u)$,$X_u$,{\tt BuildTrie}$(S,E_1,E_2(i-1)))$ terminates and returns an integer belonging to $\{1,\dots,|S|\}$.
\item {\bf Property~2.3} For every node $u'$, such that $\cB^i(u')\in S$ and $\cB^i(u)\ne\cB^i(u')$, we have {\tt LocalLabel}($\cB^i(u)$,$X_u$,{\tt BuildTrie}$(S,E_1,E_2(i-1)))$\\ $\ne$ 
{\tt LocalLabel}($\cB^i(u')$,$X_{u'}$,{\tt BuildTrie}$(S,E_1,E_2(i-1)))$.
\end{itemize}
\end{itemize}
\end{claim}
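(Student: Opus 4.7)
The plan is to prove Claim~\ref{claim6} by induction on $i\in\{2,\ldots,\phi\}$, assuming Claims~\ref{claim3}--\ref{claim5} and the inductive instance of Claim~\ref{claim6} at depth $i-1$, together with a preliminary ``injectivity'' property of {\tt RetrieveLabel} at every depth $d\leq i-1$. Concretely, before attacking level $i$ I would first show by a subsidiary induction on $d\in\{1,\ldots,i-1\}$ that for every augmented truncated view $\cB$ at depth $d$ of some node of $G$, the call {\tt RetrieveLabel}$(\cB,E_1,E_2(i-1))$ terminates, returns a value in $\{1,\ldots,|\mathcal{S}_d|\}$, and yields distinct values on distinct depth-$d$ views. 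The case $d=1$ is Claim~\ref{claim4}; the step unrolls the ``else'' branch of Algorithm~\ref{alg:RetrieveLabel}, using the structure of $E_2(i-1)$ from Claim~\ref{claim5} (in particular Property~4, which guarantees that a couple $(j,T_j)$ is present in $L(d)$ precisely when the depth-$(d-1)$ label $j$ fails to separate at least two depth-$d$ views) together with the correctness of {\tt LocalLabel} granted by Claim~\ref{claim6} at level $d-1$.

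With this injectivity in hand, I would prove Claim~\ref{claim6} at level $i$ by a further induction on $|S|$, where $S\subseteq \mathcal{S}_i(j)$. The base case $|S|=1$ is immediate: {\tt BuildTrie} returns a single node labeled $(0)$ and {\tt LocalLabel} returns $1$, making both Properties trivial. For $|S|\geq 2$, since $\phi\geq 2$ implies $E_1\neq\emptyset$, the third branch of {\tt BuildTrie} is executed. Two elements of $\mathcal{S}_i(j)$ share the same {\tt RetrieveLabel} value at depth $i-1$, hence, by the injectivity just established, they share the same depth-$(i-1)$ view; thus any two distinct views in $S$ coincide at depth $i-1$ but differ at depth $i$, which is exactly the configuration in which the discriminatory index $i_\ast$ and subview $\cB_{\mathrm{disc}}$ are defined and the split $\{S',S\setminus S'\}$ is nontrivial. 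The recursive calls {\tt BuildTrie}$(S',E_1,E_2(i-1))$ and {\tt BuildTrie}$(S\setminus S',E_1,E_2(i-1))$ terminate by the subsidiary hypothesis on strictly smaller subsets, and adding their sizes gives exactly $2|S|-1$ nodes and $|S|$ leaves, all labeled $(0)$. The root of the returned trie bears a label $(i_\ast,{\tt RetrieveLabel}(\cB_{\mathrm{disc}},E_1,E_2(i-1)))$ whose two components lie in $O(n)$: the first as a port number, the second by the injectivity step.

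Property~2.1 is immediate since $G$ is connected with $n\geq 3$, so $u$ has at least one neighbor and $X_u$ is non-empty. For Properties~2.2 and~2.3 I would trace a single step of Algorithm~\ref{alg:LocalLabel} on the trie $T$ just built. Because $X_u$ is non-empty, the algorithm compares its $(i_\ast+1)$th entry, namely {\tt RetrieveLabel}$(\cB^{i-1}(u_{i_\ast}),E_1,E_2(i-1))$, with the second component of the root's label, {\tt RetrieveLabel}$(\cB_{\mathrm{disc}},E_1,E_2(i-1))$. By injectivity at depth $i-1$, these agree iff $\cB^{i-1}(u_{i_\ast})=\cB_{\mathrm{disc}}$, i.e.\ iff $\cB^i(u)\in S\setminus S'$, so the algorithm descends into exactly the subtrie that actually contains $\cB^i(u)$. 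The subsidiary hypothesis on the smaller subset then delivers a value in $\{1,\ldots,|S'|\}$ or in $\{1,\ldots,|S\setminus S'|\}$ that is injective on views; offsetting the right-subtrie output by the number $|S'|$ of leaves of the left subtrie produces an answer in $\{1,\ldots,|S|\}$ and preserves injectivity across the two sides.

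The main obstacle is the preliminary injectivity step for {\tt RetrieveLabel}: the recursion of Algorithm~\ref{alg:RetrieveLabel} simultaneously descends into the depth-$(d-1)$ view of the root and into the depth-$(d-1)$ views of all its neighbors, so untangling this double recursion and reconciling it with the structural Properties~1--4 of Claim~\ref{claim5} is where the real work lies. Once that is done, the trie construction and its traversal by {\tt LocalLabel} fit together by a clean sub-induction on $|S|$, with only bookkeeping of sizes and label magnitudes to finish Property~1.
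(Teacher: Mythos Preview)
Your approach is essentially the same as the paper's. The paper proves Claims~\ref{claim5}, \ref{claim6} and \ref{claim7} by a single simultaneous induction on $i$, and your ``preliminary injectivity property of {\tt RetrieveLabel} at depths $\le i-1$'' is precisely Claim~\ref{claim7} at level $i-1$; both then handle Claim~\ref{claim6} itself by a sub-induction on $|S|$, using that all views in $\mathcal S_i(j)$ coincide at depth $i-1$ (so the discriminatory index and subview are well defined and the split $(S',S\setminus S')$ is nontrivial), and tracing one step of {\tt LocalLabel} via the injectivity of {\tt RetrieveLabel} at depth $i-1$.

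Two small points to tighten. First, you write ``assuming Claims~\ref{claim3}--\ref{claim5}'', but Claim~\ref{claim5} at level $i$ is itself proved \emph{using} Claim~\ref{claim6} at level $i$ (the tries $T_j$ appearing in $L(i)$ are exactly the objects whose correctness Claim~\ref{claim6} establishes). What you actually need, and what the paper uses, is only Claim~\ref{claim5} at level $i-1$, available from the simultaneous induction hypothesis; you should make this explicit to avoid circularity. Second, in your subsidiary induction on $d$ you invoke ``the correctness of {\tt LocalLabel} granted by Claim~\ref{claim6} at level $d-1$''. The tries sitting in $L(d)$ are {\tt BuildTrie}$(\mathcal S_d(j),E_1,E_2(d-1))$, which are governed by Claim~\ref{claim6} at level $d$, not $d-1$; since $d\le i-1$ this is still available from the outer hypothesis, but the index should be corrected.
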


\begin{claim}\label{claim7}
For every integer $i\in \{2,\dots,\phi\}$, for every integer $k\in \{1,\dots,i\}$, and for every  $\cB \in \mathcal{S}_k$, the following properties hold:
\begin{itemize}
\item {\bf Property~1.} The procedure {\tt RetrieveLabel}$(\cB, E_1,E_2(i))$ terminates and returns an integer belonging to $\{1,\dots,|\mathcal{S}_k|\}$.
\item {\bf Property~2.} For all augmented truncated views $\cB' \neq \cB$ from $\mathcal{S}_k$, we have\\ {\tt RetrieveLabel}$(\cB, E_1,E_2(i))$ $\ne$ {\tt RetrieveLabel}$(\cB', E_1,E_2(i))$.
\end{itemize} 
\end{claim}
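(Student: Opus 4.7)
The plan is, for each fixed $i \in \{2,\ldots,\phi\}$, to prove both properties simultaneously by induction on $k$, ranging from $1$ to $i$. The base case $k=1$ is immediate: when the depth $d$ of $\cB$ equals $1$, Algorithm~\ref{alg:RetrieveLabel} returns {\tt LocalLabel}$(\cB,(),E_1)$ without consulting $E_2(i)$ at all, so Claim~\ref{claim4} delivers both properties directly.

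For the inductive step $k \geq 2$, fix $\cB \in \mathcal{S}_k$ and let $\cB^\star$ be the augmented truncated view at depth $k-1$ of the root of $\cB$. The inductive hypothesis at depth $k-1$ ensures that every recursive call {\tt RetrieveLabel}$(\cB^{k-1}(v_j),E_1,E_2(i))$ and the call {\tt RetrieveLabel}$(\cB^\star,E_1,E_2(i))$ terminates and returns an integer in $\{1,\ldots,|\mathcal{S}_{k-1}|\}$. Two coherence observations are needed here: (a) since {\tt RetrieveLabel} accesses $E_2(i)$ only through couples whose first term is $d \leq i$, and the sub-lists $L(d)$ appearing in $E_2(i)$ and in $E_2(k-1)$ coincide for $d \leq k-1$, the values computed now equal those used when the tries of $L(k)$ were originally built; and (b) for any node $u$ of $G$ with $\cB^k(u) = \cB$, the list $X$ assembled in the loop of Algorithm~\ref{alg:RetrieveLabel} agrees with the list $X_u$ from Claim~\ref{claim6} Property~2. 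In particular, $\mathit{label} := {\tt RetrieveLabel}(\cB^\star,E_1,E_2(i))$ is well defined and $\cB \in \mathcal{S}_k(\mathit{label})$.

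Now analyze the loop of Algorithm~\ref{alg:RetrieveLabel}. For each $i' < \mathit{label}$, Claim~\ref{claim5} Property~4 says that $(i',T_{i'}) \in L(k)$ iff $|\mathcal{S}_k(i')| \geq 2$; by Claim~\ref{claim5} Property~3 together with Claim~\ref{claim6} Property~1, the trie $T_{i'}$ then has exactly $|\mathcal{S}_k(i')|$ leaves. Therefore iteration $i'$ adds $|\mathcal{S}_k(i')|$ to $sum$ in either case. Since every view at depth $k-1$ admits at least one extension at depth $k$, we have $|\mathcal{S}_k(i')| \geq 1$ for all $i' \in \{1,\ldots,|\mathcal{S}_{k-1}|\}$, and $\sum_{i'=1}^{|\mathcal{S}_{k-1}|}|\mathcal{S}_k(i')| = |\mathcal{S}_k|$. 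The final iteration at $i' = \mathit{label}$ contributes some $c(\cB) \in \{1,\ldots,|\mathcal{S}_k(\mathit{label})|\}$: either the value $1$ when no couple exists (forcing $|\mathcal{S}_k(\mathit{label})|=1$), or a value given by Claim~\ref{claim6} Property~2.2. Summing, the output lies in $\{1,\ldots,|\mathcal{S}_k|\}$, establishing Property~1.

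For Property~2, consider distinct $\cB, \cB' \in \mathcal{S}_k$ with associated labels $\mathit{label}$ and $\mathit{label}'$. If $\mathit{label} \neq \mathit{label}'$, assume without loss of generality $\mathit{label} < \mathit{label}'$; then the output for $\cB$ is at most $\sum_{i' \leq \mathit{label}}|\mathcal{S}_k(i')|$, whereas the output for $\cB'$ is at least $\sum_{i' \leq \mathit{label}}|\mathcal{S}_k(i')| + 1$, so they differ. If $\mathit{label} = \mathit{label}'$, then Property~2 at depth $k-1$ forces the two parent views to coincide, so $\cB,\cB' \in \mathcal{S}_k(\mathit{label})$ and $|\mathcal{S}_k(\mathit{label})| \geq 2$; hence the couple $(\mathit{label},T) \in L(k)$ exists, the prefix sums agree, and Claim~\ref{claim6} Property~2.3 yields distinct {\tt LocalLabel} outputs, hence distinct final outputs. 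The principal technical obstacle is the coherence observations~(a) and~(b) above: one must verify carefully that the recursive calls reproduce the exact inputs used when the tries in $E_2(i)$ were originally constructed by Algorithm~\ref{alg:ComputeAdvice}, so that Claims~\ref{claim5} and~\ref{claim6} can be applied verbatim; once that is nailed down, the rest of the argument reduces to bookkeeping with partial sums.
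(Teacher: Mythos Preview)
Your proof is correct and follows essentially the same approach as the paper. The paper proves Claims~\ref{claim5}, \ref{claim6}, and \ref{claim7} by a single simultaneous induction on $i$, spelling out only the base case $i=2$ in detail (with separate case analyses for $label=1$ versus $label>1$, and a three-case split for Property~2); you instead fix $i$, take Claims~\ref{claim5} and \ref{claim6} as established, and induct on $k$, packaging the bookkeeping into a single prefix-sum formula $\sum_{i'<label}|\mathcal{S}_k(i')|+c(\cB)$. Your coherence observations (a) and (b)---that {\tt RetrieveLabel} on depth-$(k-1)$ inputs reads only the portion of $E_2(i)$ that coincides with $E_2(k-1)$, so the computed $X$ matches the $X_u$ of Claim~\ref{claim6}---are exactly the glue the paper uses implicitly when it moves between $E_2(2)$ and $E_2(1)$ in its $i=2$ argument; you are right to flag them as the one point requiring care.
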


Claims~\ref{claim5},~\ref{claim6} and~\ref{claim7} are proved by simultaneous induction on $i$. We first prove them for $i=2$. In this case we have $E_2(i-1)=E_2(1)=()$, in view of Claim~\ref{claim3}. First consider Claim \ref{claim6}. Fix an integer $j$. If $\mathcal{S}_2(j)=\emptyset$, then both properties of the claim are immediately verified. Hence suppose that $\mathcal{S}_2(j)\ne\emptyset$. We show by induction on the size of $S$ that both these properties are satisfied. 
If $|S|=1$, then procedure {\tt BuildTrie}$(S,E_1,E_2(1))$ returns a single node labeled $(0)$, and hence property 1 is satisfied in this case. As for property 2, 
for every node $v$ in $G$,  such that $\cB^2(v)\in S$, the list $X_v$ is non-empty by Claim~\ref{claim4}. Since {\tt BuildTrie}$(S,E_1,E_2(1))$ returns a single node, 
procedure {\tt LocalLabel}($\cB^2(v)$,$X_v$,{\tt BuildTrie}$(S,E_1,E_2(1)))$ terminates and returns integer 1. Hence properties 2.1 and 2.2 hold in this case.
Moreover, property 2.3 holds as well, since $S$ is a singleton in this case. It follows that  property 2 of Claim~\ref{claim6} holds for $|S|=1$. 

It follows that
Claim~\ref{claim6} holds, if $|\mathcal{S}_2(j)|=1$. Suppose that $|\mathcal{S}_2(j)|\geq 2$, and assume, by inductive hypothesis on the size of $S$, that properties 1 and 2 of Claim~\ref{claim6} hold, when $1\leq |S|\leq k$, for some integer $1\leq k\leq |\mathcal{S}_2(j)|-1$. We prove that these properties hold if $|S|=k+1$.
 If $|S|=k+1$ then $|S|\geq 2$. Let $p$ and $\cB_{disc}$ be, respectively, the discriminatory index of $S$ and the discriminatory subview of $S$. These objects are well defined because, by definition, all augmented truncated views {at depth 1 of the roots of the augmented truncated views} from $\mathcal{S}_2(j)$ are identical. Notice that the depth of $\cB_{disc}$ is $1$ because the depth of all augmented truncated views in $\mathcal{S}_2(j)$ is 2. Denote by $S'$ the set of augmented truncated views $\cB^2(v)$ from $S$, such that $\cB^1(w)\ne \cB_{disc}$, where $w$ is the neighbor of $v$ corresponding to the port number $p$ at $v$. {In view of} the inductive hypothesis and of Claim~\ref{claim4}, the procedure {\tt BuildTrie}$(S,E_1,E_2(1)))$ terminates and returns a binary tree with root labeled by  $(p$,{\tt RetrieveLabel}$(\cB_{disc}, E_1,())$$)$, whose left child is the root of
 {\tt BuildTrie}$(S',E_1,())$,  and whose right child is the root of {\tt BuildTrie}$(S\setminus S',E_1,())$. Hence property 1 of Claim~\ref{claim6} holds for $|S|=k+1$:
 in particular, $p$ is {an integer} smaller than the degree of $v$, 
 {\tt RetrieveLabel}$(\cB_{disc}, E_1,())$ is in $\{1,\dots,\mathcal{S}_1\}$, and  hence $p$ and {\tt RetrieveLabel}$(\cB_{disc}, E_1,())$ are {integers} in $O(n)$.
 
 Concerning property 2, notice that, for every node $v$ from $G$, such that $\cB^2(v)\in S$, the list $X_v$ is non-empty by Claim~\ref{claim4}, and hence property 2.1
 holds for $|S|=k+1$. We now prove properties 2.2 et 2.3. Consider any nodes $v$ and $v'$ of $G$, such that $\cB^2(v)$ and $\cB^2(v')$ are in $S$ and $\cB^2(v)\ne \cB^2(v')$. First suppose that $\cB^1(v_p)\ne \cB_{disc}$ (i.e., $\cB^2(v)\in S'$), where $v_p$ is the neighbor of $v$ corresponding to the port number $p$ at $v$.
 In this case, in view of Claim \ref{claim4}, the $(p+1)$th term of the list $X_v$ is different from {\tt RetrieveLabel}$(\cB_{disc}, E_1,())$, and, in view of Algorithms \ref{alg:LocalLabel} and~\ref{alg:BuildTrie}, we have
 {\tt LocalLabel}($\cB^2(v)$,$X_v$,{\tt BuildTrie}$(S,E_1,E_2(1)))$ $=${\tt LocalLabel}($\cB^2(v)$,$X_v$,{\tt BuildTrie}$(S',E_1,E_2(1)))$. We have $1\leq |S'| <k+1$.
 By the inductive hypothesis and by property 2.1, proved above for $|S|=k+1$, we know that  procedure {\tt LocalLabel}($\cB^2(v)$,$X_v$,{\tt BuildTrie}$(S,E_1,E_2(1)))$ terminates and returns an integer from $\{1,\dots,|S'|\}$. Since $|S'|<|S|$,  property 2.2 is proved if $\cB^2(v)\in S'$. Moreover, if $\cB^2(v')\in S'$, then by Claim~\ref{claim4} and {Algorithms~\ref{alg:LocalLabel} and~\ref{alg:BuildTrie}}, the $(p+1)$th term of the list $X_{v'}$ is different from {\tt RetrieveLabel}$(\cB_{disc}, E_1,())$, and by property 2.1, proved above for $|S|=k+1$, procedure
 {\tt LocalLabel}($\cB^2(v')$,$X_{v'}$,{\tt BuildTrie}$(S,E_1,E_2(1)))$ terminates and returns the same integer as {\tt LocalLabel}($\cB^2(v')$,$X_{v'}$,{\tt BuildTrie}$(S',E_1,E_2(1)))$. By the inductive hypothesis, we have\\  {\tt LocalLabel}($\cB^2(v)$,$X_{v}$,{\tt BuildTrie}$(S',E_1,E_2(1)))$ $\ne$ {\tt LocalLabel}($\cB^2(v')$,$X_{v'}$,{\tt BuildTrie}$(S',E_1,E_2(1)))$. This implies\\ {\tt LocalLabel}($\cB^2(v)$,$X_{v}$,{\tt BuildTrie}$(S,E_1,E_2(1)))$ $\ne$ {\tt LocalLabel}($\cB^2(v')$,$X_{v'}$,{\tt BuildTrie}$(S,E_1,E_2(1)))$.
 
Moreover, if $\cB^2(v')\notin S'$, then the $(p+1)$th term of the list $X_{v'}$ is equal to the integer {\tt RetrieveLabel}$(\cB_{disc}, E_1,())$, and hence {\tt LocalLabel}($\cB^2(v')$,$X_{v'}$,{\tt BuildTrie}$(S,E_1,E_2(1)))$ $=\\$ numleaves $+$ {\tt LocalLabel}($\cB^2(v')$,$X_{v'}$,{\tt BuildTrie}$(S\setminus S',E_1,E_2(1)))$ {in view of Algorithms~\ref{alg:LocalLabel} and~\ref{alg:BuildTrie}}, where numleaves is the number of leaves in {\tt BuildTrie}$(S',E_1,E_2(1))$. Since $1\leq |S'| \leq k$ and $1\leq |S\setminus S'| \leq k$, by the inductive hypothesis and
property 2.1, proved for $|S|=k+1$,\\ {\tt LocalLabel}($\cB^2(v')$,$X_{v'}$,{\tt BuildTrie}$(S,E_1,E_2(1)))$ terminates and returns an integer from $\{|S'|+1,\dots,|S'|+|S\setminus S'|\}$. It follows that\\ {\tt LocalLabel}($\cB^2(v')$,$X_{v'}$,{\tt BuildTrie}$(S,E_1,E_2(1)))$ $>$ {\tt LocalLabel}($\cB^2(v)$,$X_{v}$,{\tt BuildTrie}$(S,E_1,E_2(1)))$. Hence, if $\cB^1(v_p)\ne \cB_{disc}$ (i.e., $\cB^2(v)\in S'$) the properties  2.2 et 2.3 hold when $|S|=k+1$. If $\cB^1(v_p)= \cB_{disc}$, a similar reasoning shows that they hold as well. By induction, we deduce that properties 1 and 2 of Claim~\ref{claim6} hold for every non-empty set $S\subseteq \mathcal{S}_2(j)$,
which finishes the proof of Claim~\ref{claim6} for $i=2$.

We now prove that Claim~\ref{claim5} holds for $i=2$. {According to Algorithm~\ref{alg:ComputeAdvice},} $L(2)$ is a (possibly empty) list of couples  $(L_a,L_b)$, such that $L_a$ is the integer returned by {\tt RetrieveLabel}$(\cB', E_1,())$, for some augmented truncated view $\cB'$ at depth 1,  and $L_b$ is the trie {\tt BuildTrie}$(X,E_1,()))$, where $X$ is a non-empty set of augmented truncated views at depth $1$. By Claim~\ref{claim4} and Claim~\ref{claim6} for $i=2$ (proved above) the computation of $L(2)$ terminates, the variable
$E_2(2)$ is assigned the value $(2,L(2))$, $L_a$ is a non-negative integer and $L_b$ is a trie. Hence property 1 of Claim~\ref{claim5} holds for $i=2$. Concerning property 2 of this claim, notice that, if $L(2)$ is empty or has a unique term, then this property holds for $i=2$. If $L(2)$ has at least two terms $(L_a,L_b)$ et $(L'_a,L'_b)$,
we have $L_a=$ {\tt RetrieveLabel}$(\cB', E_1,())$ and $L'_a=$ {\tt RetrieveLabel}$(\cB'', E_1,())$,  where $\cB'$ and $\cB''$ are two different augmented  truncated views at depth $1$. By Claim~\ref{claim4}, we have {\tt RetrieveLabel}$(\cB', E_1,())$ $\ne$ {\tt RetrieveLabel}$(\cB'', E_1,())$. This implies property 2 for $i=2$ because $L_a\ne L'_a$. As for properties 3 and 4, consider a couple $(L_a,L_b)$ from $L(2)$. We have $L_a=$ {\tt RetrieveLabel}$(\cB', E_1,())$ for some augmented truncated view $\cB'$ at depth $1$, and $L_b=$ {\tt BuildTrie}$(X,E_1,()))$, where $X$ is the set of augmented truncated views at depth 2 of all nodes $u$ of $G$ such that $\cB^1(u)=\cB'$. Hence $X=\mathcal{S}_2(L_a)$, and property  3 holds for $i=2$. Moreover, {according to Algorithm~\ref{alg:ComputeAdvice}}, $|X|\geq 2$. Hence, if there exists a couple $(L_a,L_b)$ in $L(2)$, then
 $|\mathcal{S}_2(L_a)|\geq2$. In order to prove property 4 for $i=2$, we have to show that the converse implication holds as well. For each augmented truncated view
 $\cB'$ at depth $1$, consider the set $Y$ of augmented truncated views $\cB^2(u)$ of all nodes $u$ such that $\cB^1(u)=\cB'$. If $Y$ is of cardinality greater than 1, then there is a couple $(${\tt RetrieveLabel}$(\cB', E_1,())$,{\tt BuildTrie}$(Y,E_1,())))$ in $L(2)$. Hence, in view of Claim~\ref{claim4}, this implies the converse implication of property 4, which proves this property  for $i=2$. This completes the proof of Claim~\ref{claim5} for $i=2$.
 
 We next prove Claim~\ref{claim7} for $i=2$. We have to show its validity for $k=1$ and for $k=2$. If $k=1$ then, for every augmented truncated view $\cB\in \mathcal{S}_1$, we have {\tt RetrieveLabel}$(\cB, E_1,E_2(2))=$
{\tt RetrieveLabel}$(\cB, E_1,E_2(1))$, as the depth of $\cB$ is $1$. Hence, in view of Claim~\ref{claim4}, Claim~\ref{claim7} holds for $k=1$. 

Suppose that $k=2$. Let $u$ be any node of $G$. In view of Claim ~\ref{claim4} and of Claims ~\ref{claim5} and~\ref{claim6} shown above for $i=2$, the procedure
{\tt RetrieveLabel}$(\cB^2(u), E_1,E_2(2))$ terminates. Hence, to complete the proof of Claim~\ref{claim7} for $i=2$, it is enough to prove the following facts:

{\bf Fact~1.} {\tt RetrieveLabel}$(\cB^2(u), E_1,E_2(2))$ $\in\{1,\dots,|\mathcal{S}_2|\}$.

{\bf Fact~2.} For every node $v$ of $G$, such that $\cB^2(u)\ne\cB^2(v)$, we have

 {\tt RetrieveLabel}$(\cB^2(u), E_1,E_2(2))$ $\ne$ {\tt RetrieveLabel}$(\cB^2(v), E_1,E_2(2))$.
 
 We first prove Fact 1. The value of the variable $label$ in {\tt RetrieveLabel}$(\cB^2(u), E_1,E_2(2))$ is equal to the integer returned by
 {\tt RetrieveLabel}$(\cB^1(u), E_1,E_2(2))$ which is equal to the integer returned by {\tt RetrieveLabel}$(\cB^1(u), E_1,E_2(1))$ because the depth of $\cB^1(u)$ is $1$.
In view of Claim~\ref{claim4}, we have to consider two cases: when $label=1$ and when $label>1$.

\begin{itemize}
\item{$label=1$}.
If there is no couple $(1,*)$ in the list $L(2)$, then {\tt RetrieveLabel}$(\cB^2(u), E_1,E_2(2))$ returns the integer 1. If there is a couple $(1,T_1)$ in the list $L(2)$, then
Claim~\ref{claim5} for $i=2$ implies that this is the only couple in this list whose first term is 1, and $T_1$ is the trie equal to {\tt BuildTrie}$(\mathcal{S}_2(1),E_1,E_2(1)))$. We have {\tt RetrieveLabel}$(\cB^2(u), E_1,E_2(2))$ $=$ {\tt LocalLabel}($\cB^2(u)$,$X_{u},T_1)$, and this integer belongs to $\{1,\dots,|\mathcal{S}_2(1)|\}$, in view of property 2.2 of Claim~\ref{claim6}. Since $|\mathcal{S}_2(1)|\leq |\mathcal{S}_2|$, the proof of Fact 1 is completed in this case.

\item{$label> 1$}.
Denote by $\mathcal{K}$ the set of indices $l\leq label$, such that there exists a  couple $(l,T_l)$ in the list $L(2)$. Claims~\ref{claim5} and~\ref{claim6} for $i=2$
imply that, for every $l\in\mathcal{K}$, there exists a unique couple in $L(2)$ whose first term is $l$. Also, $T_l=$ {\tt BuildTrie}$(\mathcal{S}_2(l),E_1,E_2(1))))$, 
and the number of leaves of $T_l$ is $|\mathcal{S}_2(l)|$. Hence, if $label\notin\mathcal{K}$, we have {\tt RetrieveLabel}$(\cB^2(u), E_1,E_2(2))$ $=$ $\sum_{z\in\mathcal{K}}|\mathcal{S}_2(z)|+$ $|\{1,\dots,label\}\setminus\mathcal{K}|$, which is at most 
$\sum_{z\in\{1,\dots,label\}}|\mathcal{S}_2(z)|\leq |\mathcal{S}_2|$, because, for every pair of distinct integers  $z_1$ and $z_2$ from $\{1,\dots,label\}$,
we have $\mathcal{S}_2(z_1)\cap\mathcal{S}_2(z_2)=\emptyset$, in view of Claim~\ref{claim4}. On the other hand, if $label\in\mathcal{K}$, then {\tt RetrieveLabel}$(\cB^2(u), E_1,E_2(2))$ $=$ $\sum_{z\in\mathcal{K}\setminus\{label\}}|\mathcal{S}_2(z)|+$ $|\{1,\dots,label\}\setminus\mathcal{K}|$ $+$ {\tt LocalLabel}($\cB^2(u)$,$X_{u}$,$T_{label})$\\ (where $T_{label}=${\tt BuildTrie}$(\mathcal{S}_2(label),E_1,E_2(1))$). Hence
{\tt RetrieveLabel}$(\cB^2(u), E_1,E_2(2))$ $\leq$ $\sum_{z\in\{1,2,\dots,label-1\}}|\mathcal{S}_2(z)|+$ {\tt LocalLabel}($\cB^2(u)$,$X_{u}$,$T_{label})$. 
The latter integer is not larger than $\sum_{z\in\{1,2,\dots,label\}}|\mathcal{S}_2(z)|$ because {\tt LocalLabel}($\cB^2(u)$,$X_{u}$,$T_{label})$ belongs to $\{1,\dots,|\mathcal{S}_2(label)|\}$, by property 2.2 of Claim~\ref{claim6} for $i=2$. As mentioned above, $\sum_{z\in\{1,\dots,label\}}|\mathcal{S}_2(z)|\leq |\mathcal{S}_2|$. This completes
the proof of Fact 1 in the case $label> 1$.
\end{itemize}

We now prove Fact 2. Denote by $local(u)$ (resp. $local(v)$) the variable $local$ in procedure {\tt RetrieveLabel}$(\cB^2(u), E_1,E_2(2))$ (resp. in {\tt RetrieveLabel}$(\cB^2(v), E_1,E_2(2))$), and first consider the case $local(u)=local(v)$. Then $\cB^2(u)$ and $\cB^2(v)$ belong to the set $\mathcal{S}_2(local(v))$, and $|\mathcal{S}_2(local(v))|\geq 2$.
Moreover, in view of Claim~\ref{claim5} for $i=2$, there exists a unique couple $(label(v),T_{label(v)})$ in $L(2)$, where $T_{label(v)}=$ {\tt BuildTrie}$(\mathcal{S}_2(label(v)),E_1,E_2(1)))$. For some non-negative integer $W$,  we have {\tt RetrieveLabel}$(\cB^2(u), E_1,E_2(2))=$ $W+$ {\tt LocalLabel}($\cB^2(u)$,$X_{u}$,$T_{label(v)})$, and\\
 {\tt RetrieveLabel}$(\cB^2(v), E_1,E_2(2))=$ $W+$ {\tt LocalLabel}($\cB^2(v)$,$X_{v}$,$T_{label(v)})$. In view of Claim~\ref{claim6} for $i=2$, we have
 {\tt LocalLabel}($\cB^2(u)$,$X_{u}$,$T_{label(v)})\ne$ {\tt LocalLabel}($\cB^2(v)$,$X_{v}$,$T_{label(v)})$. It follows that {\tt RetrieveLabel}$(\cB^2(u), E_1,E_2(2))\ne$ {\tt RetrieveLabel}$(\cB^2(v), E_1,E_2(2))$, which proves Fact 2 when $local(u)=local(v)$.
 
 Now suppose that $local(u)\ne local(v)$. Without loss of generality,  $local(u)<local(v)$. There are 3 cases.

\begin{itemize}
\item{\bf Case~1.} There is no couple in $L(2)$ whose first term is $local(u)$ or $local(v)$.

Then {\tt RetrieveLabel}$(\cB^2(u), E_1,E_2(2))\leq$ {\tt RetrieveLabel}$(\cB^2(v), E_1,E_2(2))+1$.

\item{\bf Case~2.} There is a couple $(local(u),T_{local(u)})$ and a couple $(local(v),T_{local(v)})$ in $L(2)$. 

In view of Claim~\ref{claim5} for $i=2$, we have $T_{local(u)}=$
{\tt BuildTrie}$(\mathcal{S}_2(label(u)),E_1,E_2(1)))$ and $T_{local(v)}=$
{\tt BuildTrie}$(\mathcal{S}_2(label(v)),E_1,E_2(1)))$. Hence, Claim~\ref{claim6} for $i=2$ implies that {\tt LocalLabel}($\cB^2(u)$,$X_{u}$,$T_{label(u)})$ returns an integer at most equal to the number of leaves in $T_{label(u)})$, and implies the inequality {\tt LocalLabel}($\cB^2(v)$,$X_{v}$,$T_{label(v)})\geq1$. It follows that {\tt RetrieveLabel}$(\cB^2(u), E_1,E_2(2))<$ {\tt RetrieveLabel}$(\cB^2(v), E_1,E_2(2))$.

\item{\bf Case~3.} There is exactly one couple in $L(2)$ whose first term is $local(u)$ or $local(v)$. 

The inequality {\tt RetrieveLabel}$(\cB^2(u), E_1,E_2(2))<$ {\tt RetrieveLabel}$(\cB^2(v), E_1,E_2(2))$ is shown using similar arguments as in the two preceding cases.
\end{itemize}

This concludes the proof of Fact 2, and thus also the proof of  Claim~\ref{claim7} for $i=2$.
Hence, Claims~\ref{claim5},~\ref{claim6} and~\ref{claim7} are valid for $i=2$. The inductive step for these three claims is proved using similar arguments.

We now prove that the computation of $E_2$  in Algorithm {\tt ComputeAdvice} terminates, and that the length of $bin(E_2)$ is in $O(n \log n)$. In this algorithm, the value of $E_2$ is set to $E_2(\phi)$. Claim~\ref{claim5} implies that the computation of $E_2$ terminates, and that $E_2$ is a list
$((2,L(2)),(3,L(3)),\dots,(\phi,L(\phi)))$, where in every couple $(i,L(i))$, $i$ is an integer and $L(i)$ is a list of couples $(j,T_j)$, such that $j$ is an integer, and $T_j$ is a trie. In order to show that the length of $bin(E_2)$ is in $O(n \log n)$, it is enough to prove that the three conditions from Proposition~\ref{pro:nested} are satisfied.
In view of Claim~\ref{claim5}, the number of couples in $E_2$ is in $O(n)$ because $\phi\in O(n)$. 
Moreover, for every couple $(i,L(i))$ in $E_2$, there exists a couple $(j,*)\in L(i)$ if and only if $|\mathcal{S}_i(j)|\geq 2$. 
 Since $\mathcal{S}_i(j)$ is the set of augmented truncated views at depth $i$ of all nodes $u$ of $G$, such that {\tt RetrieveLabel}$(\cB^{i-1}(u), E_1,E_2(i-1))=j$,  Claim~\ref{claim6} implies that $j\in O(n)$. For every trie appearing in a term of some list $L(i)$, where $(i,L(i))$ is in $E_2$, all its leaves are labeled by $(0)$, 
 and all its internal nodes are labeled by  queries of the form $(a,b)$, where $a$ and $b$ are integers in $O(n)$. Hence, in order to show that the length of $bin(E_2)$ is in $O(n \log n)$, it is enough to show that the following two conditions are satisfied.\\
 C1. The sum of lengths of all lists $L_i$ appearing in terms of $E_2$ is in $O(n)$.\\
 C2. The sum of sizes of all tries appearing in terms of lists $L(i)$ appearing in terms of $E_2$ is in $O(n)$.
 
For every couple $(i,L(i))$ in $E_2$ and for every couple $(j,T_j)$ in $L(i)$, the term $T_j$ is a non-empty trie. Indeed, by property 3 of Claim~\ref{claim5}, we have $T_j=$ {\tt BuildTrie}$(\mathcal{S}_i(j),E_1,E_2(i-1)))$. In view of property 4 of Claim~\ref{claim5}, we have $|\mathcal{S}_i(j)|\geq 2$,  and in view of property 1 of Claim~\ref{claim6}, the size of $T_j$ is $2|\mathcal{S}_i(j)|-1$. Hence condition C2 implies condition C1, and thus it is enough to prove condition C2.

In view of Claim~\ref{claim5} and of property~1 of Claim~\ref{claim6}, for every list $L(i)$, such that the couple $(i,L(i))$ is in $E_2$, the sum of sizes of all tries appearing in terms of $L(i)$ is $(2\sum_{j\in\mathcal{H}}|\mathcal{S}_i(j)|)-|\mathcal{H}|$, where $\mathcal{H}$ is the set of integers $j$ such that $|\mathcal{S}_i(j)|\geq 2$ and $\mathcal{S}_i(j)$ is the set of augmented truncated views at depth $i$ of all nodes $u$ of $G$, for which {\tt RetrieveLabel}$(\cB^{i-1}(u), E_1,E_2(i-1))=j$. 

Let $\mathcal{H}'$ be the set of integers $j$, such that $|\mathcal{S}_i(j)|=1$. By Claim~\ref{claim7} we have $\mathcal{H}\cup\mathcal{H}'=\{1,2,\dots,|\mathcal{S}_{i-1}|\}$, $\bigcup_{j\in\mathcal{H}\cup\mathcal{H}'}\mathcal{S}_i(j)=\mathcal{S}_i$, and $ \mathcal{S}_i(j_1)\cap\mathcal{S}_i(j_2)=\emptyset$,  for distinct
$ j_1,j_2\in\mathcal{H}\cup\mathcal{H}'$. Hence $\sum_{j\in\mathcal{H}\cup\mathcal{H}'} |\mathcal{S}_i(j)|=|\mathcal{S}_i|$. It follows that:

\begin{align}
\sum_{j\in\mathcal{H}}|\mathcal{S}_i(j)|+ \sum_{j\in\mathcal{H}'}|\mathcal{S}_i(j)|  &= |\mathcal{S}_i| \\
2(\sum_{j\in\mathcal{H}}|\mathcal{S}_i(j)|+ \sum_{j\in\mathcal{H}'}|\mathcal{S}_i(j)|) + |\mathcal{H}|-|\mathcal{H}|  &= 2|\mathcal{S}_i| \\
2(\sum_{j\in\mathcal{H}}|\mathcal{S}_i(j)|)-|\mathcal{H}|  &= 2|\mathcal{S}_i| -2(\sum_{j\in\mathcal{H}'}|\mathcal{S}_i(j)|) - |\mathcal{H}| \\
  &= 2|\mathcal{S}_i| -2(\sum_{j\in\mathcal{H}'}|\mathcal{S}_i(j)|) - 2|\mathcal{H}|+|\mathcal{H}|\\
  &\leq 2|\mathcal{S}_i| -2|\mathcal{H}\cup\mathcal{H}'|+|\mathcal{H}|\\
  &\leq 2|\mathcal{S}_i| -2|\mathcal{S}_{i-1}|+|\mathcal{H}|
\end{align}

However, we know that $\sum_{j\in\mathcal{H}\cup\mathcal{H}'} |\mathcal{S}_i(j)|=|\mathcal{S}_i|$ and $\mathcal{H}\cup\mathcal{H}'=\{1,2,\dots,|\mathcal{S}_{i-1}|\}$. 
Thus we have:
\begin{align}
|\mathcal{S}_i|-|\mathcal{S}_{i-1}| &= \sum_{j\in\mathcal{H}\cup\mathcal{H}'} |\mathcal{S}_i(j)| - |\mathcal{H}\cup\mathcal{H}'|\\
&= \sum_{j\in\mathcal{H}\cup\mathcal{H}'}(|\mathcal{S}_i(j)|-1)\\
&= \sum_{j\in\mathcal{H}}(|\mathcal{S}_i(j)|-1)+\sum_{j\in\mathcal{H}'}(|\mathcal{S}_i(j)|-1)
\end{align}

Note that $\sum_{j\in\mathcal{H}'}(|\mathcal{S}_i(j)|-1)=0$ because, by definition of $\mathcal{H}'$, we have $|\mathcal{S}_i(j)|=1$, for all $j\in\mathcal{H}'$. It follows that
\begin{align}
|\mathcal{S}_i|-|\mathcal{S}_{i-1}| &= \sum_{j\in\mathcal{H}} |\mathcal{S}_i(j)|-|\mathcal{H}|
\end{align}

Since for all $j\in\mathcal{H}$ we have $|\mathcal{S}_i(j)|\geq2$, it follows that

\begin{align}
\sum_{j\in\mathcal{H}} |\mathcal{S}_i(j)|-|\mathcal{H}|&\geq|\mathcal{H}|.
\end{align}

Equations $(10)$ and $(11)$ imply that

\begin{align}
|\mathcal{S}_i|-|\mathcal{S}_{i-1}|&\geq|\mathcal{H}|.
\end{align}
 
Hence equations $(6)$ and $(12)$ imply:

\begin{align}
2(\sum_{j\in\mathcal{H}}|\mathcal{S}_i(j)|)-|\mathcal{H}|&\leq3(|\mathcal{S}_i| -|\mathcal{S}_{i-1}|).
\end{align}

As mentioned above, the sum of sizes of all tries appearing in terms of $L(i)$ is $(2\sum_{j\in\mathcal{H}}|\mathcal{S}_i(j)|)-|\mathcal{H}|$.
Thus, by Claim~\ref{claim5}, the sum of sizes of all tries appearing in terms of lists $L(i)$ appearing in terms of $E_2$  is at most $\sum_{i=2}^{\phi}3(|\mathcal{S}_i| -|\mathcal{S}_{i-1}|)=3(|\mathcal{S}_\phi| -|\mathcal{S}_{2}|)\leq 3n$. This proves condition C2 and concludes the proof that the length of $bin(E_2)$ is in $O(n \log n)$.

We are now able to conclude the proof of Part 1 of our theorem. We have seen that the computation of $E_2$ terminates, and that the length of $bin(E_2)$ is in $O(n \log n)$.  By Claim~\ref{claim4}, the computation of $E_1$ terminates and the length of $bin(E_1)$ is in $O(n \log n)$.
 By property 1 of Claim~\ref{claim7}, the computation of the labeled BFS tree $T$ in {\tt ComputeAdvice}$(G)$ terminates, and the labels of  nodes of 
  $T$ are in $O(n)$ because $|\mathcal{S}_\phi|=n$. Hence, Proposition~\ref{pro:bfs} implies that the length of $bin(T)$ is in $O(n \log n)$. Finally, the length of 
  $bin(\phi)$ is in $O(\log n)$ because $\phi\leq n$. It follows that Algorithm  {\tt ComputeAdvice}$(G)$ terminates, and the length of the returned string 
 $Concat(bin(\phi),A_1,A_2)$ (where $A_1= Concat (bin (E_1),bin(E_2))$ and $A_2=bin(T)$) is in $O(n \log n)$.
 
 It remains to prove Part 2 of the theorem. Using advice $Concat(bin(\phi),A_1,A_2)$ returned by ${\tt ComputeAdvice}(G)$, every node of $G$ executing Algorithm {\tt Elect} can decode the objects
 $\phi,E_1,E_2,$ and $T$ computed by Algorithm  {\tt ComputeAdvice}$(G)$. After $\phi$ rounds,
 each node $u$ of $G$ learns its augmented truncated view at depth $\phi$, and can execute {\tt RetrieveLabel}$(\cB^\phi(u), E_1,E_2)$ which terminates in view of Claims~\ref{claim4} (if $\phi=1$) and~\ref{claim7} (if $\phi\geq2$), as $E_2=E_2(\phi)$. According to these claims and by Proposition \ref{prop-index}, for all distinct nodes $u$ and $u'$ of $G$, we have
{\tt RetrieveLabel}$(\cB^\phi(u), E_1,E_2)$ $\ne$ {\tt RetrieveLabel}$(\cB^\phi(u'), E_1,E_2)$.  Moreover,  there exists exactly one node $u''$ of $G$ such that {\tt RetrieveLabel}$(\cB^\phi(u''), E_1,E_2)=1$. Since each node $u$ of the BFS tree $T$ is labeled by the integer {\tt RetrieveLabel}$(\cB^\phi(u), E_1,E_2)$, 
each node executing Algorithm {\tt Elect} outputs the sequence of port numbers corresponding to the unique simple path in the tree $T$, from this node to the node $u''$.
Consequently, all nodes perform correct leader election, which proves Part 2 of our theorem.
%\end{otherlanguage}
\end{proof}

We now prove two lower bounds on the size of advice for election in the minimum time, i.e., in time equal to the election index $\phi$. 
For $\phi=1$ we establish the lower bound $\Omega(n\log\log n)$,
and for $\phi >1$ we establish the lower bound $\Omega(n(\log\log n)^2/\log n)$. Both these bounds differ from our upper bound $O(n\log n)$ only by a polylogarithmic factor. 

The high-level idea of the proofs of these bounds is the following. Given a positive integer $\phi$ we construct, for arbitrarily large integers $n$,  families of $n$-node graphs with election index $\phi$
and with the property that each graph of such a family must receive different advice for any election algorithm working in time $\phi$ for all graphs of this family. This property is established by showing that,
if two graphs $G_1$ and $G_2$ from the family received the same advice, some nodes $v_1$ in $G_1$ and $v_2$ in $G_2$ would have to output identical sequences of port numbers because they have identical augmented truncated views at depth $\phi$, which would result in failure of leader election in one of these graphs. Since the constructed families are large enough, the above property implies the desired lower bound on the size of advice for at least one graph in the family.

We start with the construction of a family $\cF(x)=\{C_1,\dots ,C_y\}$ of labeled $(x+1)$-node cliques, for $x \geq 2$. All these cliques will have node labels  
 $r,v_0,v_1,\dots ,v_{x-1}$. We first define a clique $C$ by assigning its port numbers.
Assign port number $i$, for $0\leq i \leq x-1$, to the port at $r$ corresponding to the edge $\{r,v_i\}$. The rest of the port numbers are assigned arbitrarily. {We now show how} to produce the cliques of the family $\cF(x)$ from the clique $C$.
Consider all sequences of $x$ integers from the set $\{1,\dots, x-1\}$. There are  $y=(x-1)^x$ such sequences. Let $(s_1,\dots, s_y)$ be
any enumeration of them. Let $s_t=(h_0,h_1,\dots , h_{x-1})$, for a fixed $t=1,\dots, y$. 
The clique $C_t$ is defined from clique $C$ by assigning port $(p+ h_j) \mod x$ instead of port $p$ at node $v_j$, 
for all pairs $0\leq j, p \leq x-1$.

We first consider the election index $\phi=1$.

\begin{theorem}\label{lb1}
For arbitrarily large integers $n$, there exist $n$-node graphs with election index 1, such that leader election in time 1 in these graphs requires advice of size 
$\Omega(n \log\log n)$.
\end{theorem}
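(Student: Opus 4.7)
The plan is to exploit the family $\cF(x)$ of $(x+1)$-node cliques constructed above, with $x$ chosen so that $(x+1)\log\log(x+1)=\Theta(n\log\log n)$, and to derive the desired lower bound from a pigeonhole argument on advice.

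First, I would verify that each clique $C_t\in\cF(x)$ has election index exactly~$1$. Because every shift $h_j\in\{1,\dots,x-1\}$ is nonzero, the port permutations rotate the labels at each $v_j$ away from those of the base clique, and a routine case analysis shows that the $x+1$ augmented truncated views at depth~$1$ of $r,v_0,\dots,v_{x-1}$ are pairwise distinct. Thus each $C_t$ is an $(x+1)$-node graph of election index~$1$.

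Next I would analyze the conditions under which a single advice string can simultaneously support correct leader election in time~$1$ on two different cliques of $\cF(x)$. The key point is that if the algorithm uses the same advice on $C_t$ and $C_{t'}$, then any pair of nodes---one in each clique---with identical depth-$1$ views is forced to output identical sequences of ports. Such a common sequence must then trace a simple path to the respective leader in each clique. Using this observation, I would isolate a subfamily $\mathcal{H}\subseteq\cF(x)$ of cliques that are pairwise incompatible in this sense, of size $|\mathcal{H}|\ge 2^{\Omega(x\log\log x)}$. One natural construction is to restrict each coordinate $h_j$ of the shift sequence to lie in a well-chosen subset $S\subseteq\{1,\dots,x-1\}$ of size $\Theta(\log x)$, yielding $|\mathcal{H}|=|S|^x=2^{\Theta(x\log\log x)}$ cliques.

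Finally, the lower bound follows by pigeonhole: if some algorithm used advice of size $s(n)=o(x\log\log x)=o(n\log\log n)$ on all graphs in the considered class, then only $2^{s(n)}$ distinct advice strings would be available, so two distinct cliques in $\mathcal{H}$ would receive the same advice, contradicting their pairwise incompatibility.

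The main obstacle will be designing $\mathcal{H}$ and proving incompatibility: for any two cliques in the subfamily, one must identify a pair of nodes with identical depth-$1$ views and show that no single output sequence can describe a simple path to the leader in both cliques simultaneously. Translating this into a quantitative counting bound of the right order requires a careful combinatorial analysis of how shift sequences determine depth-$1$ views and of the consistency constraints that forced common outputs impose on the respective leader positions.
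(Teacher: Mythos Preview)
Your approach diverges from the paper's and, as written, leaves open precisely the step that carries all the weight.

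The paper does not use the bare cliques $C_t\in\cF(x)$ as the lower-bound instances. Instead it builds a graph $H_k$ by attaching $k$ pairwise distinct cliques from $\cF(x)$ to the nodes of a $k$-ring (with $x\approx 2\log k/\log\log k$), and then forms the family $\cG_k$ of $(k-1)!$ graphs by permuting which clique sits at which ring position. The ring is essential: the augmented depth-$1$ view of the attachment node of a fixed clique $C_t$ is \emph{the same in every graph of $\cG_k$}, no matter where on the ring that clique lands. This built-in coincidence of views is what makes the incompatibility argument go through: if two graphs received the same advice, the attachment node of some clique would be forced to output the same port sequence in both, yet its ring-distance to the leader's clique differs, a contradiction. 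Counting then gives $\Omega(\log((k-1)!))=\Omega(k\log k)=\Omega(n\log\log n)$.

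Your plan lacks this mechanism. Across the bare cliques of $\cF(x)$ there is no reason for any node of $C_t$ to share its $\cB^1$-view with any node of $C_{t'}$; for the distinguished node $r$ one checks directly that $\cB^1(r)$ determines the shift sequence $(h_0,\dots,h_{x-1})$, so $r$'s view is different in every clique of the family. Your proposed restriction $h_j\in S$ with $|S|=\Theta(\log x)$ produces the right cardinality $|S|^x=2^{\Theta(x\log\log x)}$, but you give no argument that any two resulting cliques even share a depth-$1$ view, let alone in a way that forces inconsistent outputs under common advice. Since your incompatibility argument is premised on a shared view, without one it does not start; the ``main obstacle'' you flag is not a detail to be filled in later but the entire substance of the lower bound.

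A secondary point: the assertion that each bare $C_t$ has election index $1$ is not justified by $h_j\neq 0$ alone and depends on the arbitrary base port assignment in $C$. The paper only establishes election index $1$ for the ring-with-cliques graphs, where the attachment node has degree $x+2$ and is therefore already distinguished from all other nodes by degree.
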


\begin{proof}
Fix an integer $k \geq 2^{16}$, and let $x=\lceil 2\log k/\log\log k\rceil$. We have $k \leq y= (x-1)^x$.
We first define a graph $H_k$ using the family $\cF(x)${, cf. Fig.~\ref{fig:f1}}.  

\begin{figure}[httb!]
	\begin{center}
	\includegraphics[width=0.4\textwidth]{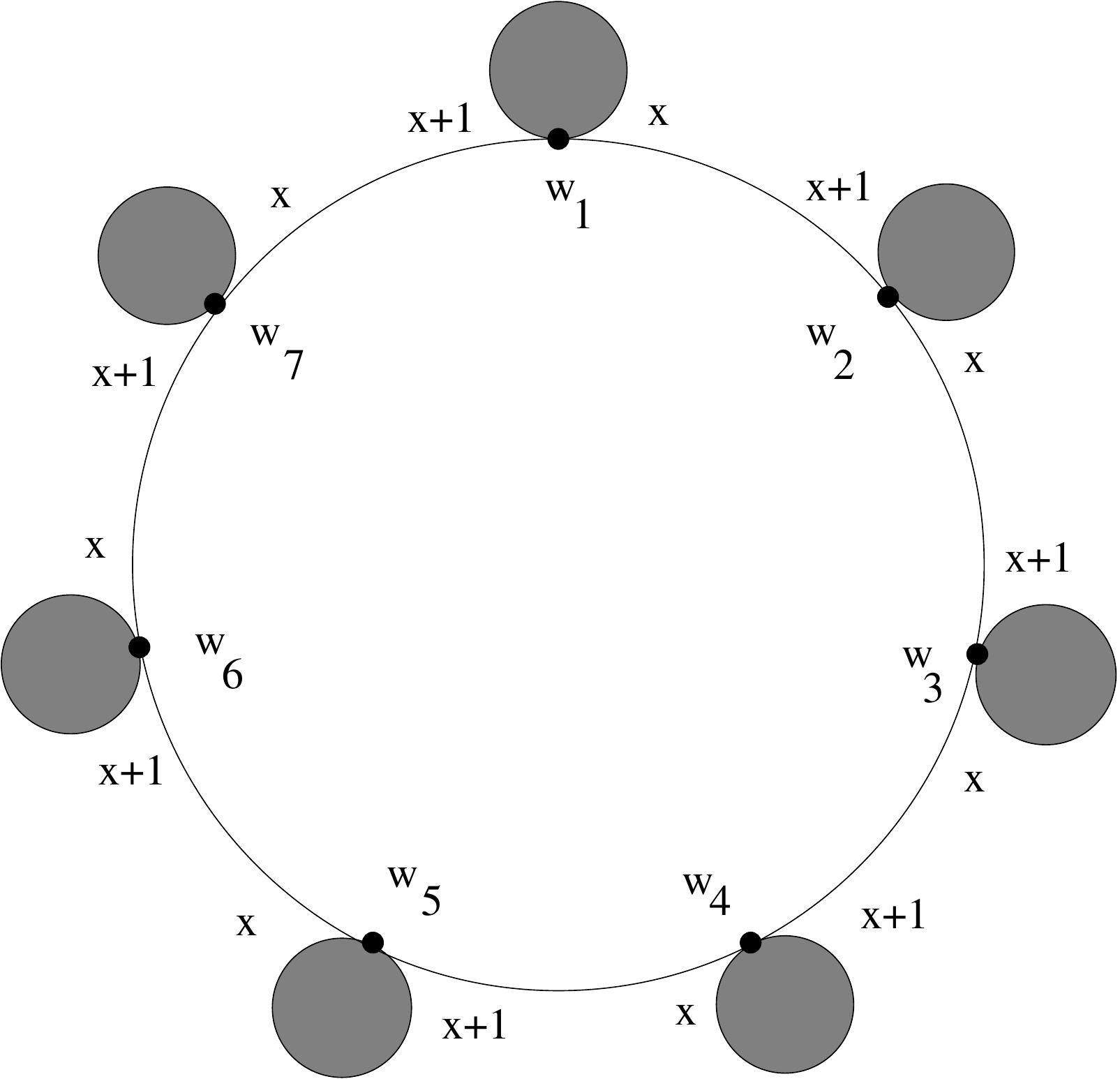}
	\caption{A representation of the graph $H_7$. The grey discs represent cliques from $\cF(x)$.}
	\label{fig:f1}
	\end{center}
\end{figure}

Consider a ring of size $k$ with nodes $w_1, \dots , w_k$. Attach an isomorphic copy of the clique $C_t$ to node $w_t$,
by identifying $w_t$ with node $r$ of this copy and taking all other nodes different from the nodes of the ring. (The term ``isomorphic'' means that all port numbers are preserved.)  
All attached cliques are pairwise node-disjoint.
Assign ports $x$ and $x+1$ corresponding to edges of the ring at each of its nodes, in the clockwise order. This concludes the construction of graph $H_k$.

Finally we produce a family $\cG_k$ consisting of $(k-1)!$ graphs as follows. Keep the clique at node $w_1$ of the ring in {$H_k$} fixed, and permute arbitrarily cliques attached to all other nodes of the ring. 
Then delete all node labels.
 
The proof  relies on the following two claims.
The first claim establishes the election index of graphs in  $\cG_k$.

\begin{claim}\label{index1}
All graphs in the family $\cG_k$ have election index 1.
\end{claim}

To prove the claim it is enough to show that all these graphs have election index at most 1.  Hence it suffices to show that all augmented truncated views at depth 1 are distinct.
Fix a graph $G$ in $\cG_k$. Consider two nodes $u$ and $v$ of $G$. First suppose that they belong to the same clique. If they have different degrees (in the graph $G$) then 
 $\cB^1(u) \neq \cB^1(v)$. If they have the same degree, then the port numbers at the unique node $r$ of degree $x+2$ in this clique corresponding
 to edges $\{r,u\}$ and $\{r,v\}$ must be different, and hence $\cB^1(u) \neq \cB^1(v)$ as well. Next suppose that $u$ and $v$ are in different cliques $C_u$ and $C_v$,
 respectively.
 Again, if they have different degrees then 
 $\cB^1(u) \neq \cB^1(v)$. Hence assume that they have the same degree. Consider two cases.
 
 \noindent
 Case 1. The degree of $u$ and of $v$ is $x+2$.

 By the construction of the family $\cF(x)$, there exists an integer $0 \leq i \leq x-1$ with the following property.
 Let $u'$ be the node of $C_u$, such that the port at $u$ corresponding to edge $\{u,u'\}$ is $i$, and let
 $v'$ be the node of $C_v$, such that the port at $v$ corresponding to edge $\{v,v'\}$ is $i$.
 The port at $u'$ corresponding to edge $\{u,u'\}$ is different from  the  port at $v'$ corresponding to edge $\{v,v'\}$, and hence  $\cB^1(u) \neq \cB^1(v)$.
 
  \noindent
 Case 2. The degree of $u$ and of $v$ is $x$.

 Let $r_u$ be the unique node of degree $x+2$ in the clique $C_u$, and let $r_v$ be the unique node of degree $x+2$ in the clique $C_v$.
 Consider the edges $e_u=\{r_u,u\}$ and $e_v=\{r_v,v\}$. If the port number at $r_u$ corresponding to edge $e_u$ is different from the
 port number at $r_v$ corresponding to edge $e_v$, or  the port number at $u$ corresponding to edge $e_u$ is different from the
 port number at $v$ corresponding to edge $e_v$, then $\cB^1(u) \neq \cB^1(v)$. Hence assume that the respective port numbers are equal. 
 For any integer $0 \leq i \leq x-1$, let $a_i$ be the node of degree $x$ in $C_u$, such that the port number at $r_u$ corresponding to edge $\{r_u, a_i\}$ is $i$,
 and let $b_i$ be the node of degree $x$ in $C_v$, such that the port number at $r_v$ corresponding to edge $\{r_v, b_i\}$ is $i$. 
 By the construction of  the family $\cF(x)$, we have the following two properties:\\
 1. The port number at $u$ corresponding to edge $\{u, a_i\}$, for any $a_i\neq u$, is equal to the port number at  $v$ corresponding to edge $\{v, b_i\}$
 (because the port number at $u$ corresponding to edge $e_u$ is equal to the
 port number at $v$ corresponding to edge $e_v$);\\
 2. There exists an integer $0 \leq i \leq x-1$, {such that} $a_i\neq u$ and the port number at $a_i$ corresponding to edge $\{u, a_i\}$, is different from the port number at  $b_i$ corresponding to edge $\{v, b_i\}$ (because the cliques $C_u$ and $C_v$ correspond to different cliques from the family $\cF(x)$).\\
 Hence $\cB^1(u) \neq \cB^1(v)$ in all cases.  This concludes the proof of the claim. 
 
 The next claim will imply a lower bound on the number of different pieces of advice needed to perform election in the family  $\cG_k$ in time 1.
 
\begin{claim}\label{distinct-advice} 
Consider any election algorithm working for the family  $\cG_k$ in time 1. The advice given to distinct graphs in this family must be different. 
\end{claim}

In the proof of the claim we will use the following observation that follows from the fact that port numbers in the ring in all graphs of the family  $\cG_k$ are the same.

\noindent
{\bf Observation.}
Let $G_1$ and $G_2$ be any graphs from the family $\cG_k$. Let $C_t$ be an arbitrary clique from the family $\cF(x)$ used to form the graph $H_k$.
For $j=1,2$, let $r_j$ be the node of $G_j$ by which the clique isomorphic to $C_t$ is attached to the ring in this graph. Then $\cB^1(r_1) = \cB^1(r_2)$.

The proof of the claim is by contradiction. Fix an election algorithm and suppose that two graphs $G_1$ and $G_2$ from the family $\cG_k$ get the same advice.
Let $z_1$ be the node elected in $G_1$ and $z_2$ the node elected in $G_2$. Denote by $C'$ the clique containing $z_1$ and by $C''$ the clique containing
$z_2$. Let $r'$ be the unique node of degree $x+2$ in $C'$ and let $r''$ be the unique node of degree $x+2$ in $C''$.

Consider two cases.

\noindent
Case 1. The cliques $C'$ and $C''$ are non-isomorphic.

Let $s$ be the node in $G_2$ at which the clique isomorphic to $C'$ is attached. By the observation, the augmented truncated view $\cB^1(r')$ in $G_1$
is equal to the augmented truncated view $\cB^1(s)$ in $G_2$.  Since $G_1$ and $G_2$ get the same advice, nodes $r'$ in $G_1$ and $s$ in $G_2$ must output
the same sequence of port numbers corresponding to a simple path to the leader. The sequence outputted by node $r'$ in $G_1$ cannot contain the port number $x$, hence the sequence outputted by node $s$ in $G_2$ cannot contain the port number $x$ either.
This is a contradiction, because every path from $s$ to $z_2$ in $G_2$ must use port $x$ at least once.  

\noindent
Case 2. The cliques $C'$ and $C''$ are isomorphic.

Since graphs $G_1$ and $G_2$ differ only by the permutation of cliques, there must exist isomorphic cliques $D'$ in $G_1$ and $D''$ in $G_2$, such that the clockwise
distance (in the ring) from the unique node $s'$ of degree $x+2$ in $D'$ to node $r'$ is different than the clockwise
distance (in the ring) from the unique node $s''$ of degree $x+2$ in $D''$ to node $r''$. Since $D'$ and $D''$ are isomorphic, 
by the observation, the augmented truncated view $\cB^1(s')$ in $G_1$
is equal to the augmented truncated view $\cB^1(s'')$ in $G_2$.  Since $G_1$ and $G_2$ get the same advice, nodes $s'$ in $G_1$ and $s''$ in $G_2$ output
the same sequence of port numbers corresponding to a simple path to the leader. Without loss of generality, suppose that the first number in the outputted sequences is $x$;
the case when it is $x+1$ is similar. Due to the differences in clockwise distances from $s'$ to $r'$ and from $s''$ to $r''$, the number of integers $x$ in both sequences 
must be different, which gives a contradiction. This concludes the proof of the claim.

Our lower bound will be shown on the family $\cG= \bigcup _{k=2^{16}}^\infty \cG_k$. 
Consider an election algorithm working in all graphs of this family in time 1. 
For any $k\geq 2^{16}$, let $n_k=k( \lceil 2\log k/\log\log k\rceil +1)$.
Graphs in the family $\cG _k$ have size $n_k$. 
By Claim \ref{index1}, all these graphs have election index 1. By Claim \ref{distinct-advice}, all of them must get different advice.
Since, for any $k\geq 2^{16}$, there are $(k-1)!$ graphs in $\cG _k$, at least one of them must get advice of size $\Omega( \log ((k-1)!))=\Omega(k \log k)$.
We have $k \log k \in \Theta(n_k \log\log n_k)$. Hence there exists an infinite sequence of integers $n_k$ such that there are $n_k$-node graphs  with election index 1
that require advice of size  $\Omega(n_k \log\log n_k)$ for election in time~1.
\end{proof}

We next consider the election index $\phi>1$. The lower bound in this case uses a construction slightly more complicated than for $\phi=1$. Note that a straightforward generalization of the previous construction would lead to a lower bound $\Omega(n\log\log n/\phi)$ which would be too weak for our purpose, as $\phi$ can be
much larger than polylogarithmic in $n$. 

\begin{theorem}\label{lb>1}
Let $\phi$ be an integer larger than 1.
For arbitrarily large integers $n$, there exist $n$-node graphs with election index $\phi$, such that leader election in time $\phi$ in these graphs requires advice of size 
$\Omega(n (\log\log n)^2)/\log n)$.
\end{theorem}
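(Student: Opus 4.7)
The plan follows the template of Theorem \ref{lb1}: construct, for arbitrarily large $n$, a family $\cG_k$ of $n$-node graphs with election index exactly $\phi$, show that any algorithm electing a leader in time $\phi$ must assign pairwise distinct advice strings to the graphs of $\cG_k$, and then derive the bound from $\log|\cG_k|$. The arithmetic target is $\log|\cG_k|=\Omega(n(\log\log n)^2/\log n)$, which by Stirling is achieved as $\log((k-1)!)\in\Theta(k\log k)$ provided we can arrange for $k=\Theta(n(\log\log n)^2/(\log n)^2)$ ring-attached gadgets each of size $s=\Theta((\log n/\log\log n)^2)$. Note $\log k=\Theta(\log n)$ and $ks\log s$ then matches the required order, whereas the $\phi$-factor loss mentioned in the excerpt is avoided because $s$ depends only on $n$, not on $\phi$.

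To build the family, I would take a ring of $k$ nodes with the same fixed port numbering on ring edges as in Theorem \ref{lb1}, and attach at each ring node an isomorphic copy of a \emph{compound gadget}. A compound gadget will consist of two parts glued together: a common \emph{depth-inflating skeleton} (the same for every ring node, identical across all graphs of the family) whose role is to force coincidences of augmented truncated views at depth $\phi-1$ so that the election index cannot drop below $\phi$; and a \emph{distinguishing head}, which is a port-labeled clique from a family analogous to $\cF(x)$ but with a suitably larger $x$ chosen so that the head has $\Theta(s)$ nodes. The head is attached at a single ``anchor'' vertex of the skeleton, and the skeleton itself is designed so that, viewed from any ring node, all information about the head lies at graph-distance exactly $\phi$. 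The family $\cG_k$ is then obtained by permuting the distinguishing heads around the ring (fixing the one attached to $w_1$ to break rotational symmetry), giving $(k-1)!$ labeled-port configurations, and erasing node labels.

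The two key claims to prove are analogues of Claims \ref{index1} and \ref{distinct-advice}. For the election-index claim, I would check on the one hand that by construction of the skeleton there is a pair of nodes (inside the skeleton, at distance less than $\phi$ from the anchor) whose augmented truncated views coincide at depth $\phi-1$, so $\phi(G)\geq\phi$; and on the other hand that at depth $\phi$ the distinguishing heads become visible and the case analysis from Theorem \ref{lb1} (broken by degree, by ring location, and by the internal port shifts defining $\cF(x)$) separates all pairs of nodes, giving $\phi(G)\leq \phi$. Proposition \ref{prop-index} then pins $\phi(G)=\phi$. For the distinct-advice claim, I would reuse the observation that ring-edge port numbers are the same in every graph of $\cG_k$, so for two graphs $G_1,G_2\in\cG_k$ and any head $C_t$ placed at positions $r_1$ in $G_1$ and $r_2$ in $G_2$, the augmented truncated views $\cB^\phi(r_1)$ and $\cB^\phi(r_2)$ are equal. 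If both graphs receive identical advice, these two nodes must output identical port sequences, and the mismatch between clockwise distances to the leader's gadget (the $x$ versus $x+1$ argument in the proof of Claim \ref{distinct-advice}) yields a contradiction just as before.

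The main obstacle will be engineering the skeleton so that all four properties hold simultaneously: (a) it brings the election index up to exactly $\phi$ without collapsing it at depth $\phi$ once the heads are included; (b) it occupies only $O(s)$ nodes per gadget, so the counting $k=\Theta(n/s)$ actually holds; (c) it is identical at every ring node, so permuting heads does not create incidental asymmetries that could be exploited by the algorithm; (d) from the root of any head, the neighborhood up to depth $\phi$ is rich enough that the port-shift family of heads from $\cF(x)$ still yields all-distinct augmented truncated views. Once such a skeleton is in hand, the rest of the argument, including the final counting step $\log|\cG_k|=\Theta(k\log k)=\Theta(n(\log\log n)^2/\log n)$, is parallel to Theorem \ref{lb1}.
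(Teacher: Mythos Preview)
Your ring-with-gadgets plan differs from the paper's construction and can be made to work, but two points in your analysis are off and should be corrected before the argument goes through.

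First, your claim that the gadget size $s$ ``depends only on $n$, not on $\phi$'' is not compatible with the role you assign to the skeleton. If the head is to lie at graph-distance exactly $\phi$ from the ring vertex, then the skeleton must contain a path of length $\phi-1$ from the ring vertex to the head anchor, hence the skeleton has at least $\phi-1$ nodes. So $s\geq \phi-1 + x$ and the gadget size \emph{does} depend on $\phi$. For any fixed $\phi$ this is harmless (for large $n$ the head dominates), and since the theorem fixes $\phi$ before letting $n$ grow, your arithmetic still yields the target $\Omega(n(\log\log n)^2/\log n)$. But you should state this dependence honestly rather than claim it away.

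Second, the pair of nodes witnessing election index $\geq\phi$ is not ``inside the skeleton'' of a single gadget. With a path skeleton there is no such pair within one gadget: interior path nodes at different depths are separated by the asymmetric endpoints (ring vertex of degree~$3$ versus head anchor of degree~$x{+}1$). The pair you actually need is two \emph{ring vertices} $w_j,w_{j'}$ in different gadgets. At depth $\phi-1$ a ring vertex sees its own head anchor only as a leaf with its degree, not the head's port labels, and all neighbouring gadgets look identical; at depth $\phi$ the own head's anchor-to-clique port labels become visible and separate the $w_j$'s. Your distinct-advice argument (the analogue of Claim~\ref{distinct-advice}) then works because, at depth $\phi$, a ring vertex sees the neighbouring head anchors only as leaves of fixed degree $x{+}1$, never their port labels.

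For comparison, the paper avoids putting a $\phi$-sized structure at every site: it replaces the ring by a linear ``necklace'' of joints, diamonds, and emeralds, and hangs the two depth-$(\phi{-}1)$ chains only at the \emph{ends}. The witnessing pair is the two chain leaves, and the family is obtained not by permuting heads but by shifting port numbers inside the $k-1$ diamonds, giving $(x{+}1)^{\Theta(k)}$ graphs and a bound of $\Omega(k\log\log k)$. With $n=\Theta(kx)$ and $x=\Theta(\log k/\log\log k)$ this yields the same $\Omega(n(\log\log n)^2/\log n)$, but now the $\phi$-contribution to $n$ is additive ($2(\phi-1)$ nodes total), so the construction works uniformly for $\phi$ up to $\Theta(n\log\log n/\log n)$ rather than only up to $\Theta((\log n/\log\log n)^2)$.
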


\begin{proof}
Fix an even integer $k \geq 2^{16}$, and let $x=\lceil 2\log k/\log\log k\rceil$. We have $k \leq y= (x-1)^x$.
We construct a family $\cN_k$ of graphs, called $k$-{\em necklaces}. This family  is derived from a graph $M_k$ defined as follows{, cf. Fig. \ref{fig:f2}}. 

\begin{figure}[httb!]
	\begin{center}
	\includegraphics[width=0.8\textwidth]{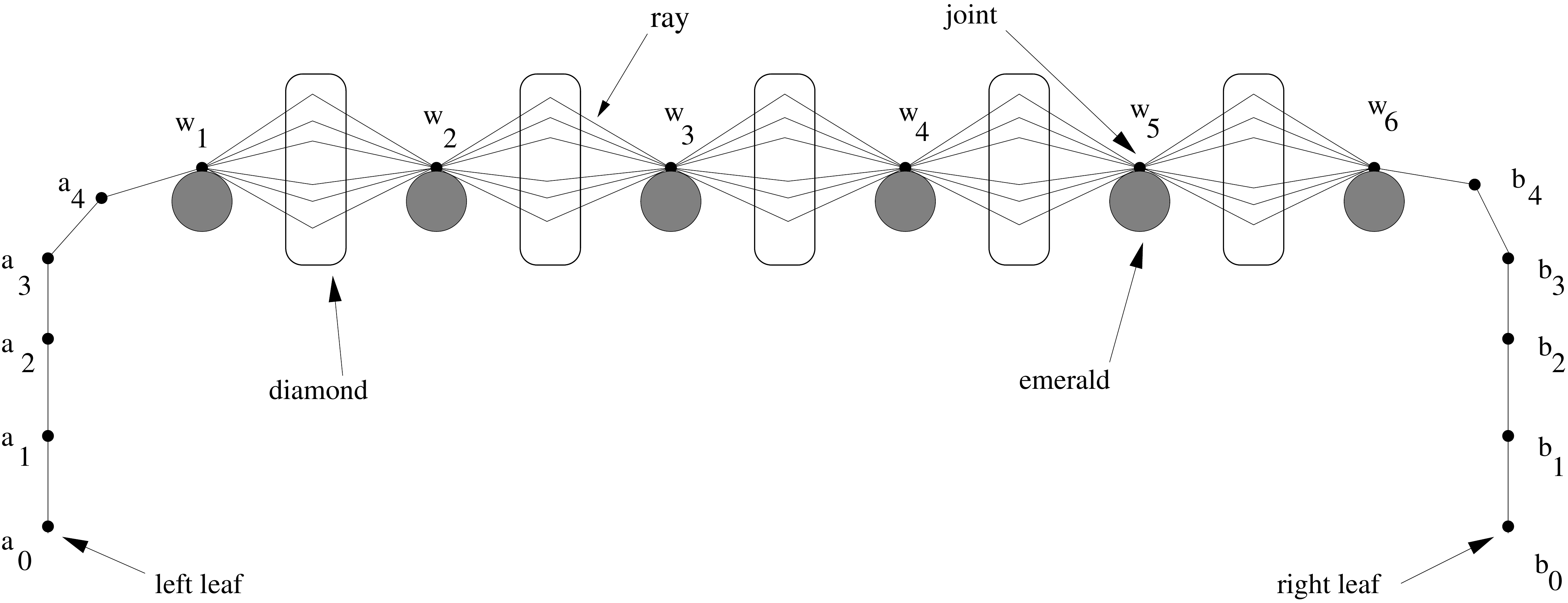}
	\caption{A representation of the graph $M_k$ for $k=6$.}
	\label{fig:f2}
	\end{center}
\end{figure}

Let $w_1,\dots, w_k$ be nodes, that will be called {\em joints}. Let $D_1, \dots, D_{k-1}$
be pairwise disjoint cliques of size $x$. These cliques will be called {\em diamonds}. Attach every node of $D_i$, for $i=1,\dots,k-1$, to $w_i$ and to $w_{i+1}$ by edges called {\em rays}.
Next, let $E_1,\dots ,E_k$ be distinct cliques from the family $\cF(x)$. These cliques will be called {\em emeralds}. Attach $E_i$ to $w_i$, for $i=1,\dots,k$, by identifying
$w_i$ with the node $r$ of $E_i$ (see the definition of the family $\cF(x)$). Since $k \leq (x-1)^x$, there are enough cliques in $\cF(x)$. Finally, consider chains of nodes
$(a_0,\dots,a_{\phi -2})$ and $(b_0,\dots,b_{\phi -2})$, such that all these nodes are distinct and different from all previously constructed nodes. Attach $a_{\phi -2}$ to $w_1$ by an edge and attach $b_{\phi -2}$ to $w_k$ by an edge. 

We next assign port numbers to the above constructed graph $M_k$. First fix the same arbitrary port numbering (from the range $\{0,\dots ,x-2\}$) inside each diamond $D_i$. Then,  
for any diamond $D_i$, assign number $x-1$ to the port at any node of $D_i$ corresponding to the ray joining it to $w_i$, and assign number $x$ to the port at any node of $D_i$ corresponding to the ray joining it to $w_{i+1}$. Keep the port numbering inside each emerald as it is defined in the description of the family $\cF(x)$. 
Next we define port numbers corresponding to all rays at nodes $w_1,\dots, w_k$. Port numbers at node $w_1$ corresponding to rays joining it  to diamond $D_1$,
and port numbers at node $w_k$ corresponding to rays joining it  to diamond $D_{k-1}$ are assigned arbitrarily from the range $\{x,\dots,2x-1\}$. The port number at node $w_1$ corresponding to the edge joining it to $a_{\phi -2}$, and the port number at node $w_k$ corresponding to the edge joining it to $b_{\phi -2}$ is $2x$.   
Consider a node $w_i$,
for $1<i<k$. If $i$ is even, then port numbers at node $w_i$ corresponding to rays joining it  to diamond $D_{i-1}$ are assigned arbitrarily from the range $\{x,\dots,2x-1\}$,
and port numbers at node $w_i$ corresponding to rays joining it  to diamond $D_{i}$ are assigned arbitrarily from the range $\{2x,\dots,3x-1\}$.
If $i$ is odd, then port numbers at node $w_i$ corresponding to rays joining it  to diamond $D_{i-1}$ are assigned arbitrarily from the range $\{2x,\dots,3x-1\}$,
and port numbers at node $w_i$ corresponding to rays joining it  to diamond $D_{i}$ are assigned arbitrarily from the range $\{x,\dots,2x-1\}$. It remains to assign port numbers at nodes of the two chains. {The unique port number at nodes $a_0$ and $b_0$
is 0.  Call node $a_0$ the {\em left leaf} and call node $b_0$ the {\em right leaf}. If $\phi>2$ then $\phi-2>0$, and the port number at node $a_{\phi-2}$ (resp. $b_{\phi-2}$) corresponding to the edge joining it to $w_1$ (resp. to $w_k$) is $0$, while the other port number at node $a_{\phi-2}$ (resp. $b_{\phi-2}$) corresponding to the unique other edge is $1$. If $\phi>3$ then $\phi-2>1$ and, for every $0<i<\phi-2$, the port number at node $a_i$ (resp. $b_i$) corresponding to the edge joining it to $a_{i-1}$ (resp. $b_{i-1}$) is 1, and the port number at node $a_i$ (resp. $b_i$) corresponding to the edge joining it to $a_{i+1}$ (resp. $b_{i+1}$) is 0.} Finally we delete all node labels.  This concludes the construction of graph  $M_k$.

%For $i>0$, the port number at node $a_i$ (resp. $b_i$) corresponding to the edge joining it to $a_{i-1}$ (resp. $b_{i-1}$) is 1,  and
%the port number at node $a_i$ (resp. $b_i$) corresponding to the edge joining it to $a_{i+1}$ (resp. $b_{i+1}$) is 0. The unique port number at nodes $a_0$ and $b_0$
%is 0.  Call node $a_0$ the {\em left leaf} and call node $b_0$ the {\em right leaf}. Finally we delete all node labels.  This concludes the construction of graph  $M_k$.  

The family  $\cN_k$ of $k$-necklaces is defined from the graph $M_k$ as follows. Let $C=(c_1,\dots ,c_k)$ be any sequence of integers from the range $\{0,\dots ,x\}$,
such that $c_1=c_k=0$. Such a sequence is called the {\em code} of a graph in $\cN_k$. The graph corresponding to code $C$ is obtained from $M_k$ by
replacing any port number $p$ at any node of $D_i$, for $i \leq k$, by $(p+c_i) \mod (x+1)$. All the rest of the graph $M_k$ remains intact. This concludes the construction of the family of $k$-necklaces.

The proof  relies on the following two claims similar to those from the proof of Theorem \ref{lb1}.
The first claim establishes the election index of graphs in  $\cN_k$.

\begin{claim}\label{index-phi}
All graphs in the family $\cN_k$ have election index $\phi$.
\end{claim}

To prove the claim first observe that the election index of graphs in the family $\cN_k$ is at least $\phi$ because, by the construction of the graph $M_k$,
we have $\cB^{\phi -1}(v)= \cB^{\phi -1}(w)$, where $v$ and $w$ are the only nodes of degree 1 in any $k$-necklace {(i.e., the left and right leaves)}. Hence it is enough to prove that the augmented truncated views at
depth $\phi$ of any two nodes in any $k$-necklace are different.

Consider any two nodes $v$ and $w$ in a $k$-necklace. 
If {$v$} is a joint and {$w$} is not, then $v$ and $w$ have different degrees, hence {$\cB^1(v) \neq \cB^1(w)$}, and hence $\cB^{\phi }(v)\neq \cB^{\phi }(w)$.
If both $v$ and $w$ are joints then {$\cB^1(v) \neq \cB^1(w)$}, using an argument similar to that in Case 1 in the proof of Claim \ref{index1}
(because {$v$ and $w$} correspond to nodes $r$ in two different graphs from the family $\cF(x)$), and hence $\cB^{\phi }(v)\neq\cB^{\phi }(w)$. Hence we may assume that none of nodes $v$ and $w$ are joints. Every node that is not a joint is at distance at most $\phi-1$ from a joint. Consider the joint $v'$ corresponding
to the lexicographically smallest among the shortest paths from $v$ to a joint, and a joint $w'$ corresponding
to the lexicographically smallest among the shortest paths from {$w$} to a joint. Let $s_v$ be the sequence of port numbers in the first path, and
let $s_w$ be the sequence of port numbers in the second path.  If $s_v \neq s_w$, then $\cB^{\phi-1 }(v)\neq\cB^{\phi -1 }(w)$, and hence $\cB^{\phi }(v)\neq\cB^{\phi }(w)$.
If $s_v= s_w$ then $v'\neq w'$ because the same sequence of port numbers cannot correspond to paths from distinct nodes to the same node.
As stated above, the augmented truncated views at depth 1
of all joints are unique. 
It follows that  $\cB^{\phi }(v)\neq \cB^{\phi }(w)$. This proves the claim.

 The next claim will imply a lower bound on the number of different pieces of advice needed to perform election in the family  $\cN_k$ in time $\phi$.
 
\begin{claim}\label{distinct-advice-phi} 
Consider any election algorithm working for the family  $\cN_k$ in time $\phi$. The advice given to distinct graphs in this family must be different. 
\end{claim}

In the proof of the claim we will use the following observation following from the fact that all codes of $k$-necklaces start and finish with a 0.

\noindent
{\bf Observation.}
Let $N_1$ and $N_2$ be any graphs from the family $\cN_k$. Augmented truncated views at depth $\phi$ of left leaves in $N_1$ and $N_2$ are equal, and
augmented truncated views at depth $\phi$ of right leaves in $N_1$ and $N_2$ are equal.  

The proof of the claim is by contradiction. Fix an election algorithm and suppose that two graphs from the family $\cN_k$, graph $N_1$ with code {$(c_1,\dots, c_{k})$} and graph $N_2$ with code {$(c'_1,\dots, c'_{k})$}, get the same advice. Let $i$ be the smallest index such that $c_i \neq c'_i$. 
In view of the unicity of the leader in every  $k$-necklace, the sequence outputted by the left leaf of $N_1$, or the sequence outputted by the right leaf of $N_1$
must correspond to a simple path containing a ray of the diamond $D_i$. Suppose that this is the case for the left leaf and that $i$ is even. The remaining three cases
can be analyzed similarly. Denote by $\sigma$ the prefix of the sequence outputted by the left leaf of $N_1$ that corresponds to a path from this leaf to the ``$i$th joint
from the left'', i.e., more precisely, the joint at distance $(\phi -1)+2(i-1)$ from the left leaf. The first term following the prefix $\sigma$ in the sequence outputted by the left leaf of $N_1$ is some integer $y \in \{2x,2x+1,\dots , 3x-1\}$ (because otherwise the corresponding path would visit at least twice the $i$th joint
from the left, and hence this path would not be simple) and the second term following the prefix $\sigma$ in the sequence outputted by the left leaf of $N_1$ is $(x+c_i) \mod (x+1)$ by the construction of the family $\cN_k$.
By the observation, the left leaf of $N_2$ outputs the same sequence as the left leaf of $N_1$, with the same prefix $\sigma$, followed by$(y,(x+c_i) \mod (x+1))$.
By the minimality of $i$, the path corresponding to $\sigma$ reaches the $i$th joint
from the left in $N_2$. By the construction of the family $\cN_k$, the edge incident to the $i$th joint
from the left in $N_2$ with port number $y$ at this joint has the other port number $(x+c'_i) \mod (x+1)$. 
This is a contradiction, because $(x+c_i) \mod (x+1)\neq (x+c'_i) \mod (x+1)$, as $c_i\neq c'_i$ and both $c_i$ and $c'_i$ are at most $x$. This proves the claim.

Consider the family $\cN= \bigcup _{k=\max(2^{16},\phi)}^\infty \cN_k$. Our lower bound will be proven on the family $\cN$.
Consider an election algorithm working in all graphs of this family in time $\phi$. For any $k \geq \max(2^{16},\phi)$, let 
$n_k=2(\phi-1)+ k(x-1)+(k-1)x$. Graphs in the family $\cN _k$ have size $n_k$. Since $x=\lceil 2\log k/\log\log k\rceil$, we have $n_k \in \Theta(k\log k/\log\log k)$.
By Claim \ref{index-phi}, all these graphs have election index $\phi$. By Claim \ref{distinct-advice-phi}, all of them must get different advice.
Since, for any $k \geq \max(2^{16},\phi)$, there are $(x+1)^{k-3}$ graphs in $\cN _k$, at least one of them must get advice of size $\Omega( \log( (x+1)^{k-3}))=\Omega(k \log\log k)$. We have $k \log\log k \in \Theta(n_k( \log\log n_k)^2/\log n_k)$. 
Hence there exists an infinite sequence of integers $n_k$ such that there are $n_k$-node graphs  with election index $\phi$
that require advice of size  $ \Omega(n_k( \log\log n_k)^2/\log n_k)$ for election in time $\phi$.
\end{proof}

\section{Election in large time}

In this section we study the minimum size of advice sufficient to perform leader election when the allocated time is large, i.e., when it exceeds the diameter of the graph
by an additive offset which is some function of the election index $\phi$ of the graph. We consider four values of this offset, for an integer constant $c>1$:  $\phi+c$,
$c\phi$, $\phi^c$, and $c^{\phi}$. In the first case the offset is asymptotically equal to $\phi$, in the second case it is linear in $\phi$ but the multiplicative  constant is larger than 1,
in the third case it is polynomial in $\phi$ but super-linear, and in the fourth case it is exponential in $\phi$. Note that, even in the first case, that calls for the fastest election among these four cases (in time $D+\phi +c$), the allocated
time is large enough for all nodes to see all the differences in truncated views of other nodes, which makes a huge difference between leader election in such a time and in the minimum possible time $\phi$. For all these four election times, we  establish tight bounds on the minimum size of advice that enables election in this time,
up to multiplicative constants. 

%Let $f$ be any function from the set of positive integers to the set of positive integers, such that $f(x)\geq x$.
We start by designing an algorithm that performs leader election in time at most $D+x+1$, for any graph of diameter $D$ and election index $\phi$,
provided that nodes receive as input an integer $x \geq \phi)$. Note that nodes of the graph know  neither $D$ nor $\phi$.
We will then show how to derive from this generic algorithm four leader election algorithms using larger and larger time and smaller and smaller advice.

The high-level idea of  Algorithm {\tt Generic} is the following. Nodes of the graph communicate between them and acquire augmented truncated views at increasing depths. Starting from round $x$, they discover augmented truncated views at depth $x$ of an increasing set of nodes. They stop in the round when no new 
augmented truncated views at depth $x$ are discovered. At this time, we have the guarantee that all nodes learned all augmented truncated views at depth $x$.
Hence, to solve leader election, it suffices that every node outputs a sequence of port numbers leading to a node with the  lexicographically smallest
augmented truncated view at depth $x$.  
Since $x \geq \phi$, the augmented truncated view at depth $x$ of every node is unique in the graph. Hence all nodes output sequences of port numbers leading to the same
node.

Below we give a detailed description of Algorithm {\tt Generic}. It is executed by a node $u$ that gets the integer $x$ as input. 

\begin{algorithm}
{ \caption{\tt{Generic}$(x)$\label{alg:generic}}

{\bf for} $r:=0$ {\bf to} $x-1$ {\bf do} $COM(r)$\\
$r \leftarrow x$\\
{\bf repeat}\\
\hspace*{1cm}$COM(r)$\\
\hspace*{1cm}$\cB \leftarrow \cB^{r+1}(u)$\\
\hspace*{1cm}$X \leftarrow$ the set of augmented truncated views $\cB^{x}(v)$,\\ 
\hspace*{2cm}for all nodes $v$ at depth at most $r-x$ in $\cB$\\ 
\hspace*{1cm}$Y \leftarrow$ the set of augmented truncated views $\cB^{x}(v)$,\\ 
\hspace*{2cm}for all nodes $v$ at depth exactly $r-x+1$ in $\cB$
\hspace*{1cm}\\
{\bf until} $Y \subseteq X$\\
$\cB_{min} \leftarrow$ the lexicographically smallest among augmented truncated views from $X$\\
$W \leftarrow$ the set of nodes $v$ of smallest depth in $\cB$, such that $\cB^{x}(v)=\cB_{min}$\\
$w \leftarrow$ the node from $W$ corresponding to the lexicographically smallest sequence of port numbers\\
{\bf return} the sequence of port numbers corresponding to the shortest path from $u$ to {$w$} in $\cB$

}
\end{algorithm}

The following lemma establishes the correctness of Algorithm {\tt Generic} and estimates its execution time.

\begin{lemma}\label{generic}
For any graph $G$ of diameter $D$ and election index $\phi$,  Algorithm {\tt Generic}$(x)$, with any parameter $x \geq \phi$, is a correct  leader election algorithm
and works in time at most $D+x+1$.
\end{lemma}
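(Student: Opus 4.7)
The plan is to separately verify the running time bound and the correctness of the output. Fix a graph $G$ of diameter $D$ and election index $\phi$, and suppose the algorithm is executed at a node $u$ with parameter $x\geq \phi$. Observe first that after $COM(0),\ldots,COM(r)$ have been executed (a total of $r+1$ rounds), the node $u$ knows the augmented truncated view $\cB^{r+1}(u)$. Consequently, at the moment the set $X$ is computed in the iteration parametrized by $r$, it consists of the augmented truncated views at depth $x$ of all nodes at distance at most $r-x$ from $u$ in $G$, and $Y$ consists of the augmented truncated views at depth $x$ of all nodes at distance exactly $r-x+1$ from $u$ in $G$. (Here I implicitly assume that $r$ is incremented between successive iterations of the \textbf{repeat} loop, as the pseudocode intends.)

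For the running time, consider the iteration $r=D+x$. Then every node of $G$ is at distance at most $D$ from $u$, so $Y=\emptyset$ and the termination condition $Y\subseteq X$ holds. Hence the loop performs at most $D+1$ iterations after the initial phase of $x$ iterations, giving a total running time of at most $D+x+1$ rounds. This already establishes the time bound.

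For correctness, I first argue that, upon termination, $X$ equals the entire set of augmented truncated views at depth $x$ in $G$. Since $x\geq\phi$, Proposition~\ref{prop-index} implies that the augmented truncated views at depth $x$ of the nodes of $G$ are pairwise distinct. Thus, if $Y\subseteq X$, then every node at distance $r-x+1$ from $u$ must coincide with a node at distance at most $r-x$, which is impossible; hence $Y=\emptyset$, which means that no nodes lie at distance $r-x+1$ from $u$, i.e., $r-x\geq e(u)$, where $e(u)$ is the eccentricity of $u$. Consequently $X$ contains the views $\cB^x(v)$ for \emph{all} nodes $v$ of $G$. This is the main step, and it is precisely where the hypothesis $x\geq \phi$ is used.

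Since $X$ is computed from the same multiset of global views by every node, all nodes agree on the view $\cB_{\min}$, and by distinctness this view is $\cB^x(v^*)$ for a unique node $v^*\in G$. In the truncated view $\cB^{r+1}(u)$ computed by $u$, the set $W$ consists of the occurrences of $v^*$ that are closest to the root; all such occurrences lie at depth exactly $d(u,v^*)$, so by picking the lexicographically smallest sequence of ports leading to $W$, each node deterministically selects one shortest path to $v^*$ in $G$. The sequence output by $u$ therefore codes a simple path from $u$ ending at $v^*$, and since $v^*$ does not depend on the executing node, the algorithm is a correct leader election algorithm. The only subtle point — verifying that $v^*$ is indeed the same node across all executions — is precisely the point where distinctness of depth-$x$ views (again using $x\geq\phi$) is needed, and is the main conceptual obstacle in the argument.
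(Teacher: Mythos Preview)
Your overall strategy is the same as the paper's, and your use of distinctness of depth-$x$ views (via Proposition~\ref{prop-index}) to conclude that $X$ is the full set of views upon termination is exactly the key step. However, there is a recurring slip in your identification of $Y$ that you should fix.

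The set $Y$ consists of the views $\cB^x(v)$ for all nodes $v$ at depth exactly $r-x+1$ \emph{in the view tree} $\cB^{r+1}(u)$, not at distance exactly $r-x+1$ in $G$. A tree node at depth $d$ corresponds to the endpoint of a walk of length $d$ from $u$, which may be a graph node at distance strictly less than $d$. For a connected graph on at least two nodes the view tree has nodes at every depth, so $Y$ is never empty. In particular, in your termination argument at $r=D+x$ the claim ``$Y=\emptyset$'' is false; the correct reason the loop halts is the paper's: at $r=D+x$ the set $X$ already contains all augmented truncated views at depth $x$ (every graph node lies at distance at most $D$ from $u$, hence appears at some depth $\le D$ in $\cB$), so $Y\subseteq X$ trivially.

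In your correctness argument the same slip is harmless: your sentence ``every node at distance $r-x+1$ from $u$ must coincide with a node at distance at most $r{-}x$, which is impossible'' is correct and is all that is needed (any such node appears at depth $r-x+1$ in $\cB$, hence its view lies in $Y\subseteq X$, and distinctness forces it to lie at distance $\le r-x$). The further conclusion ``hence $Y=\emptyset$'' does not follow and should simply be deleted; you only need ``no node lies at distance $r-x+1$ from $u$,'' which you have already established. With these two corrections your proof coincides with the paper's.
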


\begin{proof}
All rounds are numbered starting at 0.
Let $u$ be a node of graph $G$ executing Algorithm {\tt Generic}$(x)$, with any parameter $x \geq \phi$.
Let $S$ be the set of augmented truncated views at depth $x$ in $G$. Let $s$ be the node of $G$ such that $\cB^{x}(s)$ is the lexicographically smallest among the
 augmented truncated views from $S$. The node $s$ is unique in view of Proposition \ref{prop-index} and because $x \geq \phi$. First we show that the execution of Algorithm {\tt Generic}$(x)$ by node $u$ stops
at the latest in round $D+x$. If this execution did not stop earlier, in round $D+x$ the value of $X$ is $S$
because, in round $D+x$, the set $X$ is the set of augmented truncated views $\cB^{x}(v)$,
for all nodes $v$ at depth at most $D$ in $\cB^{D+x+1}(u)$. Since $Y\subseteq S$ in every round, it follows that $Y\subseteq X$ in round $D+x$. Hence, in round
$D+x$, node $u$ outputs a sequence of port numbers and stops its execution. 

It remains to show that the sequences of port numbers outputted by all nodes in $G$ correspond to simple paths in this graph, whose other extremity is the same node.
In view of the unicity of $s$, it is enough to show that the node $u$ outputs a sequence corresponding to a simple path in $G$, whose other extremity is $s$.
By the description of the algorithm, node $u$ cannot terminate before round $x$. Hence, as shown above,
there exists an integer $0 \leq j \leq D$, such that $u$ outputs a sequence of ports in round $x+j$. Hence,  in this round, $Y\subseteq X$.
Since every node in $G$ has a unique augmented truncated view at depth $x$, all nodes of $G$ are at distance at most $j$ from $u$.
Hence, in round $x+j$, the set $X$ contains $\cB^{x}(s)$.
By the definition of the variable $w$ in the algorithm, it follows that the node $u$ outputs a sequence of port numbers corresponding to a shortest path leading to $s$. Such a shortest path must be simple, which concludes the proof.  
\end{proof}

We now describe four algorithms, called {\tt Election}$_i$, for $i=1,2,3,4$, working for graphs of diameter $D$ and election index $\phi$.
Recall the notation $^ic$ defined by induction as follows: $^0c=1$ and $^{i+1}c=c^{^ic}$. Intuitively it denotes a tower of powers.
For an integer constant $c>1$, let $T_1=D+\phi+c$, $T_2=D+c\phi$, $T_3=D+\phi^c$, and $T_4=D+c^{\phi}$. 
Let $A_1$ be the binary representation of $\phi$, let $A_2$ be the binary representation of $\lfloor \log \phi \rfloor$, let $A_3$ be the binary representation of $\lfloor \log \log \phi \rfloor$, and
let $A_4$ be the binary representation of $\log^*\phi$. Hence the size of $A_1$ is $O(\log \phi)$, the size of $A_2$ is $O(\log\log \phi)$, the size of $A_3$ is $O(\log\log\log \phi)$,
and the size of $A_4$ is $O(\log (\log^* \phi))$.
Define the following integers. $P_1=\phi$, $P_2=2^{\lfloor \log \phi \rfloor+1}-1$, $P_3=2^{2^{\lfloor \log \log \phi \rfloor+1}}-1$, and $P_4={^{(\log^*\phi)+1}}2-1$.

Algorithm {\tt Election}$_i$ uses advice $A_i$ and will be shown to work in time $T_i$. It consists of a single instruction:

\begin{algorithm}
{ \caption{{\tt Election}$_i$\label{alg:elect-i}}

{\tt Generic}$(P_i)$

}
\end{algorithm}

We will prove the following theorem.

\begin{theorem}\label{theorem-quadruple}
Let $G$ be a graph of diameter $D$ and election index $\phi$. Let $c>1$ be any integer constant. 
\begin{enumerate}
\item
Algorithm {\tt Election}$_1$ solves  leader election in $G$ in time at most $D+\phi +c$ and using $O(\log \phi)$ bits of advice.
\item
Algorithm {\tt Election}$_2$ solves leader election in $G$ in time at most $D+c\phi$ and using $O(\log\log \phi)$ bits of advice.
\item
Algorithm {\tt Election}$_3$ solves leader election in $G$ in time at most $D+\phi^c$ and using $O(\log\log\log \phi)$ bits of advice.
\item
Algorithm {\tt Election}$_4$ solves leader election in $G$ in time at most $D+c^{\phi}$ and using $O(\log(\log^* \phi))$ bits of advice.
\end{enumerate}
\end{theorem}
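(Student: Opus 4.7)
The proof plan is straightforward: each algorithm is just an invocation of \texttt{Generic}($P_i$), so Lemma~\ref{generic} reduces the task to verifying two inequalities for each $i\in\{1,2,3,4\}$, namely (i) $P_i\geq \phi$, which guarantees correctness and the time bound $D+P_i+1$ from Lemma~\ref{generic}, and (ii) $P_i+1\leq T_i-D$, which upgrades that time bound to the claim of the theorem. On the advice side, each $A_i$ is, by definition, the binary representation of a single integer whose order of magnitude matches the stated size bound; the parameter $P_i$ is computed from $A_i$ by a constant number of elementary arithmetic operations (e.g.\ reading off $\phi$ directly, or forming $2^{A_2+1}-1$, $2^{2^{A_3+1}}-1$, and the tower $^{A_4+1}2-1$), so nothing beyond what is encoded in $A_i$ is needed.

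For (i) and (ii), the argument in each case is a short calculation from the defining inequalities of floors and of $\log^*$. For $i=1$, we have $P_1=\phi$, and $P_1+1=\phi+1\leq \phi+c$ since $c\geq 2$. For $i=2$, set $m=\lfloor\log\phi\rfloor$ so that $2^m\leq \phi<2^{m+1}$; then $P_2=2^{m+1}-1\geq \phi$ and $P_2+1=2^{m+1}\leq 2\phi\leq c\phi$. For $i=3$, set $m=\lfloor\log\log\phi\rfloor$ so that $2^m\leq \log\phi<2^{m+1}$; then
\[
P_3+1=2^{2^{m+1}}>2^{\log\phi}=\phi,
\qquad
P_3+1=2^{2^{m+1}}\leq 2^{2\log\phi}=\phi^2\leq \phi^c,
\]
which yields both $P_3\geq \phi$ and the desired upper bound. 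For $i=4$, invoking the defining property of $\log^*$, namely $^{\log^*\phi}2\leq \phi<{}^{(\log^*\phi)+1}2$, gives $P_4+1={}^{(\log^*\phi)+1}2>\phi$ and
\[
P_4+1=2^{{}^{\log^*\phi}2}\leq 2^{\phi}\leq c^{\phi},
\]
as $c\geq 2$.

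Combining (i) and (ii) with Lemma~\ref{generic}, algorithm \texttt{Election}$_i$ elects a leader in time at most $D+P_i+1\leq T_i$, as required. The advice-size claims then follow from the observation that $A_1$, $A_2$, $A_3$, $A_4$ are binary representations of integers of size $O(\phi)$, $O(\log\phi)$, $O(\log\log\phi)$, $O(\log^*\phi)$ respectively, which have binary lengths $O(\log\phi)$, $O(\log\log\phi)$, $O(\log\log\log\phi)$, $O(\log(\log^*\phi))$.

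There is no real conceptual obstacle: once Lemma~\ref{generic} is in hand, everything reduces to checking the two inequalities above. The only mildly delicate point is the case $i=4$, where one must be careful with the convention for $\log^*$ so that simultaneously $P_4\geq \phi$ and $P_4+1\leq c^{\phi}$ hold; the argument above works for the standard convention in which $^{\log^*\phi}2$ is the largest tower of $2$'s not exceeding $\phi$, which is the natural reading consistent with the definitions given in the paper.
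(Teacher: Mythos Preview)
Your proof is correct and follows essentially the same approach as the paper: invoke Lemma~\ref{generic} for each $i$, verify $P_i\ge\phi$ and $P_i+1\le T_i-D$ via the elementary floor/$\log^*$ inequalities, and read off the advice size from the definition of $A_i$. Your write-up is in fact slightly more explicit than the paper's in spelling out the two-inequality structure and the convention needed for $\log^*$, but the content is the same.
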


\begin{proof} 

1. Algorithm {\tt Election}$_1$ uses advice $A_1$ which is the binary representation of $\phi$ of size $O(\log \phi)$. It first computes $\phi$ using $A_1$, and then calls  
Algorithm {\tt Generic}$(\phi)$. In view of Lemma \ref{generic}, Algorithm {\tt Generic}$(\phi)$ solves leader election in $G$ in time at most $D+\phi +1<D+\phi+c$.
This proves part 1 of the theorem.  

2. Algorithm {\tt Election}$_2$ uses advice $A_2$ which is the binary representation of $\lfloor \log \phi\rfloor$ of size $O(\log \log \phi)$. 
It first computes $\lfloor \log \phi\rfloor$ using $A_2$, then computes $P_2=2^{\lfloor \log \phi \rfloor+1}-1$ and calls Algorithm {\tt Generic}$(P_2)$. Notice that $P_2 \geq  \phi$. In view of Lemma \ref{generic}, Algorithm {\tt Generic}$(P_2)$ solves leader election in $G$ in time at most $D+P_2+1 = D+2^{\lfloor \log \phi \rfloor+1}$.
This is at most $D+2\phi$ and hence at most $D+c\phi$, since $c$ is an integer larger than 1.

3. Algorithm {\tt Election}$_3$ uses advice $A_3$ which is the binary representation of $\lfloor \log \log \phi\rfloor$ of size $O(\log\log \log \phi)$. 
It first computes $\lfloor \log\log \phi\rfloor$ using $A_3$, then computes $P_3=2^{2^{\lfloor \log \log \phi \rfloor+1}}-1$ and calls Algorithm {\tt Generic}$(P_3)$. Notice that $P_3 \geq  \phi$. In view of Lemma \ref{generic}, Algorithm {\tt Generic}$(P_3)$ solves leader election in $G$ in time at most 
$D+P_3+1 = D+2^{2^{\lfloor \log \log \phi \rfloor+1}}$.
This is at most $D+\phi^2$ and hence at most $D+\phi^c$, since $c$ is an integer larger than 1.

4. Algorithm {\tt Election}$_4$ uses advice $A_4$ which is the binary representation of $\log^* \phi$ of size $O(\log( \log^* \phi))$. 
It first computes $\log^* \phi$ using $A_4$, then computes $P_4={^{(\log^*\phi)+1}}2-1$ and calls Algorithm {\tt Generic}$(P_4)$. Notice that $P_4 \geq  \phi$. In view of Lemma \ref{generic}, Algorithm {\tt Generic}$(P_4)$ solves leader election in $G$ in time at most $D+P_4+1 = D+{^{(\log^*\phi)+1}}2$.
This is at most $D+2^{\phi}$ and hence at most $D+c^{\phi}$, since $c$ is an integer larger than 1.
\end{proof}

{\bf Remark.} Notice that the first part of the theorem remains valid for $c=1$, and the proof remains the same. Hence, it is possible to perform leader election in time $D+\phi+1$ using $O(\log \phi)$
bits of advice. We do not know if the same is true for the time $D+\phi$, but in this time it is possible to elect a leader using $O(\log D +\log \phi)$ bits of advice. Indeed, it suffices to provide the nodes with values of the diameter $D$ and of the election index $\phi$. Equipped with this information, each node $u$ learns $\cB^{D+\phi}(u)$ in time $D+\phi$. Then, knowing $D$, it knows that nodes that it sees
in this augmented truncated view at distance at most $D$ from the root of this view represent all nodes of the graph. Knowing the value of $\phi$, node $u$ can reconstruct $\cB^{\phi}(v)$, 
for each such node $v$, and hence find in
$\cB^{D+\phi}(u)$ a representation of the node $w$ in the graph, whose augmented truncated view $\cB^{\phi}(v)$ is lexicographically smallest. Finally, the node $u$ can output a sequence of port numbers corresponding to {one of
the shortest paths} from $u$ to $w$ in $\cB^{D+\phi}(u)$.

The following theorem provides matching lower bounds (up to multiplicative constants) on the minimum size of advice sufficient to perform leader election in time corresponding to our four milestones.

\begin{theorem}
Let $\alpha$ be a positive integer,  and let $c>1$ be any integer constant. 
\begin{enumerate}
\item
Consider any leader election algorithm working in time at most $D+\phi +c$, for all graphs of diameter $D$ and election index $\phi$. 
There exist graphs with election index at most $\alpha$ such that this algorithm in these graphs requires advice of size $\Omega( \log \alpha)$.
\item
Consider any leader election algorithm working in time at most $D+c\phi$, for all graphs of diameter $D$ and election index $\phi$. 
There exist graphs with election index at most $\alpha$ such that this algorithm in these graphs requires advice of size $\Omega( \log\log \alpha)$.
\item
Consider any leader election algorithm working in time at most $D+\phi^c$, for all graphs of diameter $D$ and election index $\phi$. 
There exist graphs with election index at most $\alpha$ such that this algorithm in these graphs requires advice of size $\Omega( \log\log\log \alpha)$.
\item
Consider any leader election algorithm working in time at most $D+c^{\phi}$, for all graphs of diameter $D$ and election index $\phi$. 
There exist graphs with election index at most $\alpha$ such that this algorithm in these graphs requires advice of size $\Omega( \log(\log^* \alpha))$.

\end{enumerate}
\end{theorem}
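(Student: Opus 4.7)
The plan is an adversary argument with a uniform template across all four parts: construct a family $\mathcal{F}_i$ of feasible graphs with election indices bounded by $\alpha$, and show that any algorithm meeting the time bound of part~$(i)$ must use pairwise distinct advice strings on distinct members of $\mathcal{F}_i$; the bit lower bound then follows from $\lceil\log_2|\mathcal{F}_i|\rceil$. The target cardinalities $|\mathcal{F}_1|=\Theta(\alpha)$, $|\mathcal{F}_2|=\Theta(\log\alpha)$, $|\mathcal{F}_3|=\Theta(\log\log\alpha)$, and $|\mathcal{F}_4|=\Theta(\log^*\alpha)$ will be obtained by placing the election indices of the family members in arithmetic, geometric, polynomial, and tower progression, respectively, within $[1,\alpha]$.

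For the building block, I would use, for each target value $\phi$, a feasible graph $G_\phi$ of election index exactly $\phi$, built as a variant of the necklace $M_k$ from the proof of Theorem~\ref{lb>1}: a fixed ``core'' gadget with two symmetric pendant chains of length $\phi-1$ attached at designated joints. As in Claim~\ref{index-phi}, this yields election index $\phi$, witnessed by the two chain leaves having identical augmented truncated views at depth $\phi-1$ and distinct views at depth $\phi$. The port numberings on the chains and on the core are kept uniform across the whole family, so that corresponding interior chain points in any $G_\phi$ and $G_{\phi'}$ carry identical augmented truncated views at all depths below which neither chain has yet been exhausted.

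The heart of the argument is the pairwise incompatibility claim: if $\phi<\phi'$ are both in $\mathcal{F}_i$ and the spacing is chosen so that $T_\phi := D_\phi+f(\phi)\leq\phi'-1$ (where $f(\phi)$ is $\phi+c$, $c\phi$, $\phi^c$, or $c^\phi$ according to the part), then no advice string can serve both $G_\phi$ and $G_{\phi'}$. If one did, the two leaves $\ell,\ell'$ of $G_{\phi'}$ would share their augmented truncated views at every depth up to $\phi'-1\geq T_\phi$, hence make identical halt/continue decisions throughout rounds $1,\dots,T_\phi$. A careful cross-graph comparison, using the uniform chain port labeling to lift the forced halt-by-round-$T_\phi$ property from a suitably chosen witness node in $G_\phi$ to $\ell$ and $\ell'$ in $G_{\phi'}$, would then force $\ell$ and $\ell'$ to both halt at some common round $r_0\leq T_\phi$ with the same port sequence $P$. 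But two distinct starting nodes producing the same port sequence cannot both reach the unique leader of $G_{\phi'}$, by the last-port argument in the proof of Proposition~\ref{prop-index}. By transitivity of the spacing condition this incompatibility holds pairwise, so the advice strings in use across $\mathcal{F}_i$ must be pairwise distinct, yielding at least $|\mathcal{F}_i|$ of them.

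The hard part will be two-fold. First, making the cross-graph lifting rigorous: since the leaf view in $G_\phi$ and $G_{\phi'}$ already diverges at depth $\phi-1$ (the leaf sees the joint in $G_\phi$ but still a degree-$2$ chain node in $G_{\phi'}$), the witness node in $G_\phi$ used to trigger the halt must be chosen deeper in the chain, so that the uniform-chain view-equivalence carries past the halting round. Second, for part~(1) the arithmetic spacing $\phi'=\phi+c+O(1)$ requires $D_\phi=o(\phi)$, which the vanilla necklace (with $D=\Theta(\phi)$) does not provide; one must replace the necklace core by a denser small-diameter feasible gadget exploiting the extremal regime of Proposition~\ref{H}, where $\phi=\Theta(D\log(n/D))$ with $D=O(\log\alpha)$ and $n$ super-polynomial in $\alpha$. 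For parts~(2)--(4) the necklace diameter $\Theta(\phi)$ is already dominated by geometric, polynomial, and tower spacing, so the straight construction suffices and the counting immediately gives the advertised bounds.
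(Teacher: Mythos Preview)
Your scheme has a real gap at the cross-graph lifting step, and it is not a detail you can patch later: it is the crux of the proof. You need a node $v$ in $G_\phi$ whose augmented truncated view at depth $T_\phi$ coincides with that of the leaf $\ell$ of $G_{\phi'}$. But $T_\phi = D_\phi + f(\phi) \geq \phi$, so at depth $T_\phi$ every node of $G_\phi$ has already seen through its length-$(\phi-1)$ chain into the core; in particular a leaf of $G_\phi$ sees a high-degree joint already at depth $\phi-1$. Meanwhile $\ell$ in $G_{\phi'}$ sees only degree-$2$ chain nodes out to depth $\phi'-1 \geq T_\phi$. Since $\ell$ has degree $1$, the only candidate witnesses in $G_\phi$ are its two leaves, and their views diverge from $\cB^{T_\phi}(\ell)$ at depth $\phi-1$. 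Your proposed fix of moving the witness ``deeper in the chain'' of $G_\phi$ does not work either: interior chain nodes have degree $2$, not $1$, so their views differ from $\ell$'s already at depth $0$. Your own observation that corresponding interior chain points agree across $G_\phi,G_{\phi'}$ holds only at depths below $\phi-1$, strictly less than the halting round you must transfer. In short, with pendant chains alone there is no pair of nodes, one in $G_\phi$ and one in $G_{\phi'}$, whose views match at any depth $\geq\phi$.

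The paper's construction is organized precisely around this obstacle, and its solution is the main technical content. Rather than a single family indexed by $\phi$, it inductively builds sequences $\cT_0,\cT_1,\dots$ (many graphs per sequence, all receiving one common advice string), passing from $\cT_k$ to $\cT_{k+1}$ by a \emph{merge operation}: two graphs $H',H''$ from $\cT_k$ are glued together through \emph{pruned views} of their locks, i.e.\ trees obtained by unrolling those locks to depth $B(k+1,c)$. The pruned view is exactly the device that forces the left principal node of the merged graph $Q$ to have the same augmented truncated view at depth $D+A(B(k,c),c)$ as the left principal node of $H'$, and the right principal node of $Q$ to match that of $H''$; this is engineered, not inherited from chain uniformity. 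Since $H'\neq H''$, if some earlier $\cT_i$ shared advice with $\cT_{k+1}$, the two principal nodes of $Q$ would each halt by the time bound of $\cT_i$ and output paths shorter than $|H'|$, yet they sit more than $2|H'|$ apart in $Q$, so they cannot elect the same leader. Your construction has no analogue of the pruned-view insertion; without it there is no mechanism to manufacture view coincidence at depths beyond $D_\phi$, and the pairwise-distinct-advice conclusion does not follow.
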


\begin{proof}
The high-level idea of the proof relies on the construction of (ordered) families of graphs with controlled growth of election indices, and with the property that, for any election algorithm working in the prescribed time, graphs from different families must receive different advice. Since the growth of election indices in the constructed sequence of families is controlled (it is linear in part 1), this implies the desired lowered bound on the size of advice. In order to prove that advice in different families must be different, we have to show that otherwise the algorithm would fail in one of these families. The difficulty lies in constructing the families of graphs in such a way as to confuse the hypothetical algorithm in spite of the fact that, as opposed to the situation in Theorems \ref{lb1}
and \ref{lb>1}, the algorithm has now a lot of time: nodes can see all the differences in augmented truncated views of other nodes. In this situation, the way to confuse the algorithm is to make believe two nodes
that they are in a graph with a smaller diameter and thus have to stop early, outputting a path to the leader, while in reality they are in a graph of large diameter, and each of them has seen less than ``half'' of the graph, which results in outputting by each of them a path to a different leader. These nodes are fooled because their augmented truncated views in the large graph at the depth requiring them to stop in the smaller graph are the same as in this smaller graph. In order to assure this, parts of the smaller graph must be carefully reproduced in the large graph, which significantly complicates the construction and the analysis.  

We give the proof of part 1 and then show how to modify it to prove the three remaining parts.        
For any integer $z\geq 4$, we first define the following graph of size $z+2$, called a $z$-{\em lock}{, cf. Fig.~\ref{fig:f3}}. 

\begin{figure}[httb!]
	\begin{center}
	\includegraphics[width=0.4\textwidth]{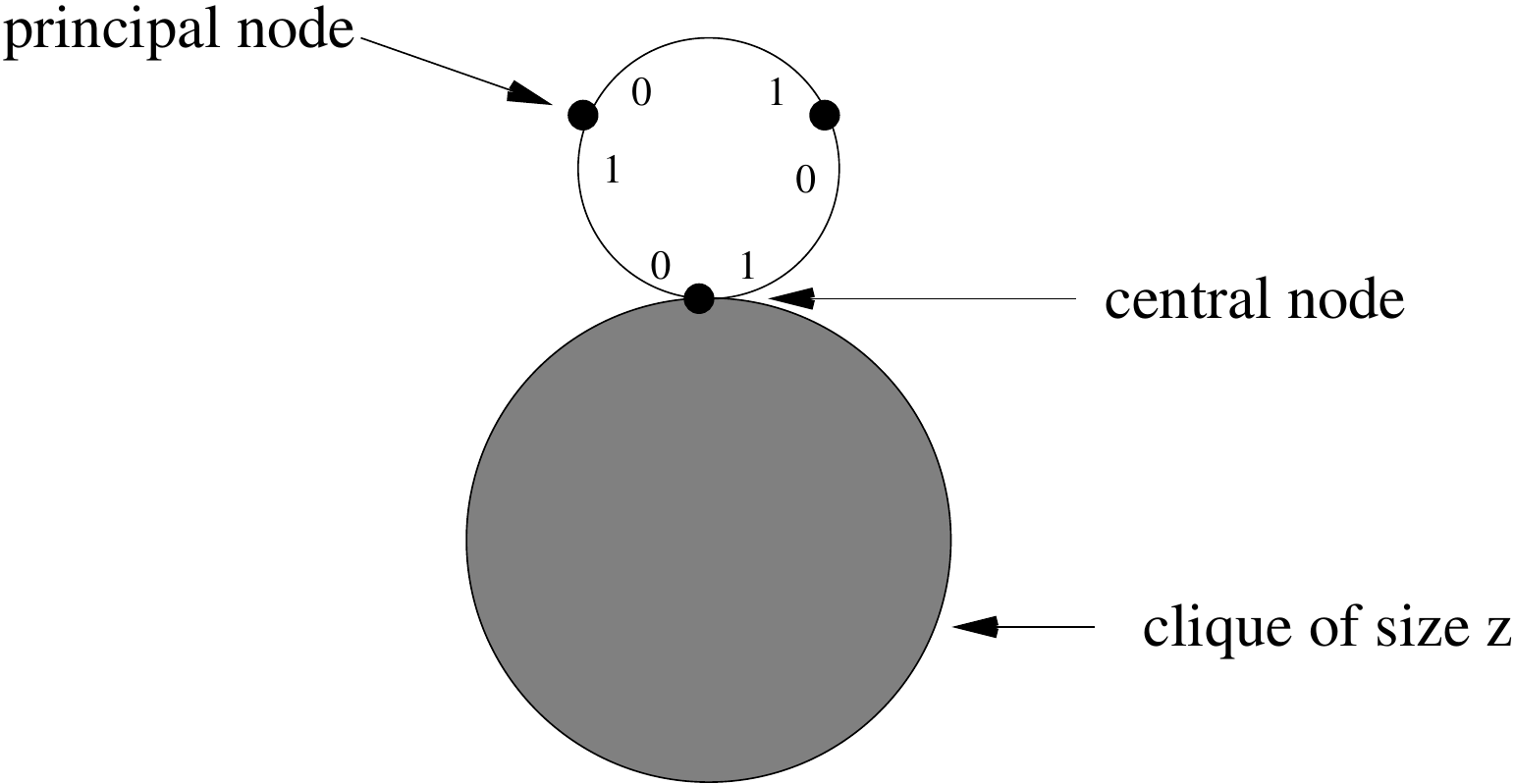}
	\caption{A representation of a $z$-lock.}
	\label{fig:f3}
	\end{center}
\end{figure}

Take a cycle of size 3, with port numbers $0,1$ in clockwise order at each node, and attach to one of the nodes of the cycle a clique of size $z$ by identifying this node with one of the nodes of the cycle. The port numbers in the clique are assigned arbitrarily. Let $w$ be the only node of degree $z+1$ in a $z$-lock. Call it the {\em central node} of the lock. The node $v$ of the cycle such that  the port at $w$ corresponding to the edge $\{w,v\}$  is 0, is called the {\em principal node} of the lock. 
 
%Let $G'$ and $G''$ be node-disjoint graphs. A {\em linkage} of graphs $G'$ and $G''$ is any graph resulting from $G'$ and $G''$ by choosing a node $v'$ in $G'$,
%a node $v''$ in $G''$, and adding the edge $\{v',v''\}$. 

Let $G_1$ and $G_2$ be disjoint graphs. We say that a graph $G$ is of the form $G_1*G_2$ {(cf. Fig.~\ref{fig:f4})}, if and only if, {there exist two nodes, a node $a$ from $G_1$ and a node $b$ from $G_2$, such that} the graph $G$ results from $G_1$ and $G_2$ by adding the edge $\{a,b\}$. We define similarly a graph of the form $G_1*G_2*\dots *G_r$. 
%with the additional condition
%that nodes $c'$ and $c''$ from $A_j$, serving to attach it to $A_{j-1}$ and $A_{j+1}$ respectively, are distinct.

\begin{figure}[httb!]
	\begin{center}
	\includegraphics[width=0.6\textwidth]{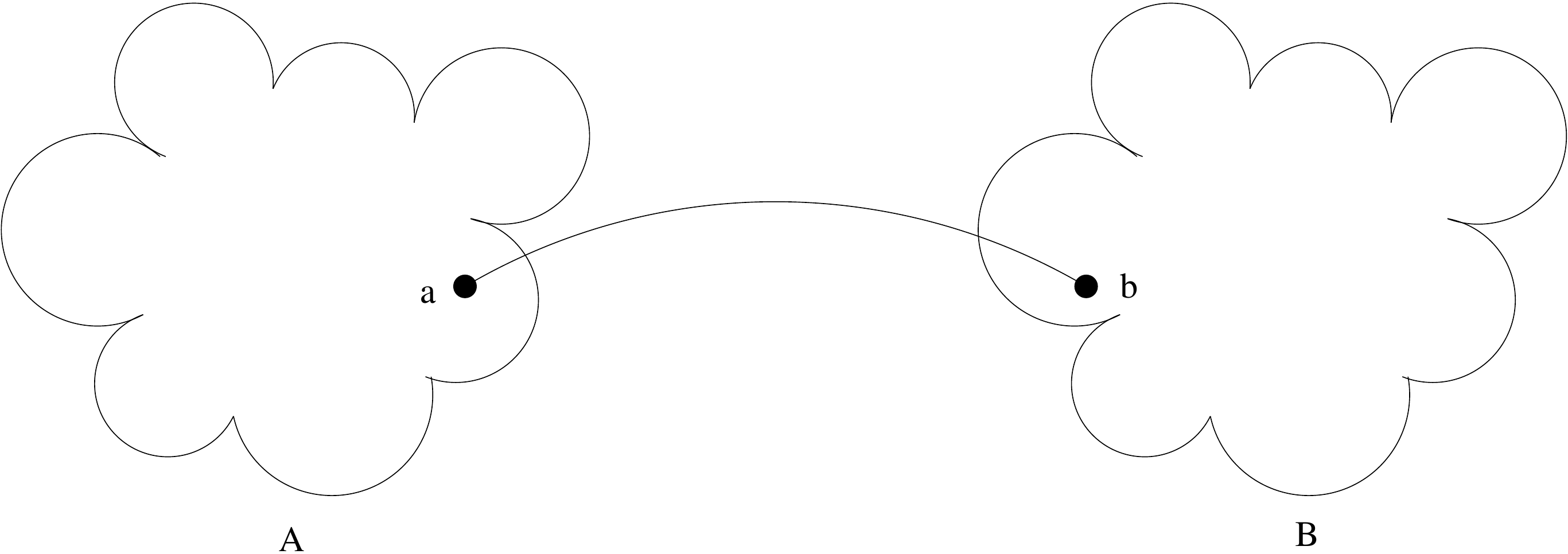}
	\caption{A representation of the graph $A*B$.}
	\label{fig:f4}
	\end{center}
\end{figure}
   
For any integer $x$, denote $A(x,c)=x +c$, $B(x,c)=cx +2x+1$, and $R(x)=x$. We use this notation to be able to derive the proofs of parts 2, 3, and 4 of the theorem from the proof of part 1, by suitably changing  only the definitions of functions $A$, $B$, and $R$. 

Fix a leader election algorithm $\cA$ working in time at most $D+A(\phi, c)$ for all graphs of diameter $D$ and election index $\phi$.
In part 1, $A(\phi, c)=\phi+c$, hence the time of the algorithm is $D+\phi+c$, as supposed. 
Let $k^*$ be such that $B(k^*,c) \leq \alpha<B(k^*+1,c)$. We will construct, by induction on $k$,
ordered families (i.e., sequences) $\cT_0,\dots, \cT_k$ of graphs, for $k \leq k^*$, satisfying the following properties.

\begin{enumerate}
\item
Any graph $G$ of any sequence $\cT_i$ can be unambiguously represented in the form $L_1\ast M \ast L_2$, where 
$L_i$ is a $z_i$-lock, for $i=1,2$, where $z_1 < z_2$. 
%Moreover, this representation is unambiguous, i.e., if a graph $G$ from $\cT_i$ can be represented in the form
%$L_1*M*L_2$ and in the form $L_1'*M'*L_2'$, where $L_1$ is a $z_1$-lock, $L_2$ is a $z_2$-lock, $L_1'$ is a $z'_1$-lock and $ L_2'$ is a $z'_2$-lock, where $z_1<z_2$ and $z'_1<z'_2$, then $L_1=L_1'$, $M=M'$ and $L_2=L_2'$.

$L_1$ is called the {\em left} lock of the graph, 
$L_2$ is called the {\em right} lock of the graph and $M$ is called the {\em central part} of the graph. The principal node of $L_1$ is called the {\em left} principal node of the graph, and 
the principal node of $L_2$ is called the {\em right} principal node of the graph. 
\item
For any $i \leq k$,  any indices $j_1<j_2$, and any graphs $G_{j_1}$ and $G_{j_2}$ from $\cT_i$ ,  the size of the right lock of $G_{j_1}$ is smaller than the size of the left lock of $G_{j_2}$.
\item
All nodes of all graphs of any sequence $\cT_i$ have degree at least 2.
\item
For any $i\leq k$, the diameter of all graphs of the sequence $\cT_i$ is the same.
\item
For any $i < j$, the diameter of graphs from $\cT_i$ is smaller than the diameter of  graphs from $T_j$.  
\item
For any $i\leq k$, the advice used by algorithm $\cA$ for all graphs of the sequence $\cT_i$ is the same.
\item
For any $i < j$, the advice used by algorithm $\cA$ for  graphs of the sequence $\cT_i$ is different from the advice used by algorithm $\cA$ for graphs of the sequence $\cT_j$. 
\item
For any $i\leq k$, and any graph $G$ from $\cT_i$, the election index of $G$ is at most $B(i,c)$.
\item
For any $i < j$, for any graph $G$ from $\cT_j$, there exist graphs $G' \neq  G''$ from $\cT_i$, such that 
the augmented truncated view
$\cB^{D+A(B(i,c),c)}(v)$ in $G$ is equal to the augmented truncated view $\cB^{D+A(B(i,c),c)}(v')$ in $G'$, and
the augmented truncated view $\cB^{D+A(B(i,c),c)}(w)$ in $G$ is equal to the augmented truncated view $\cB^{D+A(B(i,c),c)}(w'')$ in $G''$, where $D$ is the diameter of graphs in $\cT_i$, $v$ is the left principal node of $G$,  
$v'$ is the left principal node of $G'$, $w$ is the right principal node of $G$, and  $w''$ is the right principal node of $G''$. 
\item
For any graph $G$ of any sequence $\cT_i$, the distance between the left principal node of $G$ and the right principal node of $G$ is equal to the diameter of $G$.
\item
For any graph $G$ of any sequence $\cT_i$, the diameter of $G$ is at least $A(\alpha,c)+4$.
\item
For any $i\leq k$, there are $(2\alpha)^{\alpha -i}$ graphs in the sequence $\cT_i$.
\item
For any $i\leq k$,  any graphs $G' \neq G''$ from $\cT_i$, any node $u'$ from $G'$, and any node $u''$ from $G''$, the augmented truncated views
$\cB^{B(i,c)}(u')$ and  $\cB^{B(i,c)}(u'')$ are different. 
\end{enumerate}

We first prove that, given sequences $\cT_0,\dots, \cT_{k^*}$, with the above properties, we can prove our result. By properties 6 and 7, there exist $k^*$ graphs that receive
different pieces of advice. By property 8, election indices of these graphs are all at most $B(k^*,c) \leq \alpha$. By definition, $k^* \in \Omega(R(\alpha))$. Hence
there exists a graph with election index at most $\alpha$ that requires advice of size $\Omega(\log(R(\alpha)))$. In part 1, $R(\alpha)=\alpha$, hence we get the required lower bound $\Omega (\log \alpha)$.
In order to complete the proof of part 1, it remains to construct sequences $\cT_0,\dots, \cT_{k^*}$, with the above properties. (Note that we used only properties 6, 7 and 8,
but the remaining properties are necessary to carry out the inductive construction.)

We proceed with the construction of sequences $\cT_0,\dots, \cT_k$, for $k \leq k^*$ of graphs, by induction on $k$. For $k=0$ we first construct a sequence $\cS_0$ of graphs. The sequence $\cS_0$ consists of $s_0=2^{\alpha} \cdot \alpha ^{\alpha +1}$ graphs $G_i$, for $0 \leq i \leq s_0-1$  defined as follows. For $0 \leq i \leq s_0-1$ define  $x_i=4+2i(\alpha +c+2)+i$. To construct the graph $G_i$ take  an $x_i$-lock with the node $u$ of degree $x_i+1$, which will be the left lock of this graph, and  an $(x_i+2(\alpha +c +2))$-lock with the node $v$ of degree $x_i+2(\alpha +c +2)+1$, which will be the right lock of this graph. Join nodes $u$ and $v$ by a chain of length $\alpha +c +2$ with internal nodes $w_1,w_2,\dots, w_{\alpha+c+1}$, where 
$w_1$ is adjacent to $u$ and  $w_{\alpha+c+1}$ is adjacent to $v$. Attach a clique of size $x_i+2j$ to node $w_j$ by identifying one of the nodes of the clique with $w_j$. All port numbers in locks remain unchanged and all port numbers outside of locks are assigned arbitrarily. Finally, remove all node labels.
This completes the construction of graph $G_i$. We have given its unambiguous representation in the form required in property 1. This completes the construction of the sequence  $\cS_0$, from which the subsequence $\cT_0$ will be extracted. {A representation of a graph from $\cS_0$ is given in Fig.~\ref{fig:f5}.}

\begin{figure}[httb!]
	\begin{center}
	\includegraphics[width=0.6\textwidth]{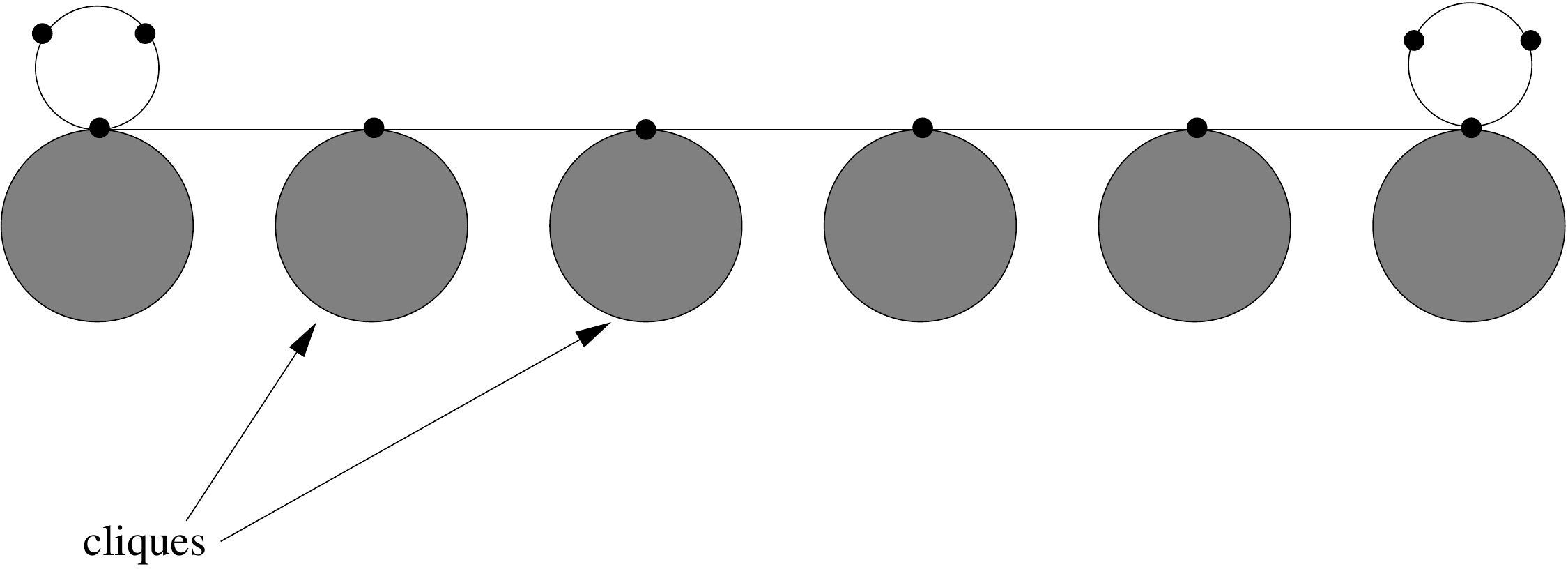}
	\caption{A representation of a graph from $\cS_0$.}
	\label{fig:f5}
	\end{center}
\end{figure}

Before continuing the construction we prove the following claim.

\begin{claim}\label{i1}
The election index of all graphs in $\cS_0$ is 1.
\end{claim}

To prove the claim it is enough to show that $\cB^1(w')\neq \cB^1(w'')$, for any nodes $w'\neq w ''$ in any graph of $\cS_0$.
By construction, nodes
$u,w_1,w_2,\dots, w_{\alpha+c+1},v$ in the chain have unique degrees, and every node of the graph is at distance at most 1 from one of them.
If $w'$ and $w''$ are at distance exactly 1 from two distinct nodes of this chain, then $\cB^1(w')\neq \cB^1(w'')$ because nodes in the chain have different degrees.
If  $w'$ and $w''$ are at distance exactly 1 from the same node $r$ of the chain then $\cB^1(w')\neq \cB^1(w'')$ because port numbers at $r$ corresponding to
edges $\{r,w'\}$ and $\{r,w''\}$ must be different. Finally, if one of the nodes $w'$ and $w''$ is in the chain then $\cB^1(w')\neq \cB^1(w'')$
because, {by construction}, they must have different degrees. This concludes the proof of the claim.

Next we define a subsequence $\cT_0$ of the sequence $\cS_0$ with the following two properties: the size of $\cT_0$ is $(2\alpha)^{\alpha}$, and all graphs in
$\cT_0$ receive the same advice. We can assume that such a subsequence exists because otherwise the number of distinct pieces of advice received by graphs from
$\cS_0$ would be at least $\alpha$ (because $\cS_0$ has size $2^{\alpha} \cdot \alpha ^{\alpha +1}$)  which would prove part 1 of our theorem, in view of Claim~\ref{i1}.
The verification that the sequence $\cT_0$ satisfies properties 1--12 is immediate, and property 13 is implied by the fact that nodes in chains of two different graphs from
$\cT_0$ have different degrees, and every node is at distance at most 1 from one of them.

Assume by induction that sequences  $\cT_0,\dots, \cT_k$ of graphs, for $k < k^*$, have been already constructed, and that they satisfy properties 1 -- 13. We now construct the sequence 
$ \cT_{k+1}$ of graphs. Let $\cT_k=\{H_1,\dots, H_{t_k}\}$, where $t_k=(2\alpha)^{\alpha -k}$. We first define the sequence $\cS_{k+1}=\{Q_1,\dots , Q_{t_{k/2}}\}$ of  graphs, where
$Q_i$ is the result of the {\em merge operation} of $H_{2i}$ and $H_{2i+1}$. 

In order to define the merge operation of graphs $H_{2i}$ and $H_{2i+1}$ from $\cT_k$,  we first define the {\em pruned view} of a node $u$ in any graph $G$. Let $p_1,\dots ,p_t$ be any port numbers at node  $u$. The pruned view of $u$ at depth $\ell$ with respect to ports $p_1,\dots ,p_t$ is a tree of height $\ell$ rooted at $u$, that is denoted by $\cP\cV_G(u,\{p_1,\dots ,p_t\},\ell)$ and is defined by induction on $\ell$. For $\ell=0$ we define $\cP\cV_G(u,\{p_1,\dots ,p_t\},0)=\{u\}$. Suppose that $\cP\cV_G(u,\{p_1,\dots ,p_t\},\ell)$ is already defined, for any node $v$ and with respect to any port numbers at $v$. Let $T$ be the tree of height 1 rooted at $u$ whose leaves are all neighbors $v$ of $u$ except the neighbors $w_i$ such that the 
port at $u$ corresponding to the edge {$\{u,w_i\}$} is $p_i$, for $1 \leq i \leq t$. Assign at all edges of this tree the same port numbers as in graph $G$. 
Let $v_1,\dots ,v_r$ be the leaves of $T$. Let $q_i$ be the port number at $v_i$ corresponding to the edge
$\{v_i,u\}$. Attach to $v_i$ the tree $\cP\cV_G(v_i,\{q_i\},\ell)$ by identifying $v_i$ with the root of this tree. The resulting tree is $\cP\cV_G(v,\{p_1,\dots ,p_t\},\ell+1)$.
Notice that, as opposed to the truncated view at depth $\ell$, the pruned view at depth $\ell$ does not contain repeated port numbers at any node, and hence can be used as a building block for graph constructions. We will use pruned views in this way in the sequel.

The following claim will enable us to replace a subgraph of a graph by the pruned view of an articulation node without changing its augmented truncated view.

\begin{claim}\label{art}
Let $u$ be an articulation node of a graph $G$, and let $p_1,\dots ,p_t$ be the port numbers at this node, such that the removal of edges corresponding to these ports
disconnects the graph into at least two connected components. Let $G'$ be the connected component containing $u$, after removal of these edges. Let $G^*$ be the graph resulting from $G$ by replacing the subgraph $G'$ by the pruned view $\cP\cV_G(u,\{p_1,\dots ,p_t\},\ell)$, for positive $\ell$.
Then the augmented truncated view $\cB^{\ell-1}(u)$ is the same in graphs $G$ and $G^*$, and the augmented truncated view $\cB^{d+\ell-1}(v)$, for all nodes $v$ outside of $G'$, is
the same in graphs $G$ and $G^*$, where $d$ is the distance between $u$ and $v$ in $G$.
\end{claim}

In order to prove the claim, it suffices to show that $\cB^{\ell-1}(u)$ is the same in graphs $G$ and $G^*$. The other part follows from this because $u$ is an articulation node.  In order to prove that $\cB^{\ell-1}(u)$ is the same in graphs $G$ and $G^*$, it is enough to prove that  $\cV^{\ell}(u)$ is the same in graphs $G$ and $G^*$. 
This is equivalent to the fact that sequences of port numbers of even length at most $2\ell$, corresponding to paths of $\cV^{\ell}(u)$, are the same in graphs $G$ and $G^*$.
%The number $\ell$ of consecutive couples in such a sequence is called its {\em couple number}.
Define a {\em normal sequence} of port numbers as a sequence $(q_1,q'_1,\dots, q_j,q'_j)$, such that $q_{i+1}\neq q'_i$, for any $i<j$. 
A normal sequence corresponds to  
a path in a graph, whose consecutive edges are never equal.
For any $\ell'\leq \ell$, let $\cV^{\ell'}$ be the set of sequences, of even length at most $2\ell'$, of port numbers, corresponding to the paths in 
the view  $\cV^{\ell'}(u)$ in $G$, let  $\cV^{\ell' *}$  be the set of sequences, of even length at most $2\ell'$, of port numbers, corresponding to the paths in 
the view  $\cV^{\ell'}(u)$ in $G^*$, let $\cW^{\ell'}$ be the set of normal sequences, of even length at most $2\ell'$, of port numbers, corresponding to the paths in $\cV^{\ell'}$, and let $\cW^{\ell' *}$ be the set of normal sequences, of even length at most $2\ell'$, of port numbers, corresponding to the paths in $\cV^{\ell' *}$. Note that $\cV^{\ell'}=\cV^{\ell' *}$ if and only if 
$\cW^{\ell'}=\cW^{\ell' *}$.

We prove by induction on $\ell'$, such that $1\leq \ell'\leq \ell$, that $\cW^{\ell'}=\cW^{\ell' *}$.
For $\ell'=1$ this follows {from the fact that, by construction, $\cV^1(u)$ is the same in both graphs}. Hence if $\ell=1$ the proof is done. So consider that $\ell\geq 2$. To prove the inductive step, suppose that,  for some integer $r$ such that $2\leq r\leq l$,  we have
$\cW^{\ell'}=\cW^{\ell' *}$, for {all} $\ell'< r$. We have to prove that $\cW^{r}=\cW^{r*}$.
Let $\pi$ be a sequence from $\cW^{r}$.
The definition of $\cP\cV_G(u,\{p_1,\dots ,p_t\},r)$ implies that
the sets of normal sequences of even length at most $2r$ starting at $u$ and having the first port number outside of $\{p_1,\dots ,p_t\}$  are the same in $\cV^{r}$ and
in $\cV^{r*}$.
If the first term of $\pi$ is
outside of $\{p_1,\dots ,p_t\}$ , then $\pi$ is in $\cW^{r*}$ by the preceding statement. Consider a sequence $\pi$ in $\cW^{r}$ whose first port number is in $\{p_1,\dots ,p_t\}$.
There are two cases. If the node $u$ does not appear again in the path in $G$ corresponding to $\pi$, then $\pi$ is also in $\cW^{r*}$ by construction. Otherwise, let $\pi'$ be the shortest  prefix of $\pi$, such that, in the corresponding path in $G$, the node
 $u$ appears exactly twice, i.e., this path is a loop ending at $u$. (The prefix $\pi'$ has necessarily even length). The path in $G^*$ starting at $u$ and corresponding to $\pi'$ ends at node $u$ as well, by construction. 
 Let $\pi''$ be the part of $\pi$ after removal of $\pi'$. The length of $\pi''$ is even and smaller than $r$. By the inductive hypothesis, $\pi''$ is in $\cW^{(r-1)*}$. Hence the sequence
 $\pi$, which is the concatenation of $\pi'$ and $\pi''$ belongs to  $\cW^{r*}$. This proves the inclusion $\cW^{r}\subseteq \cW^{r*}$. The proof of the other inclusion
 is similar.
 Hence, for all $\ell'$ such that $1\leq \ell'\leq \ell$, we have $\cW^{\ell'}=\cW^{\ell' *}$. As noted  before, this implies that $\cV^{\ell}(u)$ is the same in graphs $G$ and $G^*$ and hence $\cB^{\ell-1}(u)$ is also the same in these graphs. This proves the claim.

%We prove the latter fact by induction on $\ell'$ such that $1\leq \ell'\leq \ell$. For $\ell'=1$ this follows from the fact that $u$ has the same degree in both graphs, by %construction. To prove the inductive step, suppose that $\cV^{\ell'}(u)$ is the same in graphs $G$ and $G^*$, for all positive $\ell'<\ell$. Let $\cV$ be the view
%$\cV^{\ell}(u)$ in $G$ and let $\cV^*$ be the view $\cV^{\ell}(u)$ in $G^*$. We have to prove that $\cV=\cV^*$.

We are now ready to define the merge operation of graphs $H_{2i}$ and $H_{2i+1}$ from $\cT_k$. {The result of such a merge operation is illustrated in Fig.~\ref{fig:fig7}}.
By property~1, the graph $H_{2i}$ is of the form $L_1 \ast M' \ast  L_2$, where $L_1$ is its left lock and $L_2$ is its right lock,  and $H_{2i+1}$ is of the form $L_3 \ast M'' \ast  L_4$, where $L_3$ is its left lock and $L_4$ is its right lock. The result of the merge operation of graphs $H_{2i}$ and $H_{2i+1}$
is the graph $Q$ of the form $L_1 \ast N \ast  L_4$, where  $L_1$ is its left lock, $L_4$ is its right lock, and $N$ is defined below.

First define the transformations $T(L_2)$ and $T(L_3)$ of locks $L_2$ and $L_3${, cf. Fig.~\ref{fig:fig6}}. Suppose that $L_2$ is a $z$-lock and let $u$ be its central node. 
Replace the 3-cycle of the lock by the pruned view $\cP\cV_{H_{2i}}(u,\{2,\dots, z+1\},B(k+1,c))$. 
More precisely, remove the two nodes adjacent to $u$ in this cycle, together with the incident edges, and attach the above pruned view at $u$, by identifying $u$ with
the root of this pruned view.
Let $m_1,\dots, m_t$ be the leaves of $\cP\cV_{H_{2i}}(u,\{2,\dots, z+1\},B(k+1,c))$. Let $x$
be the largest degree of any of the previously constructed graphs. {For all $1\leq f \leq t$,} attach a clique of size $x+4f$ to the leaf $m_f$ by identifying one node of this clique with this leaf.
This concludes the description of $T(L_2)$. The central node of $L_2$ is also called the central node of {$T(L_2)$}. 
The transformation $T(L_3)$ is defined similarly, with $L_2$ replaced by $L_3$, {$t$ replaced by $t'$, $x+4f$ replaced by $x+4f+4t+4$} and $H_{2i}$ replaced by $H_{2i+1}$.

\begin{figure}[!htbp]
\begin{center}
  \begin{minipage}[t]{0.2\linewidth}
    \centering
	\includegraphics[width=0.8\textwidth]{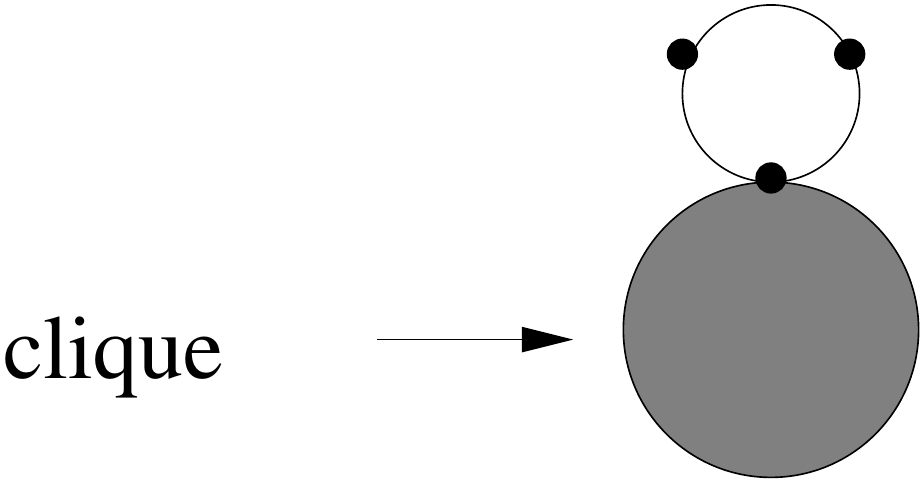}\\
    {\footnotesize ($a$) A $z$-lock $L$}
  \end{minipage}%
  \begin{minipage}[t]{0.8\linewidth}
    \centering
	\includegraphics[width=0.9\textwidth]{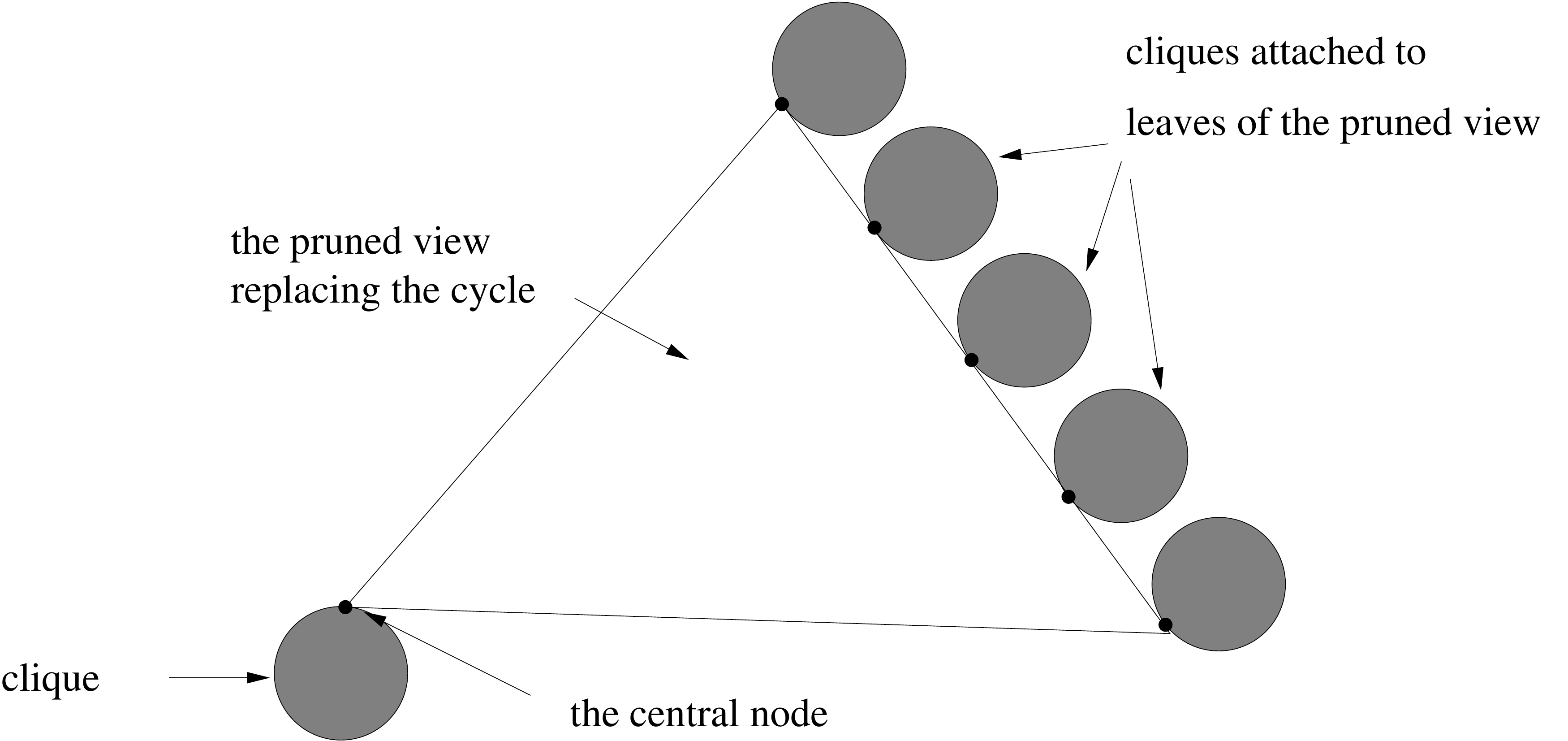}\\
    {\footnotesize ($b$) The graph $T(L)$.} 
  \end{minipage}
\end{center}
 \caption{A $z$-lock and its transformation. \label{fig:fig6}}
\end{figure}

Next define a subgraph $X$. Let $n$ be the maximum size of all previously defined graphs. Let $g_1,\dots, g_{2n}$ be nodes yet unused in the construction,  forming a chain. Let {$y$} be the largest degree of $T(L_3)$. 
{For all $1\leq f\leq 2n$, attach a clique of size $y+4f$} to node  $g_f$ by identifying one node of this clique with it. 
All attached cliques are pairwise disjoint and consist of nodes not used before. 
Assign all port numbers arbitrarily.
This completes the construction of the subgraph $X$.
Attaching larger and larger cliques to different nodes will guarantee properties 8 and 13 for graphs of $\cT_{k+1}$.

Finally, the subgraph $N$ of the graph $Q$ under construction is defined as follows. Let $a$ be {the node with the highest degree in $T(L_2)$}, and let $b$ be {the
 node with the highest degree in $T(L_3)$}. Let $c'$ be the node in $M'$ and let $b'$ be the node in $L_2$ such that the edge $\{c',b'\}$ attaches $M'$ to $L_2$ in $H_{2i}$. 
Let $c''$ be the node in $M''$ and let $b''$ be the node in $L_3$ such that the edge $\{c'',b''\}$ attaches $M''$ to $L_3$ in $H_{2i+1}$.
Note that the node $b'$ from $L_2$ remains in $T(L_2)$ and the node $b''$ remains in $T(L_3)$. {In fact, by construction, $b'$ is the central node of $T(L_2)$, and $b''$ is the central node of $T(L_3)$.}
Attach $M'$ to $T(L_2)$ by edge $\{c',b'\}$ (keeping the port numbers), 
attach $a$ to $g_1$ by a new edge with smallest port numbers not yet used at each endpoint, attach $g_{2n}$ to $b$ by a new edge with smallest port numbers not yet used at each endpoint, and attach $T(L_3)$ to
$M''$  by edge $\{b'',c''\}$ (keeping the port numbers). The resulting graph is $N$. In the graph $Q$ of the form $L_1 \ast N \ast  L_4$, the subgraph $N$ is attached to $L_1$
by the edge that attached $M'$ to $L_1$ in $H_{2i}$, and $N$ is attached to $L_4$ by the edge that attached $M''$ to $L_4$ in $H_{2i+1}$. This concludes the description of  the graph $Q$. {Fig.~\ref{fig:fig7} illustrates the merge operation leading to such a graph.}
We have given its unambiguous representation in the form required in property 1. This completes the construction of the sequence  $\cS_{k+1}$, from which the subsequence $\cT_{k+1}$ will be extracted.

\begin{figure}[!htbp]
\begin{center}
  \begin{minipage}[t]{0.5\linewidth}
    \centering
	\includegraphics[width=0.8\textwidth]{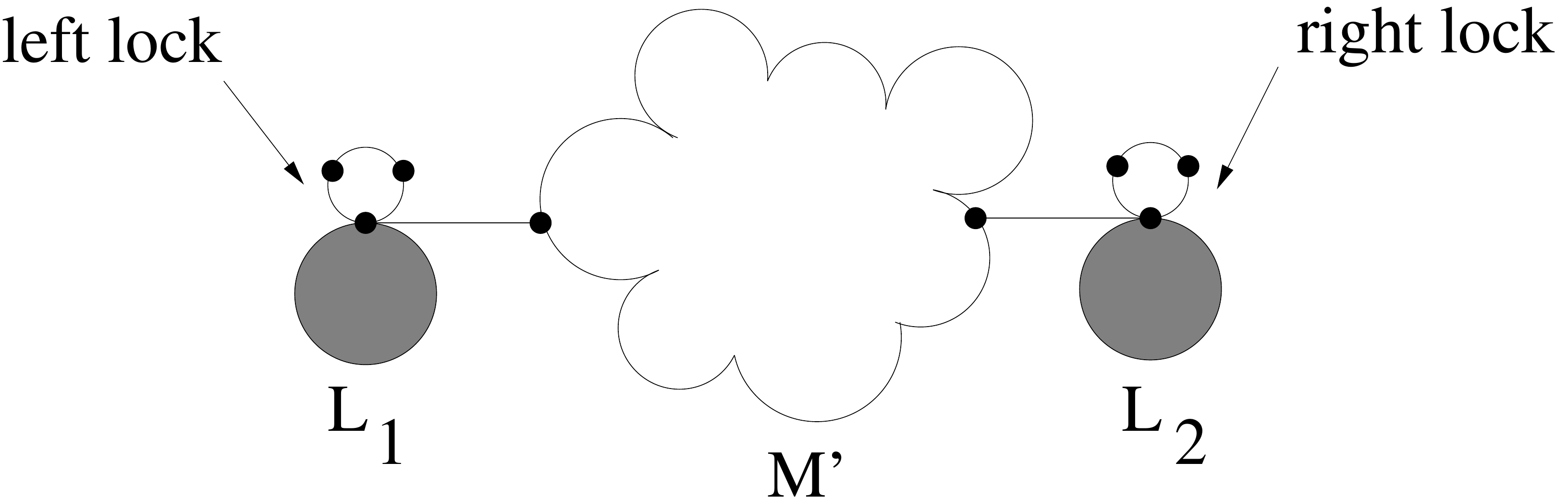}\\
    {\footnotesize ($a$) the graph $H_{2i}$.}
  \end{minipage}%
  \begin{minipage}[t]{0.5\linewidth}
    \centering
	\includegraphics[width=0.8\textwidth]{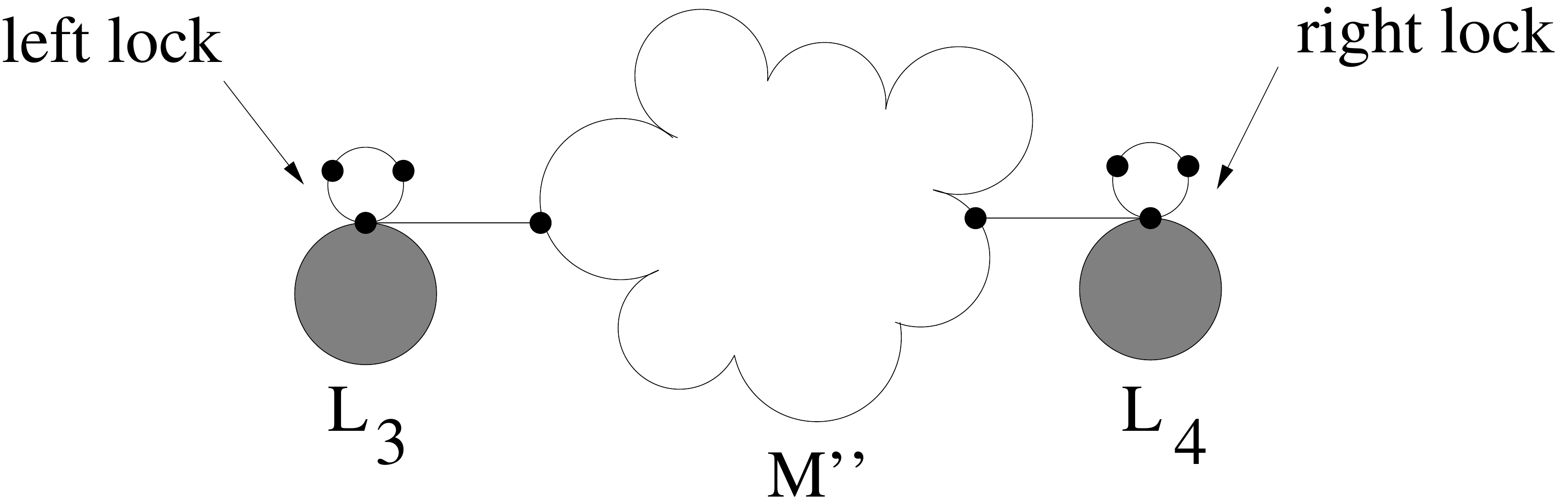}\\
    {\footnotesize ($b$) the graph $H_{2i+1}$.} 
  \end{minipage}
  \bigskip \\
  \begin{minipage}[t]{1\linewidth}
    \centering
	\includegraphics[width=1\textwidth]{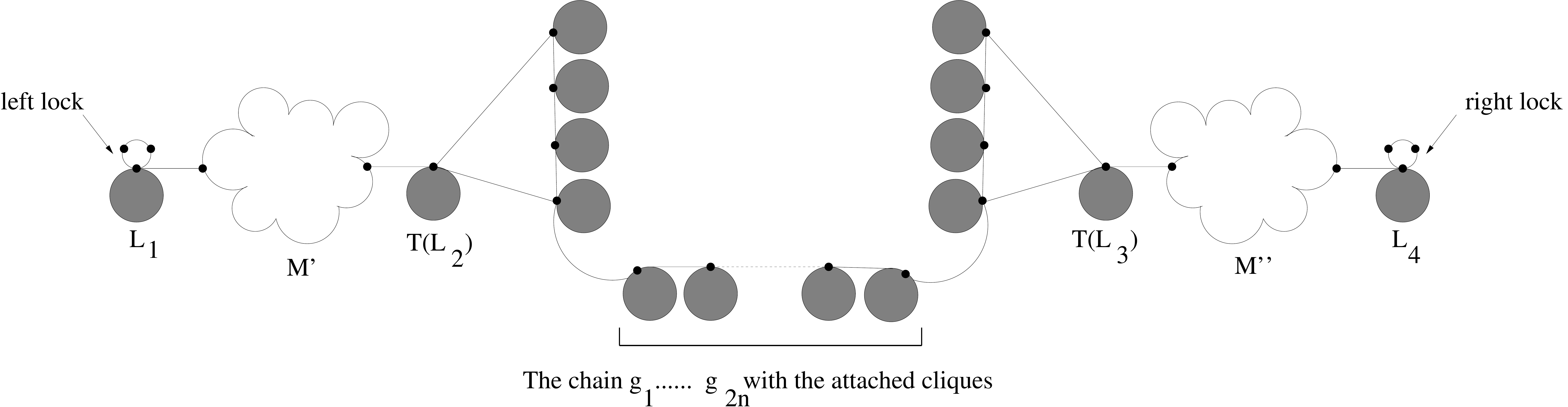}\\
    {\footnotesize ($c$) the result of the merge operation of $H_{2i}$ and $H_{2i+1}$.}
  \end{minipage}
\end{center}
 \caption{Illustration of the merge operation of $H_{2i}$ and $H_{2i+1}$.\label{fig:fig7}}
\end{figure}

Before defining the subsequence $\cT_{k+1}$ of $\cS_{k+1}$, we prove the following four claims. The first claim is implied by property 3 of graphs in $\cT_k$, that holds by the inductive assumption. Since graphs in $\cT_k$  do not have nodes of degree one, all branches of a pruned view of any node can be extended indefinitely.
More precisely, we have:

\begin{claim}\label{stretch}
For any graph $G$ in $\cT_k$,  any node $u$ of $G$, and any non-negative integer $\ell$, all leaves in $\cP\cV(u, P, \ell)$, are exactly at distance $\ell$
(in $\cP\cV(u, P, \ell)$) from the root
of $\cP\cV(u, P, \ell)$, provided that the size of  $P$ is strictly smaller than the degree of $u$ in $G$. 
\end{claim}

The next claim states property 1 for graphs in  $\cS_{k+1}$.  It follows from the construction of this class and from property 2 for the sequence $\cT_k$. 

\begin{claim}\label{prop1-ind}
Any graph $Q$ from $\cS_{k+1}$ can be unambiguously represented in the form $L_1\ast M \ast L_2$, where 
$L_i$ is a $z_i$-lock, for $i=1,2$, where $z_1 < z_2$. 
\end{claim}

The next claim states property 8 for graphs in $\cS_{k+1}$. 

\begin{claim}\label{prop8}\label{index-i} 
For any graph $Q$ from $\cS_{k+1}$, the election index of $Q$ is at most $B(k+1,c)$.
\end{claim}

In order to prove the claim, it is enough to show that for any distinct nodes $u$ and $v$ from $Q$, we have $\cB^{B(k+1,c)}(u)) \neq \cB^{B(k+1,c)}(v))$.
Let $Q$ be the result of the merge operation of graphs $H_{2i}$ and $H_{2i+1}$ from $\cT_k$. Hence $Q$ is of the form
$L_1*M'*T(L_2)*X*T(L_3)*M''*L_4$, where $L_1$ is the left lock of $H_{2i}$, $M'$ is the central part of $H_{2i}$, $M''$ is the central part of $H_{2i+1}$, and   $L_4$ is  the right lock of $H_{2i+1}$. Let $Y$ be the subgraph of $Q$ {of} the form $T(L_2)*X*T(L_3)$.
Let $Z'$ be the subgraph of $Q$ of the form $L_1*M'$, and let $Z''$ be the subgraph of $Q$ of the form $M''*L_4$.
Let $P(T(L_i))$, for $i=2,3$, be the set of nodes of the subclique of $T(L_i)$ attached to its central node, excluding the central node itself. 
Let $W$ be the set of all nodes of $Q$ at which cliques are attached during the merge operation of graphs $H_{2i}$ and $H_{2i+1}$.
By construction, each of the nodes of $W$ has a unique degree with respect to all graphs from $\cS_{k+1}$ and in particular,  a unique degree in $Q$.
By Claim \ref{stretch}, all nodes from $Y$ are at distance at most $B(k+1,c)$ from some node of $W$.
Consider three cases.

\noindent
Case 1. Both nodes $u$ and $v$ are from $Y$, outside of $P(T(L_2))\cup P(T(L_3))$.

In each of $\cB^{B(k+1,c)}(u)$ and $\cB^{B(k+1,c)}(v)$, there exists a node with a unique degree in $Q$.
If there is a node in one of these views, such that a node of the same degree does not appear in the other view, then $\cB^{B(k+1,c)}(u) \neq \cB^{B(k+1,c)}(v)$.
Otherwise, there is a node $w$ with a unique degree in $Q$ that appears in both views. Since {the two} sequences of port numbers corresponding to paths from 
$u$ to $w$ and from $v$ to $w$ must be {different}, this implies $\cB^{B(k+1,c)}(u) \neq \cB^{B(k+1,c)}(v)$.

\noindent
Case 2. One of the nodes $u$ or $v$ is in $Y$, outside of $P(T(L_2))\cup P(T(L_3))$, and the other is either in $Z'$, or $Z''$ or in $P(T(L_2))\cup P(T(L_3))$.

Without loss of generality, assume that $u$ is in $Y$, outside of $P(T(L_2))\cup P(T(L_3))$, and $v$ is either in $Z'$, or $Z''$ or in $P(T(L_2))\cup P(T(L_3))$.
In view of the Claim \ref{stretch}, the central nodes of $T(L_2)$ and $T(L_3)$ are at distance at least $B(k+1,c)$ from all nodes in $W$. By construction, the node
$v$ is at distance at least $B(k+1,c)+1$ from all nodes in $W$. Hence, there is a node in $\cB^{B(k+1,c)}(u)$, such that no node of the same degree appears in
$\cB^{B(k+1,c)}(v)$. Hence $\cB^{B(k+1,c)}(u) \neq \cB^{B(k+1,c)}(v)$.

\noindent
Case 3. Each of $u$ and $v$ is either in $Z'$, or $Z''$ or in $P(T(L_2))\cup P(T(L_3))$. 

Consider the subcase where $u$ is in $Z'$ or in $P(T(L_1))$ and $v$ is in $Z''$ or in $P(T(L_2))$. 
By Claim \ref{art}, $\cB^{B(k+1,c)}(u)$ is the same in $H_{2i}$ and in $Q$. Hence $\cB^{B(k,c)}(u)$ is the same in $H_{2i}$ and in $Q$.
Similarly, $\cB^{B(k,c)}(v)$ is the same in $H_{2i+1}$ and in $Q$. By property 13 for the sequence $\cT_k$, we have  $\cB^{B(k,c)}(u) \neq \cB^{B(k,c)}(v)$, 
hence $\cB^{B(k+1,c)}(u) \neq \cB^{B(k+1,c)}(v)$.
The other subcase is when both $u$ and $v$ are either in $Z'$ or in $P(T(L_1))$, or they are both in $Z''$ or in $P(T(L_2))$.
The argument in this subcase is similar as above.
This concludes the proof of the claim.

The last claim states property 13 for $\cS_{k+1}$. It follows from this property for $\cT_k$, using arguments similar to those used to prove Claim \ref{index-i}.

\begin{claim}\label{prop13}
For any graphs $G' \neq G''$ from $\cS_{k+1}$, any node $u'$ from $G'$ and any node $u''$ from $G''$, the augmented truncated views
$\cB^{B(k+1,c)}(u')$ and  $\cB^{B(k+1,c)}(u'')$ are different. 
\end{claim}

Finally we define a subsequence $\cT_{k+1}$ of the sequence $\cS_{k+1}$ with the following two properties: 
the size of $\cT_{k+1}$ is $(2\alpha)^{\alpha-k-1}$, and all graphs in
$\cT_{k+1}$ receive the same advice. We can assume that such a subsequence exists. Indeed, first observe that $\cT_k$ has size $2^{\alpha-k} \cdot \alpha ^{\alpha -k}$,
 by property 12 for $\cT_k$, and that the size of $\cS_{k+1}$ is half the size of $\cT_k$ by construction. Hence, the size of  $\cS_{k+1}$ is $2^{\alpha-k-1} \cdot \alpha ^{\alpha -k}$. If a subsequence $\cT_{k+1}$ of the sequence $\cS_{k+1}$ with the above two properties could not be extracted from $\cS_{k+1}$, this would mean
 that the number of distinct pieces of advice received by graphs from $\cS_{k+1}$ would be at least $\frac{|\cS_{k+1}|}{(2\alpha)^{\alpha-k-1}}=\alpha$. This in turn would imply, in view of Claim~\ref{index-i},
 that one of the graphs with election index at most $B(k+1,c)$ would receive advice of size $\Omega(\log \alpha)$
 which would prove part 1 of our theorem (because $k+1 \leq k^*$ and thus $B(k+1,c) \leq \alpha$).
 
 This concludes the construction of the sequence $\cT_{k+1}$. %and hence also the  inductive construction of sequences $\cT_0,\dots, \cT_k$, for $k \leq k^*$.
 It remains to prove that this sequence has all the properties 1 --13.
 
 Property 1 holds for $\cT_{k+1}$ because it holds for $\cS_{k+1}$  by Claim \ref{prop1-ind}.
 In order to prove property 2 for $\cT_{k+1}$, it is enough to prove it for all graphs in $\cS_{k+1}$. To do this, consider any graphs $Q_i$ and $Q_j$ from $\cS_{k+1}$,
 such that $i<j$. By construction, the graph $Q_i$ is the result of the merge of graphs $H_{2i}$ and $H_{2i+1}$ from $\cT_k$, and the
 graph $Q_j$ is the result of the merge of graphs $H_{2j}$ and $H_{2j+1}$ from $\cT_k$. 
 The right lock
 of $Q_i$ is the right lock $L_i$  of $H_{2i+1}$,  and the left lock of $Q_j$ is the left lock $L_j$ of $H_{2j}$.  Let $L_i$ be a $z_i$-lock, and let $L_j$ be a $z_j$-lock.
 By property 2 for $\cT_k$, we have $z_i <z_j$ because $2j>2i+1$. This proves property 2 for  $\cS_{k+1}$ and hence for $\cT_{k+1}$. In order to prove property 3
 for $\cT_{k+1}$, take a graph $Q$ from this sequence. It is a result of the merge operation of two graphs from $\cT_k$. By property 3 for $\cT_k$, these graphs have
 no nodes of degree 1. The merge operation does not create such nodes. Hence the graph $Q$ does not have nodes of degree 1. Properties 6 and 12 for $\cT_{k+1}$ hold by the definition of this sequence. 
  Properties 5 and 11 for $\cT_{k+1}$ follow from these properties for $\cT_k$ and from the fact that the result of the merge operation of two graphs is a graph of diameter larger than that of each of them. Property 8 for $\cT_{k+1}$ follows from Claim \ref{index-i}, and property 13 for $\cT_{k+1}$ follows from Claim \ref{prop13}.
  
  Properties 4 and 10 for $\cT_{k+1}$ will be proved together as follows. For any graph $Q$ from $\cT_{k+1}$, we first compute the distance between the left principal
  node of $Q$ and the right principal node of $Q$. This distance turns out to be the same  for all graphs in $\cT_{k+1}$. Then we prove that the distance between any 
  two nodes of any graph from $\cT_{k+1}$ is at most this value.
  
  Let $Q$ be the result of  the merge operation of graphs $H'$ and $H''$ from $\cT_k$.
The graph $H'$ is of the form $L_1 \ast M' \ast  L_2$, where $L_1$ is its left lock and $L_2$ is its right lock,  and the graph $H''$ is of the form $L_3 \ast M'' \ast  L_4$, where $L_3$ is its left lock and $L_4$ is its right lock. Let $u$ be the left principal node of $H'$, and let $u'$ be the right principal node of $H'$.
Let $v'$ be the left principal node of $H''$, and let $v$ be the right principal node of $H''$. Note that
$u$ is the left principal node of $Q$, and $v$ is the right principal node of $Q$.
By construction, the graph $Q$ can be represented in the form
$L_1*M'*T(L_2)*X*T(L_3)*M''*L_4$. %Let $u$ be the left principal node of $Q$ and let $v$ be the right principal node of $Q$. 
Consider nodes $a$ in $L_1$, $b$ and $c$ in $M'$, $d$ and $e$ in $T(L_2)$, $f$ and $f'$ in $X$, $e'$ and $d'$ in $T(L_3)$,
$c'$ and $b'$ in $M''$, and $a'$ in $L_4$, such that the edge $\{a,b\}$ joins $L_1$ to $M'$, the edge $\{c,d\}$ joins $M'$ to $T(L_2)$, the edge $\{e,f\}$ 
joins $T(L_2)$ to $X$ ,
the edge $\{f',e'\}$ joins $X$ to $T(L_3)$, the edge $\{d',c'\}$ joins $T(L_3)$ to $M''$, and the edge $\{b',a'\}$ joins $M''$ to $L_4$. {A representation of graph $Q$ with the above notation is given in Fig.~\ref{fig:f8}}.

\begin{figure}[httb!]
	\begin{center}
	\includegraphics[width=1\textwidth]{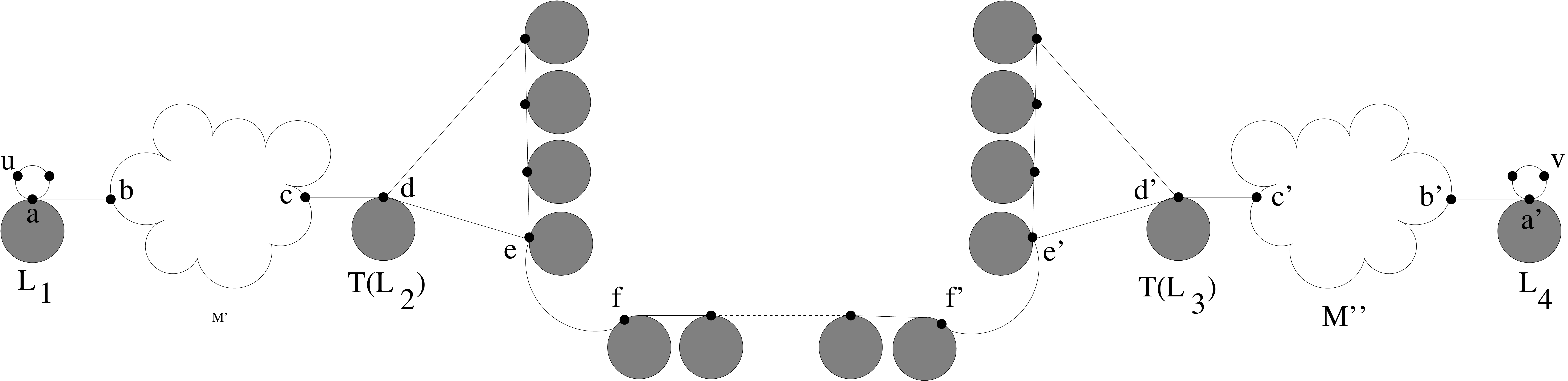}
	\caption{A representation of graph $Q$ with the notations of components used in the proof of Properties 4 and 10.}
	\label{fig:f8}
	\end{center}
\end{figure}

Let $\delta_G(x,y)$ denote the distance between nodes $x$ and $y$ in the graph $G$. We first compute $\delta_Q(u,v)$. 
By property 4 for $\cT_k$, all graphs in this sequence have the same diameter $D$.
By property 10 for $\cT_k$, we have
$\delta_{H'}(u,u')=D$, hence $\delta_Q(u,c)=D-2$, by construction. Similarly, $\delta_{H''}(v,v')=D$, hence $\delta_Q(v,c')=D-2$.
By construction and by Claim \ref{stretch}, we have $\delta_Q(d,e)=B(k+1,c)$. Similarly, $\delta_Q(d',e')=B(k+1,c)$. By construction, $\delta_Q(f,f')=2n-1$,
because $2n-1$ is the length of the chain used in the merge procedure to construct the graph $X$. Hence $\delta_Q(u,v)=2D+2B(k+1,c)+2n-1$.

We now show, again for any graph $Q$ from $\cT_k$, that the distance between any 
  two nodes $x$ and $y$ of $Q$ is at most this value. Consider six cases.
  
  \noindent
  Case 1. $x$ and $y$ are in $L_1$ or $M'$ (respectively in $L_4$ or $M''$).
  
  We give the argument for the first situation. The second one is symmetric. 
  Since $x$ and $y$ are in $L_1$ or $M'$, they are both in $H'$ whose diameter is $D$. Hence $\delta_{H'}(x,y)\leq D$ and hence $\delta_{Q}(x,y)\leq D<\delta_Q(u,v)$.
  
  \noindent
  Case 2. $x$ and $y$ are in $T(L_2)$ (respectively in $T(L_3)$).
  
    We give the argument for the first situation. The second one is symmetric. 
    By construction and by Claim \ref{stretch}, the diameter of $T(L_2)$ is $2B(k+1,c)+2<\delta_Q(u,v)$.
    
     \noindent
  Case 3. $x$ and $y$ are in $X$.
  
  By construction, the diameter of $X$ is $2n+1<\delta_Q(u,v)$. 
  
     \noindent
  Case 4. $x$ is in $T(L_2)$,  and $y$ is in $X$ or in $T(L_3)$ (resp. $x$  is in $T(L_2)$ or in $X$, and $y$ is  in $T(L_3)$).
  
   We give the argument for the first situation. The second one is symmetric. 
   By construction and by Claim \ref{stretch},  the diameter of $X$ is $2n+1$, and the diameter of $T(L_2)$ and $T(L_3)$ is $2B(k+1,c)+2$,
   hence  $\delta_{Q}(x,y)\leq 4B(k+1,c)+2n+6$. On the other hand, we have $\delta_Q(u,v)=2D+2B(k+1,c)+2n-1$. By property 11 for $\cT_k$ we have 
  $2D+2B(k+1,c)+2n-1 \geq 2(A(\alpha, c)+4)+2B(k+1,c)+2n-1$. Since $A(\alpha ,c )\geq B(k+1,c)$, we have $2(A(\alpha, c)+4)+2B(k+1,c)+2n-1 \geq 4B(k+1,c)+2n+7$.
  Hence  $\delta_{Q}(x,y) <  \delta_Q(u,v)$.
  
    \noindent
  Case 5. 
  $x$ is in $L_1$ or $M'$, and $y$ is either in $T(L_2)$ or in $X$ or in $T(L_3)$ (resp. $x$ is in $L_4$ or $M''$, and $y$ is either in $T(L_2)$ or in $X$ or in $T(L_3)$).
  
We give the argument for the first situation. The second one is symmetric. 
First observe that every node in $L_1$ or in $M'$ is at distance at most $D-2$ from $c$. Otherwise, it would be at distance at least $D+1$ in $H'$ from $u'$, which would contradict the fact that $D$ is the diameter of $H'$. Consider three possibilities. If $y$ is in $T(L_2)$ then, in view of the above observation and of the fact that $d$
is at distance at most $B(k+1,c)+1$ from every node in $T(L_2)$, we have $\delta_Q(x,y) \leq D+B(k+1,c) <  \delta_Q(u,v)$. If $y$ is in $X$ then
 $\delta_Q(x,y) \leq D+B(k+1,c)+2n+1<  \delta_Q(u,v)$. Finally, suppose that $y$ is in $T(L_3)$. Since the distance between $e'$ and any node in $T(L_3)$ is
 at most $2B(k+1,c)+1$, and $\delta_Q(f,f')=2n-1$, in view of the above observation we have $\delta_Q(x,y) \leq D+3B(k+1,c)+2n+1$. We have $A(\alpha, c) \geq B(k+1, c)$, and, by property 11 for $\cT_k$, we have $D\geq A(\alpha, c)+4$. Hence $ D+3B(k+1,c)+2n+1 \leq 2D+2B(k+1,c)+2n-3$. Hence {$\delta_{Q}(x,y)\leq 2D+2B(k+1,c)+2n-3<\delta_Q(u,v)$}.
 
   \noindent
  Case 6. 
  $x$ is in $L_1$ or $M'$, and $y$ is in $M''$ or in $L_4$.
  
  As noticed in the analysis of Case 5, $\delta_Q(x,c)\leq D-2$. Similarly, $\delta_Q(y,c')\leq D-2$. Moreover, $\delta_Q(d,d')=2B(k+1,c)+2n+1$.
  Hence $\delta_{Q}(x,y)\leq 2D+2B(k+1,c)+2n-1= \delta_Q(u,v)$.
  
  This concludes the proof of properties 4 and 10 for $\cT_{k+1}$.

  We now prove property 9 for $i\leq k$ and $j=k+1$. First suppose that $i= k$ and $j=k+1$. Consider a graph $Q$ from $\cT_{k+1}$ that
   results from  the merge operation of graphs $H'$ and $H''$ from $\cT_k$.
  We keep the notation used in the analysis of properties 4 and 10.
 By construction and in view of Claim \ref{art}, we have the augmented truncated view $\cB^{B(k+1,c)-1}(d)$ in graph $Q$ is equal 
 to the augmented truncated view $\cB^{B(k+1,c)-1}(d)$ in graph $H'$. Similarly, the augmented truncated view $\cB^{B(k+1,c)-1}(d')$ in graph $Q$ is equal 
 to the augmented truncated view $\cB^{B(k+1,c)-1}(d')$ in graph $H''$. By construction, $\delta_Q(u,d)=D-1$ and $\delta_Q(v,d')=D-1$.
 By Claim \ref{art}, the augmented truncated view $\cB^{D+B(k+1,c)-2}(u)$ in $Q$ is equal to the augmented truncated view $\cB^{D+B(k+1,c)-2}(u)$ in $H'$.
 Likewise,  the augmented truncated view $\cB^{D+B(k+1,c)-2}(v)$ in $Q$  is equal to the augmented truncated view $\cB^{D+B(k+1,c)-2}(v)$ in $H''$.
 By definition, $D+B(k+1,c)-2\geq D+A(B(k,c),c)$. This concludes the proof of property 9 when $i= k$ and $j=k+1$. 
 
 Next suppose that $i< k$ and $j=k+1$. By property 9 for indices $i$ and $k$ (holding by the inductive hypothesis), there exist two graphs
 $J'$ and $J''$ in $\cT_i$, such that the augmented truncated view
$\cB^{D'+A(B(i,c),c)}(u)$ in $H'$ is equal to the augmented truncated view $\cB^{D'+A(B(i,c),c)}(w')$ in $J'$, and
the augmented truncated view $\cB^{D'+A(B(i,c),c)}(v)$ in $H''$ is equal to the augmented truncated view $\cB^{D'+A(B(i,c),c)}(w'')$ in $J''$, where $D'$ is the diameter of graphs in $\cT_i$, $w'$ is the left principal node of $J'$,  and
$w''$ is the right principal node of $J''$.

{However, as proven above, the augmented truncated view $\cB^{D+A(B(k,c),c)}(u)$ in $Q$ is equal to the augmented truncated view $\cB^{D+A(B(k,c),c)}(u)$ in $H'$,  and  the augmented truncated view $\cB^{D+A(B(k,c),c)}(v)$ in $Q$  is equal to the augmented truncated view $\cB^{D+A(B(k,c),c)}(v)$ in $H''$.}
 
{Hence, since $D>D'$ (by property 5 for $j=k$) and $A(B(k,c),c)>  A(B(i,c),c)$ for all $i<k$ (by definition), the augmented truncated view
$\cB^{D'+A(B(i,c),c)}(u)$ in $Q$ is equal to the augmented truncated view $\cB^{D'+A(B(i,c),c)}(w')$ in $J'$, and
the augmented truncated view $\cB^{D'+A(B(i,c),c)}(v)$ in $Q$ is equal to the augmented truncated view $\cB^{D'+A(B(i,c),c)}(w'')$ in $J''$ .  As a result, the property also holds when $i< k$ and $j=k+1$. This concludes the proof of property 9.}

The last property to be proved for $\cT_{k+1}$ is property 7. By the inductive assumption, it is enough to prove it for $i\leq k$ and $j=k+1$.
Suppose, by contradiction, that graphs in the sequence $\cT_i$, for some $i\leq k$, receive the same advice as graphs in $\cT_{k+1}$. Let $Q$ be a graph in $\cT_{k+1}$. By property 8, all graphs from $\cT_i$ have election index at most $B(i,c)$. Hence, for any graph $H$ from $\cT_i$, and any node $z$ of $H$, $z$ elects a leader
after time at most $D+A(B(i,c),c)$, where $D$ is the diameter of graphs in $\cT_i$. (Recall that in part 1 of the theorem, election must be performed after time at most  $D+\phi +c$, where $\phi$ is the election index.) The node $z$ must output a sequence of port numbers of length at most $2(m-1)$ {(corresponding to a path of length at most $m-1$)}, where $m$ is the maximum size of a graph from $\cT_i$. By property 9, 
there exist graphs $H' \neq  H''$ from $\cT_i$, such that 
the augmented truncated view
$\cB^{D+A(B(i,c),c)}(u)$ in $Q$ is equal to the augmented truncated view $\cB^{D+A(B(i,c),c)}(u')$ in $H'$, and
the augmented truncated view $\cB^{D+A(B(i,c),c)}(v)$ in $Q$ is equal to the augmented truncated view $\cB^{D+A(B(i,c),c)}(v'')$ in $H''$, where $u$ is the left principal node of $Q$,  
$u'$ is the left principal node of $H'$, $v$ is the right principal node of $Q$, and  $v''$ is the right principal node of $H''$. Hence, in view of our assumption that 
graphs in the sequence $\cT_i$ receive the same advice as graphs in $\cT_{k+1}$, nodes $u$ and $v$ must also output sequences of port numbers 
of length at most $2(m-1)$ after time at most $D+A(B(i,c),c)$. By construction, the distance between $u$ and $v$ in $Q$ is at least $2m-1$. Hence the sequences
of port numbers outputted by $u$ and $v$ must correspond to paths in $Q$ whose other extremities are different. It follows that $u$ and $v$ elect different leaders in $Q$, which gives a contradiction. 

This concludes the inductive proof of all properties 1-13 for $\cT_{k+1}$ and hence concludes the proof that these properties hold for all $\cT_k$, where $k\leq k^*$.
This, in turn, finishes the proof of part 1 of the theorem.

It remains to show how our proof of part 1 has to be changed, in order to obtain proofs of parts 2, 3, and 4. Thanks to our parametrization using functions $A$, $B$ and $R$, the changes are very small. Indeed, it suffices to change the definitions of these functions and all (parametrized) constructions and arguments from the proof of part 1 remain unchanged.
We now give the definitions of functions $A$, $B$ and $R$ for each of parts 2, 3, and 4 separately.

To prove part 2, we define $A(x,c)=cx$, $B(x,c)=(c+2)^x$ and $R(x)=\log x$. Note that the election time is then $D+c\phi$, as assumed in part 2, and the lower bound on the size of advice becomes $\Omega(\log \log \alpha)$, as desired. Indeed, in part 2, the number $k^*$ of different pieces of advice is in $\Omega(\log \alpha)$,
since by definition we have $(c+2)^{k^*} \leq \alpha <(c+2)^{k^*+1}$. 

To prove part 3, we define $A(x,c)=x^c$, $B(x,c)=2^{(c^{3x})-c}$ and $R(x)=\log\log x$. Note that the election time is then $D+\phi ^c$, as assumed in part 3, and the lower bound on the size of advice becomes $\Omega(\log\log \log \alpha)$, as desired. Indeed, in part 3, the number $k^*$ of different pieces of advice is in $\Omega(\log\log \alpha)$,
since by definition we have $2^{c^{3k^*}-c} \leq \alpha <2^{c^{3(k^*+1)}-c}$.  

To prove part 4,  recall the notation $^ic$, defined by induction as follows: $^0c=1$ and $^{i+1}c=c^{^ic}$. 
We define $A(x,c)=c^x$, $B(x,c)= {^{2x}c}$,  and $R(x)=\log^* x$.
Note that the election time is then $D+c^{\phi}$, as assumed in part 4, and the lower bound on the size of advice becomes $\Omega(\log( \log^* \alpha))$, as desired.
Indeed, in part 4, the number $k^*$ of different pieces of advice is in $\Omega(\log^* \alpha)$,
since by definition we have $^{2k^*}c \leq \alpha <^{2(k^*+1)}c$.

\end{proof} 

We close this section by showing that constant advice is not enough for leader election in all feasible graphs, regardless of the allocated time.

\begin{proposition}\label{constant}
There is no algorithm using advice of constant size and performing leader election in all feasible graphs.
\end{proposition}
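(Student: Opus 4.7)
The plan is to proceed by contradiction, combining a pigeonhole on the advice string with an infinite family of feasible graphs engineered so that no two of its members can share the same constant-size advice. First I would assume that $\cA$ uses advice of size at most a constant $c$, so the oracle can output only $N\le 2^{c+1}-1$ distinct advice strings; since there are infinitely many pairwise non-isomorphic feasible graphs, some specific advice string $A^{\ast}$ must be assigned by the oracle to an infinite subfamily $\cF$ of feasible graphs, and it will be enough to show that the advice-less algorithm obtained by hard-wiring $A^{\ast}$ into $\cA$ cannot correctly elect a leader on every graph of a suitably chosen infinite family.

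I would then construct an infinite family $\{G_k\}_k$ of feasible graphs parametrized by an integer $k$, where $G_k$ is built from a small fixed asymmetric feasible ``anchor'' gadget $H$, a simple path of length $k$ attached to a distinguished node of $H$, and a small fixed asymmetric feasible ``tail'' gadget $H'$ attached at the far end of the path, with port numbers arranged so that every $G_k$ is feasible. The key property of this family is that, for every fixed $d$, the path node $u_d$ at distance $d$ from $H$ has the same augmented truncated view at any depth $r$ smaller than $\min(d,k-d)$ in every $G_k$ with $k$ large enough (such a view never reaches the far gadget). By pigeonhole, two distinct graphs $G_k, G_{k'}$ of $\cF$ then receive the same advice $A^{\ast}$, and by determinism, at each such pair of twin nodes the algorithm terminates at the same round with the same output sequence of port numbers.

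The shared output at a twin pair is supposed to be a simple path from the node to the unique leader of the respective graph. Considering two twin pairs located at different relative positions---say, a node inside $H$ and a node $u_d$ sitting far from both gadgets---I would extract two independent distance constraints of the form ``the leader is at fixed distance $d_1$ from the $H$-side twin'' and ``the leader is at fixed distance $d_2$ from the path twin'', where $d_1,d_2$ depend only on $A^{\ast}$ and on the shared views. These constraints must hold simultaneously in $G_k$ and in $G_{k'}$ with the same $d_1,d_2$; but the path geometry of $G_k$ then forces a $k$-dependent equation such as $d_1\pm d_2 = $ (the offset between the two twins along the path, which is a function of $k$), which cannot hold simultaneously for two different values $k$ and $k'$. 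This is the desired contradiction.

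The main obstacle is ruling out the possibility that the algorithm ``waits out'' the symmetric phase by deferring termination at a node until its view has grown past the indistinguishable regime. The point here is that the termination round at a node is a deterministic function of that node's augmented truncated view and of the advice, so at each selected twin it attains a fixed finite value $T$; choosing $k,k'\in\cF$ large enough that the common view at the twin still persists past depth $T$ will place both executions squarely inside the indistinguishable regime and force the output collision from which the contradiction above is extracted.
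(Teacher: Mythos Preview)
There is a genuine gap in your handling of the ``main obstacle''. You correctly observe that the termination round at a node is a deterministic function of its augmented truncated view and the advice, but you then claim this round ``attains a fixed finite value $T$'' at each selected twin and propose to choose $k,k'\in\cF$ large enough relative to this $T$. This is circular: the termination time at the twin in $G_k$ may itself depend on $k$. Concretely, with your anchor--path--tail construction, nothing prevents the algorithm (with advice $A^\ast$) from waiting at a node $v\in H$ until its view reaches the far gadget $H'$; in that case the termination round at $v$ in $G_k$ is at least $k$ minus a constant, the views at termination differ for different $k$, and the outputs may legitimately differ as well. In this regime your twin argument simply does not apply, and the family $\{G_k\}$ produces no contradiction. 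The same obstruction hits the mid-path twin $u_d$: if the algorithm defers termination until a gadget is visible, the depth-$T$ view again depends on $k$. Put differently, your argument implicitly assumes that the algorithm, run with advice $A^\ast$, terminates on the ``limit'' graph $H$ plus an infinite ray; but that graph is not in the domain of the algorithm, and nothing you have said rules out non-termination there.

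The paper's proof avoids this circularity by reversing the order of quantifiers. It first fixes one representative graph $H_j$ for each of the finitely many advice strings, records the maximum running time $T$ and maximum size $N$ over these finitely many graphs, and only \emph{then} builds a single feasible graph $G$ whose stretching factor $\gamma=4(N+T)$ is chosen after $T$ and $N$ are known. This $G$ contains, for every $j$, two far-apart nodes whose depth-$T$ views coincide with that of a fixed node of $H_j$. Whichever advice $G$ receives, it matches some $H_{j_0}$; the two embedded twins must then halt by round $T$ and output the same path of length less than $N$, yet they sit at distance at least $2N$ in $G$, which is the contradiction. Your outline can be repaired along these lines---fix a witness per advice string, read off a finite time and size bound, and only then build the confounding instance---but as written it does not close the loop.
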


\begin{proof}
We define a family $\cH$ of graphs, called {\em hairy rings}, for which we will prove that no algorithm with advice of constant size performs correct leader election for all graphs in $\cH$. Let $R_n$ be the ring of size $n\geq 3$, with port numbers 0,1 at each node, in clockwise order. Let $S_k$, for any integer $k\geq 2$, be the $(k+1)$-node tree with $k$ leaves, called the $k$-star. The only node of degree larger than
1 of the $k$-star is called its {\em central node}. For $k=1$, $S_k$ is defined as the two-node graph with  the central node designated arbitrarily, and for $k=0$, $S_k$ is defined as the  one-node graph,
with the unique node being its central node. {The class $\cH$ is the set of all graphs that can be obtained in the following way.}  
{For all $n\geq3$}, attach to every node {$v$} of every ring $R_n$ some graph $S_k$, for $k \geq 0$, by identifying its central node
with the node $v$, in such a way that, for every ring, the star of maximum size attached to it is unique. Assign missing port numbers in any legal way, i.e., so that  port numbers at a node of degree $d$ are from 0 to $d-1$. 
Every graph obtained in this way is feasible because it has a unique node of maximum degree. {An example of a hairy ring is depicted in Fig.~\ref{fig:fig9}a.}

\begin{figure}[!htbp]
\begin{center}
  \begin{minipage}[t]{0.5\linewidth}
    \centering
	\includegraphics[width=0.4\textwidth]{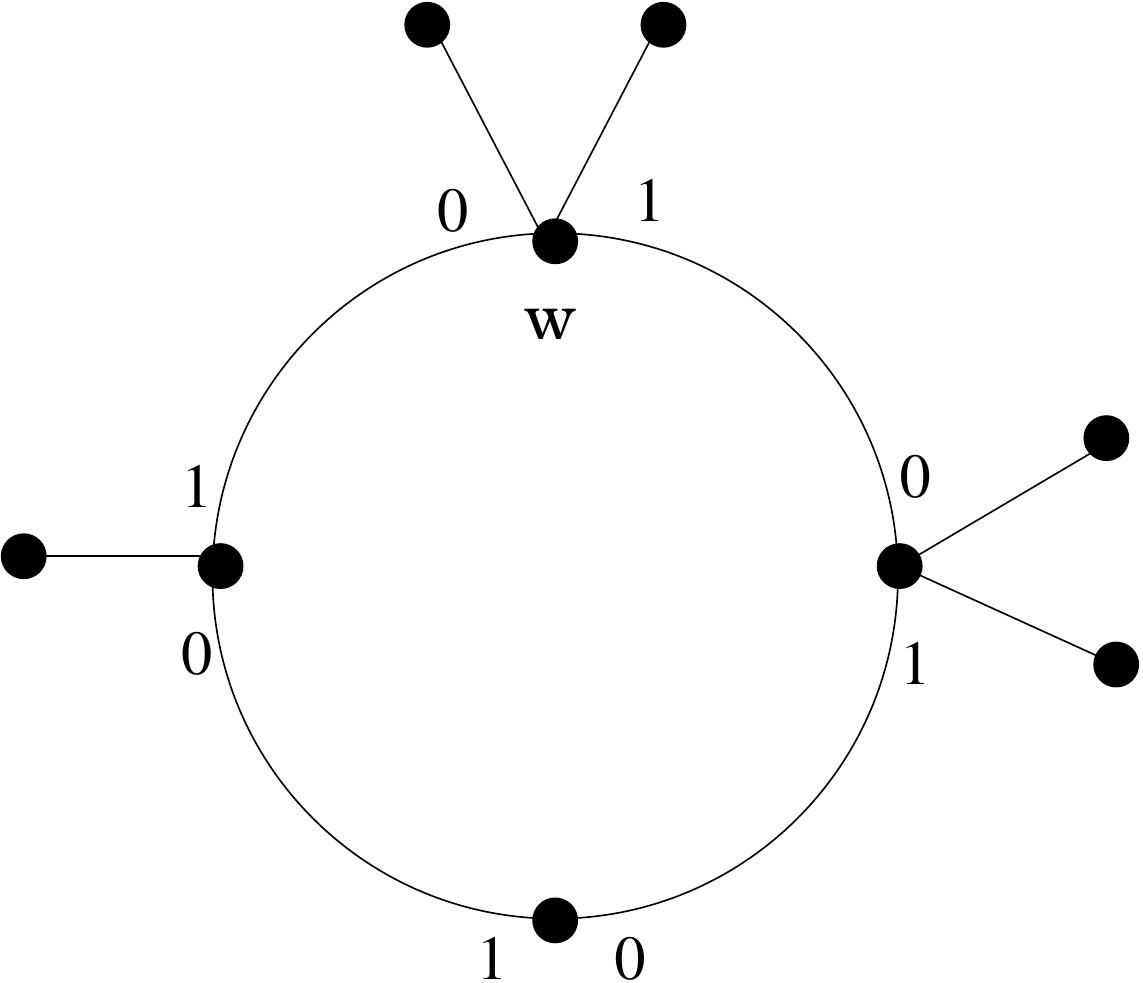}\\
    {\footnotesize ($a$) a hairy ring $H$.}
  \end{minipage}%
  \begin{minipage}[t]{0.5\linewidth}
    \centering
	\includegraphics[width=0.6\textwidth]{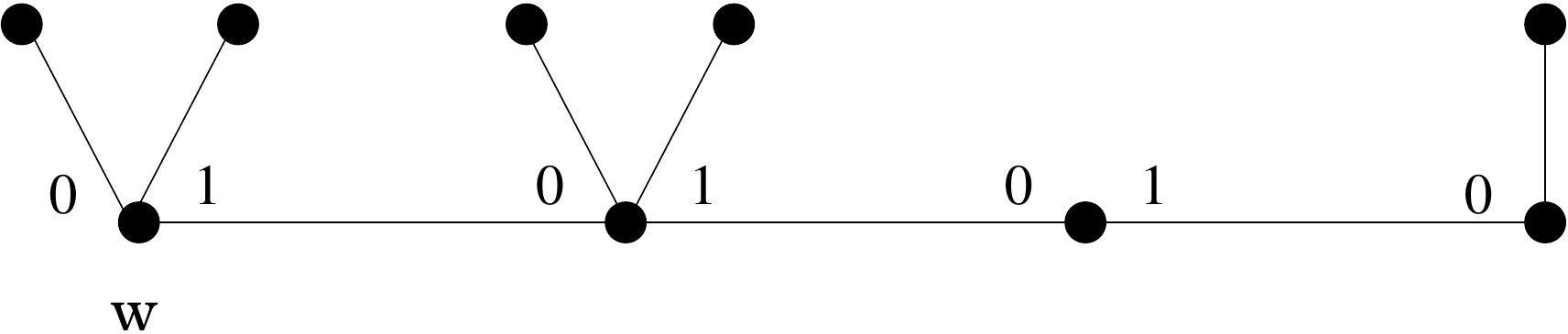}\\
    {\footnotesize ($b$) the cut of $H$ at node $w$.} 
  \end{minipage}
  \bigskip \\
  \begin{minipage}[t]{1\linewidth}
    \centering
	\includegraphics[width=0.6\textwidth]{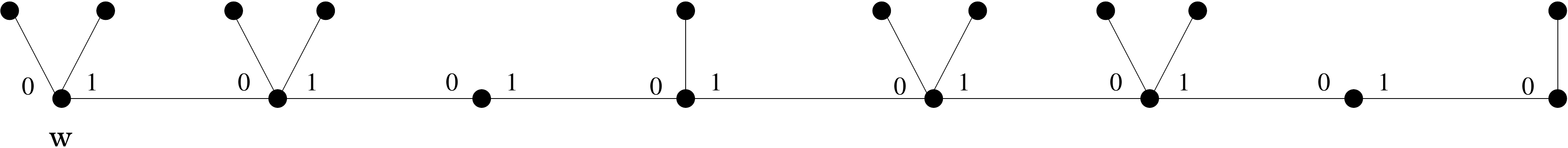}\\
    {\footnotesize ($c$) the $2$-stretch of $H$ at node $w$.}
  \end{minipage}
\end{center}
 \caption{Illustration of an hairy ring and its different transformations used in the proof of Proposition~\ref{constant}.\label{fig:fig9}}
\end{figure}

For any graph $H$ in $\cH$ we define a {\em cut} of $H$ as follows. Let $H$ be a graph resulting from a ring $R_n$ by attaching stars. Fix any node $w=w_1$ of this ring.
Let $w_1,\dots, w_n$ be nodes of this ring listed in clockwise order.
The cut of $H$ at node $w$ is the graph resulting from $H$ by removing the edge $\{w_1,w_n\}$. Node $w=w_1$ is called the first node of the cut and node $w_n$ is called the last node of the cut.
For any integer $\gamma \geq 2$, the $\gamma$-{\em stretch} of $H$ starting at node $w$  is the graph defined as follows. Take $\gamma$ pairwise disjoint isomorphic copies of the cut of $H$ at $w$.
For $1<i \leq \gamma$, attach the $i$th copy to the $(i-1)$th copy  joining the first node $a_i$ of the $i$th copy with the last node $b_{i-1}$ of the $(i-1)$th copy
by an edge with port 0 at $a_i$ and port 1 at $b_{i-1}$ . The first node of the first copy is called the first node of the $\gamma$-stretch, and the last node of the last copy is called the last node of the $\gamma$-stretch.

Suppose that there exists a leader election algorithm $\cA$ which uses advice of constant size to perform leader election in all hairy rings from the family $\cH$. Let $c$ be the smallest integer such  that a total of $c$ pieces of advice are sufficient to elect a leader in every graph from $\cH$ by algorithm $\cA$. Let $H_1,\dots,H_c$ be graphs from $\cH$ for which algorithm $\cA$ uses different pieces of advice. Let $N$ be the maximum of sizes of all graphs $H_1,\dots,H_c$, and let $T$ be the maximum execution time of $\cA$, for all graphs $H_1,\dots,H_c$. 

Let $\gamma=4(N+T)$ and let $G_j$ be the $\gamma$-stretch of $H_j$ starting at some node $u_j$ of $H_j$, for $j \leq c$. We define the graph $G$ as follows. Take pairwise disjoint isomorphic copies of graphs $G_j$, for $j \leq c$.
{For every $1<j\leq c$,} attach $G_j$ to $G_{j-1}$  joining the first node $c_i$ of the $i$th copy with the last node $d_{i-1}$ of the $(i-1)$th copy
by an edge with port 0 at $c_i$ and port 1 at $d_{i-1}$. 
Finally, take a $\gamma$-star
and join its central node by edges to the first node of $G_1$ and to the last node of $G_{\gamma}$, assigning missing port numbers in any legal way. The graph $G$ obtained in this way is in $\cH$ because
it has a unique node of maximum degree which is $\gamma+2$. {Let $n_{H_j}$ be the size of the ring that was used in the construction of $H_j$}. Let $a_j$ be the (unique) node of $G_j$ at distance {$n_{H_j}(N+T)$} from $u_j$, at the end of a simple path
all of whose ports are 0's and 1's. Let $b_j$ be the (unique) node of $G_j$ at distance {$3n_{H_j}(N+T)$} from $u_j$, at the end of a simple path
all of whose ports are 0's and 1's. Call these nodes the {\em foci} of $G_j$. Each of them corresponds to the first node of the cut serving to define $G_j$. Let $z_j$ be the 
node in $H_j$ at which this cut was done.

By definition of graphs $H_1,\dots,H_c$, the advice received by graph $G$ when algorithm $\cA$ is performed, is the same as that received by some graph $H_{j_0}$.
In $H_{j_0}$ the node $z_{j_0}$ executing algorithm $\cA$ must stop after time at most $T$. 
By construction, the augmented truncated view $\cB^T(z_{j_0})$ in $H_{j_0}$ is the same as
the augmented truncated views $\cB^T(a_{j_0})$ and $\cB^T(b_{j_0})$  in $G$. Hence nodes $a_{j_0}$ and $b_{j_0}$ executing  algorithm $\cA$
in $G$ must also stop after time at most $T$. Node $z_{j_0}$ in $H_{j_0}$ must output a sequence of port numbers of  length smaller than $2N$ because the size
of $H_{j_0}$ is at most $N$. Hence nodes $a_{j_0}$ and $b_{j_0}$ executing  algorithm $\cA$ in $G$ must also output a sequence of port numbers of  length smaller than $2N$, corresponding to simple paths in $G$ of length smaller than $N$, starting, respectively at nodes $a_{j_0}$ and $b_{j_0}$. However, the distance between  $a_{j_0}$ and $b_{j_0}$ in $G$ is at least $2N$, hence the other extremities of these simple paths must be different. It follows that the leaders elected by nodes
$a_{j_0}$ and $b_{j_0}$ executing algorithm $\cA$ in $G$ are different, and hence this algorithm is not correct for the class $\cH$.
\end{proof}

\section{Conclusion}

We established almost tight bounds on the minimum size of advice sufficient for election in minimum possible time (i.e., in time equal to the election index $\phi$) and tight bounds on this size for several large values of time.
The first big jump occurs between time $\phi$ and time $D+\phi$, where $D$ is the diameter of the graph. In the first case, the size of advice is (roughly)  linear in
the size $n$ of the graph, and in the second case it is at most logarithmic in $n$, in view of Proposition \ref{H}
and of the remark after Theorem \ref{theorem-quadruple}. The intriguing open question left by our results is how the minimum size of advice behaves in the range of election time strictly between $\phi$ and $D+\phi$, i.e., for time sufficiently large to elect if the map were known, but possibly too small for all nodes
to see the augmented truncated views at depth $\phi$ of all other nodes, and hence to realize all the differences in views. Note that, for time exactly $D+\phi$,
all nodes see all these differences, although, without any advice, they cannot realize that they see all of them: this is why some advice is needed
for time $D+\phi$. 

\pagebreak
 
%%%%%%%%%%%%%%%%%%%%%%%%%%%%%%%%%%%%%%%%%%%%%%%%%%%%%%%%%%%
\bibliographystyle{plain}

%%%%%%%%%%%%%%%%%%%%%%%%%%%%%%%%%%%%%%%%%%%%%%%%%%%%%%%%%%% 

\end{document}